\documentclass[acmsmall,screen,nonacm]{acmart}
\acmJournal{PACMPL}
\acmVolume{1}
\acmNumber{POPL}
\acmArticle{1}
\acmYear{2026}
\acmMonth{1}

\AtBeginDocument{%
  }

\setcopyright{rightsretained}
\usepackage{amsmath, amsthm, stmaryrd, mathtools, mathrsfs, wasysym}
\usepackage{mathpartir}
\usepackage{hyperref}
\providecommand{\artifactnote}{}
\usepackage{enumitem}
\usepackage[margin=false,inline=true]{fixme}
\usepackage[colorinlistoftodos]{todonotes}

\usepackage{listings}
\usepackage{thmtools, thm-restate}
\usepackage{footnote}
\usepackage{float}
\usepackage{proof}
\usepackage{color}
\usepackage{xstring}
\usepackage{graphicx}
\usepackage{tikz}
\usetikzlibrary{arrows.meta,positioning}
\usepackage{environ}
\usepackage{changepage}
\usepackage{xspace}
\usepackage{relsize}
\usepackage{xparse}
\usepackage{xargs}     
\usepackage{pifont}    
\usepackage[english]{babel}
\usepackage[autostyle, english = american]{csquotes}
\MakeOuterQuote{"}

\ExplSyntaxOn

\keys_define:nn { semrulelet }
  {
    i .tl_set:N = \l_semrulelet_i_tl,
    M .tl_set:N = \l_semrulelet_M_tl,
    V .tl_set:N = \l_semrulelet_V_tl,
    x .tl_set:N = \l_semrulelet_x_tl,
    e .tl_set:N = \l_semrulelet_e_tl,
    vzero .tl_set:N = \l_semrulelet_vzero_tl,
    U .tl_set:N = \l_semrulelet_U_tl,
    p .tl_set:N = \l_semrulelet_p_tl,
    v .tl_set:N = \l_semrulelet_v_tl,
    Uprime .tl_set:N = \l_semrulelet_Uprime_tl,
    L .tl_set:N = \l_semrulelet_L_tl,
  }
\NewDocumentCommand{\semRuleLet}{m}{
  \keys_set:nn { semrulelet } { #1 }
  \ensuremath{
  \Rule*
    { \semDerivExpr{\l_semrulelet_V_tl}{\l_semrulelet_e_tl}{\l_semrulelet_vzero_tl}{\l_semrulelet_U_tl} \\
      \semDerivDeclOne{\l_semrulelet_i_tl}{\l_semrulelet_M_tl}{\l_semrulelet_V_tl, \l_semrulelet_x_tl=\l_semrulelet_vzero_tl}{\l_semrulelet_p_tl}{\l_semrulelet_v_tl}{\l_semrulelet_Uprime_tl}{\l_semrulelet_L_tl}\\
      len~(\l_semrulelet_M_tl) = len(\l_semrulelet_L_tl) }
    { \semDerivDeclOne{\l_semrulelet_i_tl}{\l_semrulelet_M_tl}{\l_semrulelet_V_tl}{\synCompLet{\l_semrulelet_x_tl}{\l_semrulelet_e_tl}~\l_semrulelet_p_tl}{\l_semrulelet_v_tl}{\l_semrulelet_U_tl, \l_semrulelet_Uprime_tl}{\l_semrulelet_L_tl} }
  }
}

\keys_define:nn { semrulestateinit }
  {
    i .tl_set:N = \l_semrulestateinit_i_tl,
    V .tl_set:N = \l_semrulestateinit_V_tl,
    x .tl_set:N = \l_semrulestateinit_x_tl,
    e .tl_set:N = \l_semrulestateinit_e_tl,
    vzero .tl_set:N = \l_semrulestateinit_vzero_tl,
    U .tl_set:N = \l_semrulestateinit_U_tl,
    p .tl_set:N = \l_semrulestateinit_p_tl,
    v .tl_set:N = \l_semrulestateinit_v_tl,
    Uprime .tl_set:N = \l_semrulestateinit_Uprime_tl,
    L .tl_set:N = \l_semrulestateinit_L_tl,
  }
\NewDocumentCommand{\semRuleStateInit}{m}{
  \keys_set:nn { semrulestateinit } { #1 }
  \ensuremath{
  \Rule*
    { \semDerivExpr{\l_semrulestateinit_V_tl}{\l_semrulestateinit_e_tl}{\l_semrulestateinit_vzero_tl}{\l_semrulestateinit_U_tl} \\
      \semDerivDeclOne{\l_semrulestateinit_i_tl+1}{\cdot}{\l_semrulestateinit_V_tl, \l_semrulestateinit_x_tl= \l_semrulestateinit_vzero_tl, set\l_semrulestateinit_x_tl= setter\sb{\l_semrulestateinit_i_tl}}{\l_semrulestateinit_p_tl}{\l_semrulestateinit_v_tl}{\l_semrulestateinit_Uprime_tl}{\l_semrulestateinit_L_tl} \\
      C = \{state\mapsto \l_semrulestateinit_vzero_tl\} }
    { \semDerivDeclOne{\l_semrulestateinit_i_tl}{\cdot}{\l_semrulestateinit_V_tl}{\synState{\l_semrulestateinit_x_tl}{\l_semrulestateinit_e_tl}\l_semrulestateinit_p_tl}{\l_semrulestateinit_v_tl}{\l_semrulestateinit_U_tl, \l_semrulestateinit_Uprime_tl}{C,\l_semrulestateinit_L_tl} }
  }
}

\keys_define:nn { semrulestatererender }
  {
    i .tl_set:N = \l_semrulestatererender_i_tl,
    M .tl_set:N = \l_semrulestatererender_M_tl,
    V .tl_set:N = \l_semrulestatererender_V_tl,
    x .tl_set:N = \l_semrulestatererender_x_tl,
    vzero .tl_set:N = \l_semrulestatererender_vzero_tl,
    Sigma .tl_set:N = \l_semrulestatererender_Sigma_tl,
    p .tl_set:N = \l_semrulestatererender_p_tl,
    v .tl_set:N = \l_semrulestatererender_v_tl,
    U .tl_set:N = \l_semrulestatererender_U_tl,
    L .tl_set:N = \l_semrulestatererender_L_tl,
  }
\NewDocumentCommand{\semRuleStateRerender}{m}{
  \keys_set:nn { semrulestatererender } { #1 }
  \ensuremath{
  \Rule*
    { \semDerivDeclOne{\l_semrulestatererender_i_tl+1}{\l_semrulestatererender_M_tl}{\l_semrulestatererender_V_tl, \l_semrulestatererender_x_tl=\l_semrulestatererender_vzero_tl, setX=setter\sb{\l_semrulestatererender_i_tl}; \l_semrulestatererender_Sigma_tl}{\l_semrulestatererender_p_tl}{\l_semrulestatererender_v_tl}{\l_semrulestatererender_U_tl}{\l_semrulestatererender_L_tl}\\\\
      C = \{state\mapsto \l_semrulestatererender_vzero_tl\}\\
      len(\l_semrulestatererender_M_tl) = len(\l_semrulestatererender_L_tl) }
    { \semDerivDeclOne{\l_semrulestatererender_i_tl}{C,\l_semrulestatererender_M_tl}{\l_semrulestatererender_V_tl; \l_semrulestatererender_Sigma_tl}{\synState{\l_semrulestatererender_x_tl}{e}~\l_semrulestatererender_p_tl}{\l_semrulestatererender_v_tl}{\l_semrulestatererender_U_tl}{C,\l_semrulestatererender_L_tl} }
  }
}

\keys_define:nn { semruleeffectrerenderyes }
  {
    i .tl_set:N = \l_semruleeffectrerenderyes_i_tl,
    M .tl_set:N = \l_semruleeffectrerenderyes_M_tl,
    V .tl_set:N = \l_semruleeffectrerenderyes_V_tl,
    Sigma .tl_set:N = \l_semruleeffectrerenderyes_Sigma_tl,
    p .tl_set:N = \l_semruleeffectrerenderyes_p_tl,
    v .tl_set:N = \l_semruleeffectrerenderyes_v_tl,
    Uprime .tl_set:N = \l_semruleeffectrerenderyes_Uprime_tl,
    L .tl_set:N = \l_semruleeffectrerenderyes_L_tl,
    e .tl_set:N = \l_semruleeffectrerenderyes_e_tl,
    U .tl_set:N = \l_semruleeffectrerenderyes_U_tl,
  }
\NewDocumentCommand{\semRuleEffectRerenderYesChanges}{m}{
  \keys_set:nn { semruleeffectrerenderyes } { #1 }
  \ensuremath{
  \Rule*
    { \semDerivDeclOne{\l_semruleeffectrerenderyes_i_tl+1}{\l_semruleeffectrerenderyes_M_tl}{\l_semruleeffectrerenderyes_V_tl;\; \l_semruleeffectrerenderyes_Sigma_tl}{\l_semruleeffectrerenderyes_p_tl}{\l_semruleeffectrerenderyes_v_tl}{\l_semruleeffectrerenderyes_Uprime_tl}{\l_semruleeffectrerenderyes_L_tl} \\
      \semDerivExpr{\l_semruleeffectrerenderyes_V_tl}{\l_semruleeffectrerenderyes_e_tl}{\l_semruleeffectrerenderyes_v_tl}{\l_semruleeffectrerenderyes_U_tl} \\\\
      \exists i \in [n], v\sb{i} \ne v\sb{i'} \\
      x\sb{1}=v\sb{1'}, x\sb{2}=v\sb{2'},\ldots,x\sb{n}=v\sb{n'} \in \l_semruleeffectrerenderyes_V_tl \\
      len(\l_semruleeffectrerenderyes_M_tl) = len(\l_semruleeffectrerenderyes_L_tl)\\\\
      C = \{values\mapsto [v\sb{1}, v\sb{2}, \ldots, v\sb{n}]\} \\
      C' = \{values\mapsto [v\sb{1'}, v\sb{2'}, \ldots, v\sb{n'}]\} }
    { \semDerivDeclOne{\l_semruleeffectrerenderyes_i_tl}{C,\l_semruleeffectrerenderyes_M_tl}{\l_semruleeffectrerenderyes_V_tl;\; \l_semruleeffectrerenderyes_Sigma_tl}{\synOn{x\sb{1}, x\sb{2},\ldots,x\sb{n}}{\l_semruleeffectrerenderyes_e_tl}~\l_semruleeffectrerenderyes_p_tl}{\l_semruleeffectrerenderyes_v_tl}{\l_semruleeffectrerenderyes_U_tl, \l_semruleeffectrerenderyes_Uprime_tl}{C',\l_semruleeffectrerenderyes_L_tl} }
  }
}

\keys_define:nn { semrulesubcompinit }
  {
    i .tl_set:N = \l_semrulesubcompinit_i_tl,
    Sigma .tl_set:N = \l_semrulesubcompinit_Sigma_tl,
    pF .tl_set:N = \l_semrulesubcompinit_pF_tl,
    vF .tl_set:N = \l_semrulesubcompinit_vF_tl,
    U .tl_set:N = \l_semrulesubcompinit_U_tl,
    LA .tl_set:N = \l_semrulesubcompinit_LA_tl,
    V .tl_set:N = \l_semrulesubcompinit_V_tl,
    y .tl_set:N = \l_semrulesubcompinit_y_tl,
    p .tl_set:N = \l_semrulesubcompinit_p_tl,
    v .tl_set:N = \l_semrulesubcompinit_v_tl,
    Uprime .tl_set:N = \l_semrulesubcompinit_Uprime_tl,
    L .tl_set:N = \l_semrulesubcompinit_L_tl,
    F .tl_set:N = \l_semrulesubcompinit_F_tl,
    tau .tl_set:N = \l_semrulesubcompinit_tau_tl,
  }
\NewDocumentCommand{\semRuleSubcompInit}{m}{
  \keys_set:nn { semrulesubcompinit } { #1 }
  \ensuremath{
  \Rule*
    { \semDerivDeclOne{\l_semrulesubcompinit_i_tl + 1}{\cdot}{x'\sb{1}=v\sb{1}, x'\sb{2}=v\sb{2}\ldots, x'\sb{n}=v\sb{n}; \l_semrulesubcompinit_Sigma_tl}{\l_semrulesubcompinit_pF_tl}{\l_semrulesubcompinit_vF_tl}{\l_semrulesubcompinit_U_tl}{\l_semrulesubcompinit_LA_tl}\\\\
      \semDerivDeclOne{\l_semrulesubcompinit_i_tl+1+len(\l_semrulesubcompinit_LA_tl)}{\cdot}{\l_semrulesubcompinit_V_tl, \l_semrulesubcompinit_y_tl= \l_semrulesubcompinit_vF_tl; \l_semrulesubcompinit_Sigma_tl}{\l_semrulesubcompinit_p_tl}{\l_semrulesubcompinit_v_tl}{\l_semrulesubcompinit_Uprime_tl}{\l_semrulesubcompinit_L_tl} \\\\
      C = \{ memLen = len(\l_semrulesubcompinit_LA_tl) \} \\
      x\sb{1}=v\sb{1}, x\sb{2}=v\sb{2}\ldots, x\sb{n}=v\sb{n} \in \l_semrulesubcompinit_V_tl \\\\
      \synComponent{\l_semrulesubcompinit_F_tl}{x'\sb{1}, x'\sb{2}, \ldots, x'\sb{n}}{\l_semrulesubcompinit_tau_tl}{\l_semrulesubcompinit_pF_tl} \texttt{ exists} }
    { \semDerivDeclOne{\l_semrulesubcompinit_i_tl}{\cdot}{\l_semrulesubcompinit_V_tl; \l_semrulesubcompinit_Sigma_tl}{\synSubComp{\l_semrulesubcompinit_y_tl}{A(x\sb{1}, x\sb{2}, \ldots, x\sb{n})}~\l_semrulesubcompinit_p_tl}{\l_semrulesubcompinit_v_tl}{\l_semrulesubcompinit_U_tl, \l_semrulesubcompinit_Uprime_tl}{C,\l_semrulesubcompinit_LA_tl,\l_semrulesubcompinit_L_tl} }
  }
}

\keys_define:nn { semrulesubcomprerender }
  {
    i .tl_set:N = \l_semrulesubcomprerender_i_tl,
    M .tl_set:N = \l_semrulesubcomprerender_M_tl,
    l .tl_set:N = \l_semrulesubcomprerender_l_tl,
    pF .tl_set:N = \l_semrulesubcomprerender_pF_tl,
    vF .tl_set:N = \l_semrulesubcomprerender_vF_tl,
    U .tl_set:N = \l_semrulesubcomprerender_U_tl,
    LA .tl_set:N = \l_semrulesubcomprerender_LA_tl,
    V .tl_set:N = \l_semrulesubcomprerender_V_tl,
    y .tl_set:N = \l_semrulesubcomprerender_y_tl,
    p .tl_set:N = \l_semrulesubcomprerender_p_tl,
    v .tl_set:N = \l_semrulesubcomprerender_v_tl,
    Uprime .tl_set:N = \l_semrulesubcomprerender_Uprime_tl,
    L .tl_set:N = \l_semrulesubcomprerender_L_tl,
    Sigma .tl_set:N = \l_semrulesubcomprerender_Sigma_tl,
    F .tl_set:N = \l_semrulesubcomprerender_F_tl,
    tau .tl_set:N = \l_semrulesubcomprerender_tau_tl,
  }
\NewDocumentCommand{\semRuleSubcompRerender}{m}{
  \keys_set:nn { semrulesubcomprerender } { #1 }
  \ensuremath{
  \Rule*
    { \semDerivDeclOne{\l_semrulesubcomprerender_i_tl+1}{take(\l_semrulesubcomprerender_M_tl, \l_semrulesubcomprerender_l_tl)}{x'\sb1=v\sb1, x'\sb2=v\sb2\ldots, x'\sb{n}=v\sb{n}}{\l_semrulesubcomprerender_pF_tl}{\l_semrulesubcomprerender_vF_tl}{\l_semrulesubcomprerender_U_tl}{\l_semrulesubcomprerender_LA_tl}\\\\
      \semDerivDeclOne{\l_semrulesubcomprerender_i_tl+1+\l_semrulesubcomprerender_l_tl}{drop(\l_semrulesubcomprerender_M_tl, \l_semrulesubcomprerender_l_tl)}{\l_semrulesubcomprerender_V_tl, \l_semrulesubcomprerender_y_tl= \l_semrulesubcomprerender_vF_tl}{\l_semrulesubcomprerender_p_tl}{\l_semrulesubcomprerender_v_tl}{\l_semrulesubcomprerender_Uprime_tl}{\l_semrulesubcomprerender_L_tl} \\\\
      C = \{ memLen = \l_semrulesubcomprerender_l_tl \} \\
      x\sb1=v\sb1, x\sb2=v\sb2\ldots, x\sb{n}=v\sb{n} \in \l_semrulesubcomprerender_V_tl \\\\
      \synComponent{\l_semrulesubcomprerender_F_tl}{x'\sb1, x'\sb2, \ldots, x'\sb{n}}{\l_semrulesubcomprerender_tau_tl}{\l_semrulesubcomprerender_pF_tl} \in \l_semrulesubcomprerender_Sigma_tl }
    { \semDerivDeclOne{\l_semrulesubcomprerender_i_tl}{C,\l_semrulesubcomprerender_M_tl}{\l_semrulesubcomprerender_V_tl; \l_semrulesubcomprerender_Sigma_tl}{\synSubComp{\l_semrulesubcomprerender_y_tl}{A(x\sb1, x\sb2, \ldots, x\sb{n})}~\l_semrulesubcomprerender_p_tl}{\l_semrulesubcomprerender_v_tl}{\l_semrulesubcomprerender_U_tl, \l_semrulesubcomprerender_Uprime_tl}{C',\l_semrulesubcomprerender_LA_tl,\l_semrulesubcomprerender_L_tl} }
  }
}

\keys_define:nn { semrulefunctionapp }
  {
    V .tl_set:N = \l_semrulefunctionapp_V_tl,
    eone .tl_set:N = \l_semrulefunctionapp_eone_tl,
    x .tl_set:N = \l_semrulefunctionapp_x_tl,
    ethree .tl_set:N = \l_semrulefunctionapp_ethree_tl,
    Vprime .tl_set:N = \l_semrulefunctionapp_Vprime_tl,
    Uone .tl_set:N = \l_semrulefunctionapp_Uone_tl,
    etwo .tl_set:N = \l_semrulefunctionapp_etwo_tl,
    vtwo .tl_set:N = \l_semrulefunctionapp_vtwo_tl,
    Utwo .tl_set:N = \l_semrulefunctionapp_Utwo_tl,
    vthree .tl_set:N = \l_semrulefunctionapp_vthree_tl,
    Uthree .tl_set:N = \l_semrulefunctionapp_Uthree_tl,
  }
\NewDocumentCommand{\semRuleFunctionApplication}{m}{
  \keys_set:nn { semrulefunctionapp } { #1 }
  \ensuremath{
  \Rule*
    { \semDerivExpr{\l_semrulefunctionapp_V_tl}{\l_semrulefunctionapp_eone_tl}{\fnClosure{\l_semrulefunctionapp_x_tl}{\l_semrulefunctionapp_ethree_tl}{\l_semrulefunctionapp_Vprime_tl}}{\l_semrulefunctionapp_Uone_tl} \\
      \semDerivExpr{\l_semrulefunctionapp_V_tl}{\l_semrulefunctionapp_etwo_tl}{\l_semrulefunctionapp_vtwo_tl}{\l_semrulefunctionapp_Utwo_tl} \\
      \semDerivExpr{\l_semrulefunctionapp_Vprime_tl, \l_semrulefunctionapp_x_tl{=}~\l_semrulefunctionapp_vtwo_tl}{\l_semrulefunctionapp_ethree_tl}{\l_semrulefunctionapp_vthree_tl}{\l_semrulefunctionapp_Uthree_tl} }
    { \semDerivExpr{\l_semrulefunctionapp_V_tl}{\l_semrulefunctionapp_eone_tl\;\l_semrulefunctionapp_etwo_tl}{\l_semrulefunctionapp_vthree_tl}{\l_semrulefunctionapp_Uone_tl, \l_semrulefunctionapp_Utwo_tl, \l_semrulefunctionapp_Uthree_tl} }
  }
}

\keys_define:nn { confruletopstep }
  {
    M .tl_set:N = \l_confruletopstep_M_tl,
    Sigmaprime .tl_set:N = \l_confruletopstep_Sigmaprime_tl,
    C .tl_set:N = \l_confruletopstep_C_tl,
    v .tl_set:N = \l_confruletopstep_v_tl,
    taur .tl_set:N = \l_confruletopstep_taur_tl,
    p .tl_set:N = \l_confruletopstep_p_tl,
    vr .tl_set:N = \l_confruletopstep_vr_tl,
    U .tl_set:N = \l_confruletopstep_U_tl,
    L .tl_set:N = \l_confruletopstep_L_tl,
    A .tl_set:N = \l_confruletopstep_A_tl,
    tau .tl_set:N = \l_confruletopstep_tau_tl,
    Delta .tl_set:N = \l_confruletopstep_Delta_tl,
    Sigma .tl_set:N = \l_confruletopstep_Sigma_tl,
  }
\NewDocumentCommand{\confRuleTopStep}{m}{
  \keys_set:nn { confruletopstep } { #1 }
  \ensuremath{
  \Rule*
    { \semDerivTopStepOne{\l_confruletopstep_M_tl}{\l_confruletopstep_Sigmaprime_tl}{\synComponent{\l_confruletopstep_C_tl}{\overline{x = \l_confruletopstep_v_tl}}{\l_confruletopstep_taur_tl}{\l_confruletopstep_p_tl}}{phase}{\l_confruletopstep_vr_tl, \l_confruletopstep_U_tl, \l_confruletopstep_L_tl}
      \\\\
      \l_confruletopstep_Sigmaprime_tl
        \Rightarrow \synComponent{\l_confruletopstep_A_tl}{\overline{x\sb{i}{:}~\tau\sb{i}}}{\l_confruletopstep_tau_tl}{p\sb{1},p\sb{2},\ldots, p\sb{n}}
        \Rightarrow \l_confruletopstep_Delta_tl
      \\\\
      len(\l_confruletopstep_M_tl) = len(\l_confruletopstep_L_tl) \\
      \overline{\tyDerivExpr{\cdot}{\l_confruletopstep_v_tl}{\tau}{\cdot}} \\
      \tyDerivExpr{\cdot}{\l_confruletopstep_vr_tl}{\l_confruletopstep_taur_tl}{\cdot}
      \\\\
      \l_confruletopstep_Sigma_tl = \l_confruletopstep_Sigmaprime_tl, \l_confruletopstep_C_tl{:}~(\synComponent{\l_confruletopstep_C_tl}{\overline{x = \tau}}{\l_confruletopstep_taur_tl}{\l_confruletopstep_p_tl}, \l_confruletopstep_Delta_tl) }
    { \confDerivTopStep{\l_confruletopstep_M_tl}{\l_confruletopstep_Sigma_tl}{\synComponent{\l_confruletopstep_C_tl}{\overline{x = \l_confruletopstep_v_tl}}{\l_confruletopstep_taur_tl}{\l_confruletopstep_p_tl}}{phase}{\l_confruletopstep_vr_tl, \l_confruletopstep_U_tl, \l_confruletopstep_L_tl} }
  }
}

\keys_define:nn { confruledecl }
  {
    i .tl_set:N = \l_confruledecl_i_tl,
    M .tl_set:N = \l_confruledecl_M_tl,
    V .tl_set:N = \l_confruledecl_V_tl,
    p .tl_set:N = \l_confruledecl_p_tl,
    v .tl_set:N = \l_confruledecl_v_tl,
    U .tl_set:N = \l_confruledecl_U_tl,
    L .tl_set:N = \l_confruledecl_L_tl,
    Sigma .tl_set:N = \l_confruledecl_Sigma_tl,
    Gamma .tl_set:N = \l_confruledecl_Gamma_tl,
    Delta .tl_set:N = \l_confruledecl_Delta_tl,
    tau .tl_set:N = \l_confruledecl_tau_tl,
  }
\NewDocumentCommand{\confRuleDecl}{m}{
  \keys_set:nn { confruledecl } { #1 }
  \ensuremath{
  \Rule*
    { \semDerivDeclOne{\l_confruledecl_i_tl}{\l_confruledecl_M_tl}{\l_confruledecl_V_tl}{\l_confruledecl_p_tl}{\l_confruledecl_v_tl}{\l_confruledecl_U_tl}{\l_confruledecl_L_tl} \\\\
      \tyDerivDecls{\l_confruledecl_Sigma_tl}{\l_confruledecl_Gamma_tl}{\l_confruledecl_Delta_tl}{\l_confruledecl_p_tl}{\l_confruledecl_tau_tl}\\\\
      \l_confruledecl_Gamma_tl \Vdash \l_confruledecl_V_tl \\
      \tyDerivExpr{\cdot}{\l_confruledecl_v_tl}{\l_confruledecl_tau_tl}{\cdot} }
    { \confDerivDecl{\l_confruledecl_i_tl}{\l_confruledecl_M_tl}{\l_confruledecl_V_tl}{\l_confruledecl_p_tl}{\l_confruledecl_v_tl}{\l_confruledecl_U_tl}{\l_confruledecl_L_tl} }
  }
}

\ExplSyntaxOff

\theoremstyle{plain}
\newtheorem{theorem}{Theorem}[section]

\theoremstyle{definition}

\theoremstyle{remark}

\NewDocumentCommand{\rulename}{m}{\textsc{#1}}

\makeatletter
\NewDocumentCommand{\Rule}{}{\@ifstar\Ruleshort\Rulelong}
\makeatother
\NewDocumentCommand{\Ruleshort}{m m}{\ensuremath{\inferrule{#1}{#2}}}
\NewDocumentCommand{\Rulelong}{m m m}{\ensuremath{\inferrule[(#1)]{#2}{#3}}}
\NewDocumentCommand{\breathe}{}{\vspace{30pt}}

\newenvironment{placedfigure}
  {\par\addvspace{\bigskipamount}\begin{center}}
  {\end{center}\par\addvspace{\bigskipamount}}
\NewDocumentCommand{\listenersafe}{m}{\ensuremath{\vdash #1 \;\mathsf{lsafe}}}

\NewDocumentCommand{\m}{m}{\ensuremath{\mathsf{#1}}}
\NewDocumentCommand{\mi}{m}{\ensuremath{\mbox{\it #1}}}

\NewDocumentCommand{\df}{}{\textit{df}\xspace}

\NewDocumentCommand{\kw}{m}{\ensuremath{{\texttt{#1}}}} 

\NewDocumentCommand{\Next}{m}{\ensuremath{{\bigcirc}^{#1}}}
\NewDocumentCommand{\always}{m}{\ensuremath{\Box^{#1}}}
\NewDocumentCommand{\eventually}{m}{\ensuremath{\Diamond^{#1}}}
\NewDocumentCommand{\cancelEv}{m}{\ensuremath{\oslash\,#1}}
\NewDocumentCommand{\remove}{m}{\ensuremath{\text{\ding{55}}\,#1}}
\NewDocumentCommand{\eventEff}{m m}{\ensuremath{\kw{#1}\langle \kw{#2} \rangle}}
\NewDocumentCommand{\eventEffBar}{m m}{\ensuremath{#1\langle \overline{#2} \rangle}}

\NewDocumentCommand{\openbrace}{}{\texttt{\{}}
\NewDocumentCommand{\closebrace}{}{\texttt{\}}}

\NewDocumentCommand{\openparen}{}{\texttt{(}}
\NewDocumentCommand{\closeparen}{}{\texttt{)}}

\NewEnviron{block}{
  \noindent\openbrace
  \vspace{0em}
  \begin{adjustwidth}{2em}{0pt}
    \BODY
  \end{adjustwidth}
  \vspace{0em}
  \noindent\closebrace
  \par\noindent
}

\NewEnviron{parenblock}{
  \noindent\openparen
  \vspace{0em}
  \begin{adjustwidth}{2em}{0pt}
    \BODY
  \end{adjustwidth}
  \vspace{0em}
  \noindent\closeparen
  \par\noindent
}

\NewDocumentCommand{\react}{}{React\xspace}
\NewDocumentCommand{\lang}{}{Willow\xspace}
\NewDocumentCommand{\js}{}{JavaScript\xspace}

\NewDocumentCommand{\html}{}{HTML\xspace}

\NewDocumentCommand{\api}{}{API\xspace}

\NewDocumentCommand{\tinyComp}{m}{\ensuremath{\texttt{comp}~\mathit{#1}}}

\NewDocumentCommand{\synIf}{s m m m}{
  \IfBooleanTF{#1}{
    \noindent\texttt{if}\;\ensuremath{#2}\;\texttt{then}%
    \begin{block}
      \ensuremath{#3}
    \end{block}%
    \texttt{else}%
    \begin{block}
      \ensuremath{#4}
    \end{block}%
  }{
    \ensuremath{
      \texttt{if}\;#2\;\texttt{then}\;#3\;\texttt{else}\;#4
    }
  }
}

\NewDocumentCommand{\synState}{s m m}{
  \IfBooleanTF{#1}{
    \ensuremath{
      \texttt{state}\;\textit{#2},\; \textit{set{\MakeUppercase #2}}\;\texttt{default}\;#3;\;
    }\newline
  }{
    \ensuremath{
      \texttt{state}\;\textit{#2},\; \textit{set{\MakeUppercase #2}}\;\texttt{default}\;#3;\;
    }
  }
}

\NewDocumentCommand{\synSubComp}{s m m}{
  \IfBooleanTF{#1}{
    \ensuremath{
      \texttt{comp}\;#2 = #3;\;
    }\newline
  }{
    \ensuremath{
      \texttt{comp}\;#2 = #3;\;
    }
  }
}

\NewDocumentCommand{\synCompLet}{s m m}{
  \IfBooleanTF{#1}{
    \ensuremath{
      \texttt{let}\;\mathit{#2} = #3;\;
    }\newline
  }{
    \ensuremath{
      \texttt{let}\;\mathit{#2} = #3;\;
    }
  }
}

\NewDocumentCommand{\synOn}{s m m}{
  \IfBooleanTF{#1}{
    \noindent\texttt{on}\;\ensuremath{#2}\;\texttt{do}%
    \begin{block}
      #3
    \end{block}%
  }{
    \ensuremath{
      \texttt{on}\;#2\; \texttt{do}\;\{\;#3\;\};\;
    }
  }
}

\NewDocumentCommand{\synReturn}{s m}{
  \IfBooleanTF{#1}{
    \noindent\texttt{return}%
    \begin{parenblock}
      \ensuremath{#2}
    \end{parenblock}%
  }{
    \ensuremath{
      \texttt{return}\;#2
    }
  }
}

\lstdefinestyle{htmlstyle}{
  language=HTML,
  basicstyle=\ttfamily\small,
  showstringspaces=false,
  columns=fullflexible,
}

\lstnewenvironment{returnblock}[1][]{
  \noindent\texttt{return}\openbrace\\
  \lstset{style=htmlstyle,#1}
}{
  \closebrace 
}

\NewDocumentCommand{\synComponent}{s m m m m}{
  \IfBooleanTF{#1}{
    \noindent\begin{minipage}{\textwidth}%
      \noindent\texttt{comp}~\ensuremath{\mathit{#2}}~(\ensuremath{#3}):#4%
      \begin{block}
        #5
      \end{block}%
    \end{minipage}
  }{
    \ensuremath{
      \texttt{comp}~\mathit{#2}~(#3):#4~\{~#5~\}
    }
  }
}

\NewDocumentCommand{\tyBool}{}{\texttt{bool}}
\NewDocumentCommand{\tyUnit}{}{\texttt{unit}}

\NewDocumentCommand{\tyArrow}{s m m o}{
  \IfBooleanTF{#1}{
    \ensuremath{
      #2 \to #3
    }
  }{
    \ensuremath{
      #2 \to #3 \mid #4
    }
  }
}

\NewDocumentCommand{\tySetter}{s O{\tau} m}{
  \IfBooleanTF{#1}{
    \ensuremath{
      \tyArrow
        {(\tyArrow*{#2}{#2})}
        {\kw{unit}}
        [\Next{1r}{\stch{#3}}]
    }
  }{
    \ensuremath{
      \tyArrow
        {(\tyArrow{#2}{#2}[\cdot])}
        {\kw{unit}}
        [\Next{1r}{\stch{#3}}]
    }
  }
}

\NewDocumentCommand{\stch}{m}{\ensuremath{
  \raisebox{0.15ex}{\tiny @}\hspace{-0em}#1
}}

\NewDocumentCommand{\synFn}{m m}{\ensuremath{
  \lambda #1.#2
}}

\NewDocumentCommand{\semDerivDeclOne}{m m m m m m m}{\ensuremath{
  [#1 \mid #2 ]\;;\; #3 \vdash #4 \Downarrow #5 \;;\; #6 \;;\; [#7]
}}

\NewDocumentCommand{\semDerivDecl}{m m m m m m m m m}{\ensuremath{
  #1 \;;\; [#2 \mid #3 ]\;;\; #4 \;;\; #5 \vdash #6 \Downarrow #7 \;;\; #8 \;;\; #9
}}

\NewDocumentCommand{\semPfDerivDecl}{m m m m m m m m m}{\ensuremath{
  #1 \;;\; #2 \;;\; #3 \;;\; #4 \;;\; #5 \vdash_\Sigma #6 \Downarrow #7 \;;\; #8 \;;\; #9
}}

\NewDocumentCommand{\semPfTraceArgs}{m m m}{\ensuremath{
  \mathit{ARG}\; #1 = #2 \to #3
}}
\NewDocumentCommand{\semPfTraceExt}{m}{\ensuremath{
  \mathit{EXT}\; #1
}}
\NewDocumentCommand{\semPfTraceFire}{m}{\ensuremath{
  \mathit{FIRE}\; #1
}}
\NewDocumentCommand{\semPfTraceFireCxl}{m}{\ensuremath{
  \mathit{FIRE\text{-}CXL}\; #1
}}
\NewDocumentCommand{\semPfTraceFireSuc}{m m}{\ensuremath{
  \mathit{FIRE\text{-}SUC}\; #1 \to \{\;#2\;\}
}}
\NewDocumentCommand{\semPfTraceOn}{m m}{\ensuremath{
  \mathit{EFF}\;(#1) \to \{\;#2\;\}
}}
\NewDocumentCommand{\semPfTraceLet}{m m}{\ensuremath{
  \mathit{LET}\;(#1) \leftarrow \{\;#2\;\}
}}
\NewDocumentCommand{\semPfTraceReturn}{m}{\ensuremath{
  \mathit{RET}\;#1
}}

\NewDocumentCommand{\semDerivExpr}{O{\Sigma} m m m m}{\ensuremath{
  #1 \;;\; #2 \vdash #3 \Downarrow #4 \;;\; #5
}}

\NewDocumentCommand{\confTop}{O{\Sigma} m m m m m m m m}{\ensuremath{%
  \langle#1 \mid #2 \mid #3\mid #4\mid #5\mid #6 \mid #7 \mid \texttt{#8} \mid #9\rangle
}}

\NewDocumentCommand{\confPfTop}{m m m m m m}{\ensuremath{%
  \langle\Sigma; \Delta; \Gamma_s \mid\mid #1 \mid \mathit{#2} \mid #3%
  \mid\texttt{#4} \mid #5 \mid #6 \rangle
}}

\NewDocumentCommand{\confDerivTopStep}{m m m m m}{\ensuremath{
  #1 \triangleright #2 \Vdash #3 \texttt{ #4 } \triangleright #5
}}

\NewDocumentCommand{\confDerivDecl}{m m m m m m m}{\ensuremath{
  [#1 \mid #2 ]\;;\; #3 \Vdash #4 \Downarrow #5 \;;\; #6 \;;\; [#7]
}}

\NewDocumentCommand{\causesderiv}{m m m}{\ensuremath{
  #1 \vdash \stch{#2} \Rightarrow #3
}}

\NewDocumentCommand{\valTyDeriv}{O{\Sigma} m m}{\ensuremath{
  #1 \models #2 : #3
}}

\NewDocumentCommand{\valTyEnvDeriv}{O{\Sigma} m m}{\ensuremath{
  #1 \;;\; #2 \models_t #3
}}

\NewDocumentCommand{\effQueueDeriv}{m m}{\ensuremath{
  #1 \models_e #2
}}

\NewDocumentCommand{\justifies}{m m}{\ensuremath{
  (#1) \succ #2
}}
\NewDocumentCommand{\justifiesVarSet}{m m m}{\ensuremath{
  (#1) \succ #2 \succ #3
}}

\NewDocumentCommand{\tyDerivComp}{m m m}{\ensuremath{
  #1 \Rightarrow #2 \Rightarrow #3
}}

\NewDocumentCommand{\tyDerivDecl}{m m m m m}{\ensuremath{
  #1\;;\;#2\;;\;#3\vdash #4 \Rightarrow #5
}}

\NewDocumentCommand{\tyDerivDecls}{m m m m m}{\ensuremath{
  #1\;;\;#2\;;\;#3\vdash #4 : #5
}}

\NewDocumentCommand{\tyDerivExpr}{m m m m}{\ensuremath{
  #1 \vdash #2 : #3 \shortmid #4
}}

\NewDocumentCommand{\setterderiv}{m m m}{\ensuremath{
  #1 \Rightarrow #2 \Rightarrow #3
}}

\NewDocumentCommand{\setterQueueDeriv}{m m m}{\ensuremath{
  #1 \Rightarrow #2 \Rightarrow #3
}}

\NewDocumentCommand{\semDerivTopStepOne}{m m m m m}{\ensuremath{
  #1 \triangleright #2 \vdash #3 \texttt{ #4 } \triangleright #5
}}

\NewDocumentCommand{\semDerivTopStep}{m m m m m m m}{\ensuremath{%
  \Sigma \;;\; #1 \;;\; #2 \;;\; #3 \;;\; #4 \vdash #5 \;\texttt{#6}\; #7
}}

\NewDocumentCommand{\fnClosure}{m m m}{\ensuremath{
  \llparenthesis \lambda #1. #2, #3 \rrparenthesis
}}
\NewDocumentCommand{\Closure}{m m m}{\ensuremath{
  \llparenthesis \lambda #1. #2, #3 \rrparenthesis
}}

\NewDocumentCommand{\deltaEntry}{m m m}{#1\left[#2\right] \shortmid #3}

\NewDocumentCommand{\setter}{m}{\textit{set{\MakeUppercase #1}}}

\NewDocumentCommand{\tyRuleReturnDecl}{mmmmm}{
  \ensuremath{
  \Rule*
    { \causesderiv{#3}{#4}{\stch return} }
    { \tyDerivDecls{#1}{#2, #4: #5}{#3}{\synReturn{#4}}{#5} }
  }
}

\NewDocumentCommand{\tyRuleStateDecl}{mmmmmmmm}{
  \ensuremath{
  \Rule*
    { \tyDerivExpr{#2}{#5}{#6}{\cdot}\\
      \deltaEntry{#4}{}{#7} \in #3\\
      #2' = #2, #4:#6, \setter{#4}: (\tySetter[#6]{#4})}
    { \tyDerivDecl{#1}{#2}{#3}{\synState{#4}{#5} #8}{#2'} }
  }
}

\NewDocumentCommand{\tyRuleOnDecl}{mmmmm}{
  \ensuremath{
  \Rule*
    { \tyDerivExpr{#2}{#4}{\tyUnit}{#5} \\
      \forall i \in [n], \causesderiv{#3}{x_i}{#5} }
    { \tyDerivDecl{#1}{#2}{#3}{\synOn{\overline{x}}{#4}}{#2} }
  }
}

\NewDocumentCommand{\tyRuleBool}{mm}{
  \ensuremath{
  \Rule*
    { #2 \in \{true, false\} }
    { \tyDerivExpr{#1}{#2}{\tyBool}{\cdot} }
  }
}
\NewDocumentCommand{\tyRuleBoolDefault}{}{\tyRuleBool{\Gamma}{c}}

\NewDocumentCommand{\tyRuleBase}{mmm}{
  \ensuremath{
  \Rule*
    { #2 \in #3 }
    { \tyDerivExpr{#1}{#2}{#3}{\cdot} }
  }
}
\NewDocumentCommand{\tyRuleBaseDefault}{}{\tyRuleBase{\Gamma}{c}{\alpha}}

\NewDocumentCommand{\tyRuleVar}{mmm}{
  \ensuremath{
  \Rule*
    { }
    { \tyDerivExpr{#1, #2: #3}{#2}{#3}{\cdot} }
  }
}
\NewDocumentCommand{\tyRuleVarDefault}{}{\tyRuleVar{\Gamma}{x}{\tau}}

\NewDocumentCommand{\tyRuleFnApp}{mmmmmmmm}{
  \ensuremath{
  \Rule*
    { \tyDerivExpr{#1}{#2}{(\tyArrow{#4}{#5}[#8])}{#6}\\
      \tyDerivExpr{#1}{#3}{#4}{#7} }
    { \tyDerivExpr{#1}{#2\;#3}{#5}{#6 * #7 * #8} }
  }
}
\NewDocumentCommand{\tyRuleFnAppDefault}{}{\tyRuleFnApp{\Gamma}{e_1}{e_2}{\tau_1}{\tau_2}{F_1}{F_2}{F}}

\NewDocumentCommand{\tyRuleFn}{mmmmmm}{
  \ensuremath{
  \Rule*
    { \tyDerivExpr{#1, #2 :#3}{#4}{#5}{#6} }
    { \tyDerivExpr{#1}{\synFn{#2}{#4}}{(\tyArrow{#3}{#5}[#6])}{\cdot} }
  }
}
\NewDocumentCommand{\tyRuleFnDefault}{}{\tyRuleFn{\Gamma}{x}{\tau_1}{e}{\tau_2}{F}}

\NewDocumentCommand{\tyRuleSeq}{mmmmmmm}{
  \ensuremath{
  \Rule*
    { \tyDerivExpr{#1}{#2}{#3}{#4} \\
      \tyDerivExpr{#1}{#5}{#6}{#7} }
    { \tyDerivExpr{#1}{#2;#5}{#6}{#4 * #7} }
  }
}
\NewDocumentCommand{\tyRuleSeqDefault}{}{\tyRuleSeq{\Gamma}{e_1}{\tau_1}{F_1}{e_2}{\tau_2}{F_2}}

\NewDocumentCommand{\tyRuleBranch}{mmmmmmmm}{
  \ensuremath{
  \Rule*
    { \tyDerivExpr{#1}{#2}{\tyBool}{#3} \\
      \tyDerivExpr{#1}{#4}{#5}{#6}\\
      \tyDerivExpr{#1}{#7}{#5}{#8} }
    { \tyDerivExpr{#1}{\synIf{#2}{#4}{#7}}{#5}{#3 * (#6 + #8)} }
  }
}
\NewDocumentCommand{\tyRuleBranchDefault}{}{\tyRuleBranch{\Gamma}{e_1}{F_1}{e_2}{\tau}{F_2}{e_3}{F_3}}

\NewDocumentCommand{\tyRuleProd}{mmmmmmm}{
  \ensuremath{
  \Rule*
    { \tyDerivExpr{#1}{#2}{#3}{#4} \\
      \tyDerivExpr{#1}{#5}{#6}{#7} }
    { \tyDerivExpr{#1}{(#2, #5)}{#3 \times #6}{#4 * #7} }
  }
}
\NewDocumentCommand{\tyRuleProdDefault}{}{\tyRuleProd{\Gamma}{e_1}{\tau_1}{F_1}{e_2}{\tau_2}{F_2}}

\NewDocumentCommand{\tyRuleFst}{mmmmm}{
  \ensuremath{
  \Rule*
    { \tyDerivExpr{#1}{#2}{#3 \times #4}{#5} }
    { \tyDerivExpr{#1}{\kw{fst}\;#2}{#3}{#5} }
  }
}
\NewDocumentCommand{\tyRuleFstDefault}{}{\tyRuleFst{\Gamma}{e_1}{\tau_1}{\tau_2}{F}}

\NewDocumentCommand{\semRuleVar}{mmm}{
  \ensuremath{
  \Rule*
    { }
    { \semDerivExpr{#1,#2=#3}{#2}{#3}{\cdot} }
  }
}
\NewDocumentCommand{\semRuleVarDefault}{}{\semRuleVar{V}{x}{v}}

\NewDocumentCommand{\semRuleConstant}{mm}{
  \ensuremath{
  \Rule*
    { #2 \in \alpha \cup \{true, false\} \cup \{()\} }
    { \semDerivExpr{#1}{#2}{#2}{\cdot} }
  }
}
\NewDocumentCommand{\semRuleConstantDefault}{}{\semRuleConstant{V}{c}}

\NewDocumentCommand{\semRuleSetterCall}{mmmmmmmmm}{
  \ensuremath{
  \Rule*
    { \semDerivExpr{#1}{#2}{setter_{#3}}{#4} \\
      \semDerivExpr{#1}{#5}{\Closure{#6}{#7}{#8}}{#9} }
    { \semDerivExpr{#1}{#2\;#5}{()}{#4, #9, setter_{#3}(\Closure{#6}{#7}{#8})} }
  }
}
\NewDocumentCommand{\semRuleSetterCallDefault}{}{\semRuleSetterCall{V}{e_1}{x}{U_1}{e_2}{x}{e_3}{V'}{U_2}}

\NewDocumentCommand{\semRuleClosureCreate}{mmm}{
  \ensuremath{
  \Rule*
    { }
    { \semDerivExpr{#1}{\synFn{#2}{#3}}{\Closure{#2}{#3}{#1}}{\cdot} }
  }
}

\NewDocumentCommand{\semRuleSeq}{mmmmmmm}{
  \ensuremath{
  \Rule*
    { \semDerivExpr{#1}{#2}{#3}{#4} \\
      \semDerivExpr{#1}{#5}{#6}{#7} }
    { \semDerivExpr{#1}{#2\;;\;#5}{#6}{#4, #7} }
  }
}
\NewDocumentCommand{\semRuleSeqDefault}{}{\semRuleSeq{V}{e_1}{v_1}{U_1}{e_2}{v_2}{U_2}}

\renewcommand{\artifactnote}{%
  \footnote{The prototype checker is available at
    \url{https://github.com/junewunder/willow-inference-hs}.}}

\begin{document}

\title{A Type-and-Effect System for Temporal Dependency Analysis of Render-based Reactive Programs}

\author{june wunder}
\orcid{0000-0002-3280-9731}
\affiliation{%
  \institution{Boston University}
  \city{Boston}
  \state{Massachusetts}
  \country{USA}
}
\email{june@junewunder.com}

\author{Ankush Das}
\orcid{0000-0003-2459-1258}
\affiliation{%
  \institution{Boston University}
  \city{Boston}
  \state{Massachusetts}
  \country{USA}
}
\email{ankushd@bu.edu}

\author{Marco Gaboardi}
\orcid{0000-0002-5235-7066}
\affiliation{%
  \institution{Boston University}
  \city{Boston}
  \state{Massachusetts}
  \country{USA}
}
\email{gaboardi@bu.edu}

\renewcommand{\shortauthors}{wunder, Das, and Gaboardi}

\begin{abstract}
Reactive programming frameworks such as React allow developers to build
interactive applications by declaratively specifying how outputs depend on
changing inputs.
Although this model makes it easy to reason about what an application computes,
the temporal behavior of reactive programs—when updates occur and how they
propagate—remains difficult to understand and verify.
Applications implicitly rely on timing assumptions buried in framework runtimes,
leading to subtle bugs such as stale reads, transient inconsistencies,
order-dependent behavior, and unintended feedback cycles.
To address these challenges, this paper presents Willow, a core calculus for reactive programming inspired by
React.
Willow gives a time-aware operational semantics that models computation in terms
of renders, the fundamental evaluation step in which components produce user
interface descriptions, and pairs it with a novel type-and-effect system that
statically tracks timing behavior as effects.
A next modality $\Next{}{}$
expresses delays measured not only in renders but in any unit the host
environment exposes---renders, network round-trips, or wall-clock
milliseconds.
A family of modalities tracks the full lifecycle of
event handlers: when they are registered, when they fire, when pending events
are canceled and when handlers are removed.
A key insight is that the
resulting effects form a temporal dependency graph, letting standard
graph algorithms statically detect render cascades and inter-render loops that
cause non-termination or performance degradation.
We formalize Willow and prove preservation of the effect
system with respect to the time-aware semantics.
We also implement a
prototype checker with automatic effect inference and evaluate it on
representative reactive patterns such as debouncing, form inputs, and
API-driven updates.
Our results demonstrate that time-aware typing provides a
practical foundation for reasoning about the temporal correctness of reactive
programs.
\end{abstract}

\maketitle

\section{Introduction}
\label{sec:intro}

Modern interactive software is increasingly structured as
reactive programs~\cite{Berry92SCP,Halbwachs91IEEE,Wan00PLDI}
that continuously respond to streams of events originating from users, sensors, and network services.
Frameworks such as React~\cite{react-fiber,ReactMeta} have popularized a declarative model for
building such systems, in which programs describe how outputs depend on changing inputs
rather than explicitly orchestrating control flow.
This programming style has proven highly effective for building complex interactive applications,
particularly in web and mobile
environments~\cite{Meyerovich09OOPSLA,elliottFunctionalReactiveAnimation1997}.

Despite its widespread adoption, reasoning about the temporal behavior of
reactive programs remains challenging.
Reactive applications implicitly encode timing assumptions about when updates occur,
how quickly state propagates, and how computations interact with asynchronous events.
Timing dependencies can be intentionally or accidentally \emph{mutually
recursive}, i.e., an update to variable $x$ leads to an update to $y$ which, in
turn, leads to an update to $x$.
State changes or other side effects are often conditional on other pieces of
program state, which makes programs more difficult to reason about.
Due to these complexities, timing behaviors are embedded within framework
semantics and runtime scheduling policies.
This leaves timing reasoning to be done through informal means,
developers lack principled tools for predicting or verifying the
temporal properties of their programs.

To make matters worse, reactive programs are particularly susceptible to bugs
that arise from a lack of understanding of precise timing behavior.
%
For instance, computations may observe stale state~\cite{Cooper06ESOP} if updates have not yet
propagated, or transient inconsistencies (glitches) when dependent values are updated
at different times.
Programs may also exhibit order-dependence~\cite{Kahn74IFIP}, where the outcome of a computation
depends on the scheduler's choice of evaluation order, or delayed-effect violations,
where effects occur later than the program assumes due to deferred execution or batching.
Graphical user interfaces have long been recognized as difficult to
test~\cite{memon_gui_2002,banerjee_graphical_2013}, owing to the combinatorial
space of possible event sequences and orderings.
Timing-dependent behaviors are among the hardest cases because they emerge from
the interaction between the program's dependency structure and the framework's
execution model.

A second source of temporal complexity is the \emph{lifecycle} of event handlers
themselves.
Event handlers must be repeatedly torn down and re-bound as program state
changes.
Deciding which state changes invalidate each handler is left entirely to the
programmer.
Programmers can also forget to remove old event handlers, which gives rise to
logic bugs and memory leaks.
Events also fire at different frequencies, and programmers must guard against
high-frequency firings or unintended double inputs.

This motivates the need for static analyses that make the temporal behavior of
reactive computations explicit.
To this end, this paper introduces \lang, a simplified core calculus inspired by
React, to reason about the temporal behavior of reactive programs.
At its core, \lang models time using \emph{renders}, i.e., when a component is evaluated to
produce a description of the user interface.
Concretely, a render is triggered whenever one of the state variables changes.
These renders are treated as the fundamental computation step.

Our first contribution is an operational semantics that makes the timing of
renders and events a first-class part of the model.
Previous
semantics~\cite{madsenSemanticsEssenceReact2020,leeReacttRaceSemanticsUnderstanding2025}
faithfully reproduce React's runtime behavior, so timing is emergent from
execution.
%
We instead design a (simplified) semantics that centers timing, making the
ordering and delay of renders and events explicit in the semantic objects
themselves.
Our semantics models execution in two phases: the render phase where
code runs, and the housekeeping phase where state updates and
events are managed.
Variables are considered immutable during the render phase and are modified
during housekeeping between renders.
A novel \emph{instrumented} semantics tracks variable dependencies and
distinguishes between event-triggered evaluation steps and the propagation of updates through
reactive dependencies.
This formulation makes the timing behavior of reactive programs explicit and provides a
foundation for static reasoning.

Building on this semantics, we introduce a novel type-and-effect system that
tracks timing information.
Effects capture when states \emph{may} change; when events are bound, removed, or
cancelled; and the timing of scheduled events.
The effect of a computation intuitively describes when its observable consequences may occur
relative to the event that triggered it.
%

The key idea of our type system is to integrate state and event dependencies
with quantitative delay information.
For instance, if an update to $x$ causes an update to $y$ after 1 render, this
effect on $y$ is expressed as $\Next{1r}{} @x$ where $@x$ denotes an update to $x$.
Similarly, if a text input box has a bound onChange handler that updates state
$z$, we convey that as $\always{\eventEff{change}{input}}(\Next{1r}{\stch z})$.
Scheduled events are their own effect, simply a label with some specifying data
in a tuple.
The modality $\always{e}(F)$ marks the registration of a \emph{persistent} listener,
and its counterpart $\eventually{e}(F)$ marks a \emph{one-time} listener.
We also add modalities $\remove{e}$ and $\cancelEv{e}$ to mark removing all
listeners from an event, and cancelling one firing of an event $e$.
Finally, these effects can be composed in sequence with $F_1 * F_2$ and branches
are marked by $F_1 + F_2$.
Effects then propagate through expressions in a manner analogous to traditional
effect systems.
This approach enables modular reasoning about reactive programs, allowing the
timing properties of larger components to be inferred from those of their parts.

A key advantage of \lang{} is that captured effects can be viewed as a \emph{temporal dependency graph}
that captures variable dependency over the entire execution, not just a single render.
This enables standard graph algorithms to analyze performance and dataflow
of programs.
We use these algorithms to \emph{(i)} detect long render cascades, where
one event drives many sequential renders and degrades performance,
\emph{(ii)} flag inter-render loops, where updates repeatedly trigger one another,
\emph{(iii)} warn when expensive handlers are bound to high-frequency events,
\emph{(iv)} confirm that stale event handlers are always cleaned up, and
\emph{(v)} analyze the timing of an app's first
render.
As a result, \lang statically detects critical performance defects missed by
compiler transformations~\cite{react_compiler} and other current tools.

To evaluate the practicality of our approach, we formalize \lang and the proposed effect
system and prove key metatheoretic properties.
In particular, we establish type preservation, showing that well-typed programs respect
the timing guarantees described by their effects.
To minimize programmer burden, we have also implemented a prototype \emph{type-and-effect inference algorithm}
in Haskell that infers effects of all examples presented in the paper.\artifactnote

In summary, this paper makes the following contributions:
\begin{itemize}[leftmargin=*]
    \item A core calculus for reactive programming featuring a time-aware operational semantics making explicit the temporal and causal structure of updates
    (Section~\ref{sec:semantics}),

    \item A type-and-effect system for tracking timing constraints, enabling
    static reasoning about when computations produce observable effects (Section~\ref{sec:types}),

    \item Metatheoretic results establishing preservation of the effect system
    with respect to an instrumented semantics (Section~\ref{sec:metatheory}),

    \item Post-typecheck analyses that can statically catch common
    bugs and warn programmers what to inspect for fixing timing logic errors (Section~\ref{subsec:post-analysis}),

    \item A prototype type-and-effect inference algorithm to demonstrate
    feasibility for implementation in real \react programs (Section~\ref{sec:examples}),

    \item The type-checking of real-world examples, namely a stuck loading state, a request race condition, and an update loop
    in a realistic signup form, demonstrating how the system statically finds and
    draws attention to timing bugs (Section~\ref{sec:examples}).
\end{itemize}

\section{Overview}
\label{sec:overview}


\lang's core calculus is inspired by React's programming model and aims at capturing
the essential structure of render-based reactive programs.
Before describing \lang in detail, we provide a brief background on
React.

%

React~\cite{ReactMeta} provides a declarative model for building user interfaces (UIs) for websites and applications.
In this model, programmers describe how the UI should be derived from some application state,
and the framework automatically updates the interface whenever that state changes.

React programs are structured around \emph{components}: functions that take
inputs and return a `piece' of UI to render.
Program execution is then divided into a sequence of \emph{renders}.
During each render, a component computes a snapshot of the UI based on the current values
of its state variables and inputs.
React then compares the resulting interface with the previous one and applies the minimal
set of updates needed to the Document Object Model (DOM), i.e., visible page.
This render cycle occurs repeatedly as the program responds to events,
giving the user the impression of a continuously interactive system.
Renders typically happen at least 60 times per second, meaning each render
can only take up to 16 ms to execute, compute and apply the minimal updates.

React also provides several \emph{hooks}, which are primitive mechanisms for
managing state and side effects.
Two of the most fundamental are $\m{useState}$ and $\m{useEffect}$ hooks.
The $\m{useState}$ hook allows components to store values that persist across renders,
while $\m{useEffect}$ allows code to execute in response to changes in specific pieces of state.
The difference between a component and a hook in \react is that a component
returns render-able \html{} whereas a hook is primarily used for encapsulation
of the program logic to make programs more readable.

\paragraph{\textbf{Setters in React}}
The  $\m{useState}$ hook is  typically used in  declarations of the form:
\begin{lstlisting}
  const [x, setX] = useState(e);
\end{lstlisting}
Here, \kw{x} denotes a state variable and \kw{setX} is a \emph{setter}
function that schedules updates to the state variable.
State is \emph{immutable during a render}: updates are not applied immediately
but instead scheduled to occur between renders.
The expression \kw{e} gives an initial value to \kw{x}
the first time the component is rendered.
%
%
%
Setter functions are usually given a function of type $\tau \to \tau$
that computes a new value from a value in the previous state.
%
Thus, setters allow the type signature  $(\tau \to \tau) \to \tyUnit$.
When a setter receives a function argument, e.g., \kw{setX(x => x + 1)},
the function is evaluated between renders using the most recent state value.
Multiple setter calls during a single render are queued and
applied sequentially before the next render begins.
For instance, if a variable \kw{x} has value 3 at the start of a render and
\kw{setX(x => x + 1)} was invoked twice during that render, then \kw{x}
would remain 3 throughout the render, but
next render would observe the value 5 for \kw{x}.

\paragraph{\textbf{Effects in React}}
\react is purposefully designed to resemble functional programming.
Stateful variables are immutable during renders because each render is intended
to be a snapshot in time of the overall execution.
Side effects, such as state updates, are therefore typically performed in response
to events or within \emph{effect hooks}.
This is where \kw{useEffect} comes in handy.
\begin{lstlisting}
  useEffect(() => {
    ...side effects...
    return () => {...cleanup...}
  }, [x, y, z...])
\end{lstlisting}
The first argument is an effectful function, while the second argument specifies
an array of \emph{dependencies}.
Whenever the value of one of these dependencies changes between renders,
React schedules the effect function to execute.
The effect function may optionally return a cleanup function that runs before the effect is
re-executed and cleans up the previous render's effect.
This enables programs to react to state updates and cause side effects in
response.

\paragraph{\textbf{An example: Moving Dot}}
The react model based on components and hooks provides flexibility but reasoning about renders with
side effects is difficult.
Effects depend on state changes that occur in previous renders,
and multiple pieces of state may interact indirectly through chains of effects.
As applications grow larger, understanding the temporal behavior of a
program becomes increasingly challenging.
To illustrate these issues, consider the following (simplified) program, written
using \kw{useEffect} to display a dot that follows the
location of the user's click, together with a checkbox controlling whether movement is enabled.
%

\begin{lstlisting}[mathescape = false]
export default function MovingDot() {
  const [position, setPosition] = useState({ x: 0, y: 0 });
  const [canMove, setCanMove] = useState(true);
  let handleClick = (e) => {
    setPosition(_ => ({ x: e.clientX, y: e.clientY }));  };
    useEffect(() => {
      if (canMove) { document.addEventListener("click", handleClick); }
      return () => document.removeEventListener("click", handleClick);
    }, [canMove]);
    return ( <div>
            <input type="checkbox" checked={canMove}
                   onChange={(e) => setCanMove(_ => e.target.checked)} />
            <div style={transform: `${position.x}, ${position.y}`} ... />
           </div> ); }
\end{lstlisting}

\emph{Lines 2--3: State Declarations.}
The component declares two state variables along with their setters:
\kw{position} holds the dot's current coordinates, initialized to \kw{\{x:0, y:0\}},
and
\kw{canMove} is a boolean flag, initialized to \kw{true}, that controls whether
a click should be allowed to move the dot.

\emph{Lines 4--6: Click Handler.}
The \kw{handleClick} function that takes a DOM event \kw{e}, calls
\kw{setPosition} in the updater form: the argument
\kw{\_ => \{x: e.clientX, y: e.clientY\}} ignores the previous position
and returns the click coordinates.
Because this is a setter call, the update is \emph{not} applied immediately;
it is queued and will take effect at the start of the next render.

\emph{Lines 7--10: Effect Block.}
\kw{useEffect} declares a side-effecting block with dependency list \kw{[canMove]}.
React runs this block after any render in which \kw{canMove} has changed
(and once on the initial mount).
Each time the block runs, it first executes the cleanup function
\kw{removeEventListener} returned
by the \emph{previous} invocation, if any,  to
remove the record of the old \kw{handleClick} before setting up a new one.
The body then conditionally re-records \kw{handleClick}: if \kw{canMove}
is \kw{true}, \kw{addEventListener} attaches \kw{handleClick} to the document's
click event; if \kw{canMove} is \kw{false}, no new listener is attached.

\emph{Lines 11--15: Rendered Output.}
%
The \kw{<input type="checkbox">} renders the toggle; its \kw{onChange}
handler calls \kw{setCanMove} with the checkbox's new checked value,
scheduling a \kw{canMove} update for the next render, which will in turn
trigger a re-run of the effect block.
The positioned \kw{<div>} renders the dot itself, placed at the coordinates
stored in \kw{position}.

The correctness of this example relies on several temporal assumptions:
\begin{itemize}
  \item[i.] When the user clicks the page, the registered listener calls
  \kw{setPosition}, but \kw{position} only updates on the \emph{next}
  render, not during the event.
  \item[ii.] When \kw{canMove} flips, the effect block runs first
  the cleanup function to remove the old listener; only after that the body decides whether to
  attach a new one. Forgetting the cleanup would correspond to a program logic
  mismatch and a potential memory leak.
  \item[iii.] The points in time when a click can move the dot are not
  determined by the code itself; it is determined by when the effect block
  last ran and what \kw{canMove}'s value was at that point.
\end{itemize}

%
This kind of effect block programming is a common idiom in modern \react,
and, unfortunately, also a common source of bugs.
To use it correctly, a programmer must track which renders re-run the effect,
which of those runs trigger the cleanup, and in what order the resulting side
effects occur relative to one another and to the surrounding renders. This is
infeasible in large applications.
%
\paragraph{\textbf{Introducing \lang}}
To address the challenges discussed above, \lang introduces a novel type-and-effect system
that tracks how state variables may change over renders.
We encode the timing semantics of \lang in this effect system, so the above list
of assumptions can be known at compile time.
The key idea is to represent potential state updates explicitly as effects
describing \emph{when} (i.e., after how many renders) an update
may occur.
For a variable \kw{x}, the base effect \stch{x} indicates that \kw{x}
\emph{may change}, and $\Next{1r}{\stch{x}}$ indicates that \kw{x} may have a
different value in exactly the next render.
We write $\cdot$ for the \emph{empty effect}, indicating that no state
update may occur.

As usual in effect systems, function types carry an effect.
We use the notation $\tyArrow{\tau_1}{\tau_2}[F]$ to describe
a function whose argument type is $\tau_1$, return type is $\tau_2$,
and whose effect is $F$.
Function effects are only triggered when a function is \emph{called}; until then
they are encapsulated in the type.
%
%
Setters are primitive to \lang and we can assign them a type based on their
timing behavior:
$\tySetter{x}$.
Setters have \Next{1r}{\stch{x}} as their effect to describe semantically
that setters are evaluated between renders, and in the render after a setter is
called its variable may have been altered by the closure given to the setter.
%
%
%
Also note that the inner function of type $\tau \to \tau$ has the empty effect
$\cdot$, since the setter of \kw{x} may only update \kw{x}.

Multiple effects can be composed using the sequencing $*$ operator.
Concretely, $F_1 * F_2$ denotes that \emph{both} effects $F_1$ and $F_2$ occur
in sequence.
For instance, consider the expression: \texttt{setX(f); setY(f)}.
The effect for this program would be written as
$\Next{1r}{\stch x} * \Next{1r}{\stch y}$.
Intuitively, this can be read as: in the next render, $x$ may have changed
\emph{and also} in the next render, $y$ may have changed.
%

Effects can be composed with the $+$ operator to handle branches:
$F_1 + F_2$ indicates that either $F_1$ or $F_2$ may happen.
For instance, consider
$\kw{if b then setX(f) else setY(f)}$.
The effect for this program is written as $\Next{1r}{\stch x} + \Next{1r}{\stch y}$,
since either $x$ or $y$ may change in the next render.
%

The constructs above describe \emph{synchronous} state changes: setter calls
that are queued during a render and flushed deterministically before the next
one begins, requiring no external trigger.
\emph{Asynchronous} events, like a DOM click, a timer expiry, or a network
response may arrive at any point in time, interleaved between renders, or not at all.
Their occurrence is contingent on the outside world, not on the program's
own execution.
To describe these, the effect language provides a second layer of
modalities indexed by \emph{event labels} rather than by render counts.
The two layers meet at effect of the form
$\always{e}(\Next{1r}{\stch x})$: contingent on an external event $e$
firing, $x$ may change in the render that follows.

%


To see how \lang{}'s type-and-effect system can help in practice let
us revisit the MovingDot example. In \lang we can write it as follows.
\begin{lstlisting}
comp MovingDot () : html {
  state position, setPosition default (0, 0);
  state canMove, setCanMove default true;
  let handleClick = $\lambda$e. setPosition($\lambda$_. (e.clientX, e.clientY));
  on canMove do {
    remove $\eventEff{click}{\#doc}$;
    if canMove then (bind $\eventEff{click}{\#doc}$ handleClick) else () };
  return ( ... /* $\text{same html as in React}$ */ ); }
\end{lstlisting}
The \lang implementation is similar to the one in \react we presented
before with a few differences.
State declarations specify the default initial value explicitly, and we use
$\lambda$-expressions in place of \kw{(e) => ...} closures.
The \kw{useEffect} block of \react is replaced by an \emph{on-block} of the form
\kw{on canMove do
\{...\}} which fires whenever its watched
variable \kw{canMove} changes.
An on-block in \lang executes \emph{cleanup at the beginning}, if
needed. In our example \kw{remove} is called first, and only after
that the rest of the body is executed. Notice that the conditional
registration is now the body of the on-block, not a check inside
\kw{handleClick}.


The argument \eventEff{click}{\#doc} is an \emph{event label}: a tagged
identifier for a class of external events.
The label has the form
$\eventEffBar{\ell}{v}$, where $\ell$ names the kind of event
(\kw{click}, \kw{timeout}, \kw{req}, \dots) and $\overline{v}$ is a tuple
of statically-known values identifying \emph{which} event of that kind
(here, the document node \kw{\#doc}; for a request, a URL).
Only statically-known values may appear in the tuple: dynamic values such
as a freshly-generated timer id cannot enter the effect layer, so events
like \eventEff{timeout}{} carry an empty tuple.
%

The type-and-effect system reveals both how user interactions influence program
state and how the lifecycle of the click subscription itself is managed.
The relevant effects that \lang infers for the MovingDot example are the following:
\begin{align*}
  \kw{bind}\;\eventEff{click}{\#doc}\;\kw{handleClick}
    \;&:\; \always{\eventEff{click}{\#doc}}\bigl(\Next{1r}{\stch{\kw{position}}}\bigr) \\
  \kw{remove}\;\eventEff{click}{\#doc}
    \;&:\; \remove{\eventEff{click}{\#doc}} \\
  \text{\kw{on canMove}~block body}
    \;&:\; \remove{\eventEff{click}{\#doc}} \;*\; \bigl(\always{\eventEff{click}{\#doc}}\bigl(\Next{1r}{\stch{\kw{position}}}\bigr) + (\cdot)\bigr) \\
  \kw{onChange}
    \;&:\; \Next{1r}{\stch{\kw{canMove}}}
\end{align*}


The construct \kw{bind} registers a
\emph{persistent} handler against an event label.
Every time the event
fires, the handler runs.
We describe this with the graded square modality
$\always{\eventEff{click}{\#doc}}(F_h)$, read ``every time
\eventEff{click}{\#doc} fires, $F_h$ happens.''
Here $F_h$ is the effect of the
handler body. In our example, \kw{handleClick} calls \kw{setPosition}, so $F_h =
\Next{1r}{\stch{\kw{position}}}$.
%
The construct \kw{remove} unregisters
all handlers for the named event. We describe this with the effect
$\remove{\eventEff{click}{\#doc}}$, read ``the registration for
\eventEff{click}{\#doc} no longer fires its handler.''
%
%
%
%
\lang's type-and-effect system encodes directly the timing assumptions we
mentioned earlier from \react.
\begin{enumerate}
  \item[i.] We see in the effect of click listener that it modifies
  \kw{position} on the next render because \Next{1r}\stch{\kw{position}}
  signifies that "in the next render position may change."
  \item[ii.] When \kw{canMove} changes we see that
  \remove{\eventEff{click}{\#doc}} happens before the new event handler is
  registered, so we know that stale event handlers are cleaned up.
  \item[iii.] We see in the effect of the on-block that the event handler is
  only sometimes registered
  $\bigl(\always{\eventEff{click}{\#doc}}\bigl(\Next{1r}{\stch{\kw{position}}}\bigr)
  + (\cdot)\bigr)$, which indicates that our handler is conditional.
\end{enumerate}
%
By exposing this structure, \lang helps programmers reason
about how reactive programs evolve.

%


\paragraph{\textbf{A second example: Debounce.}}
To illustrate one-time listeners in \lang{}, we consider a \kw{Debounce} component,
which takes a quickly-changing input \kw{value} and exposes a \kw{slow} value
that updates only when \kw{value} has not changed for $100$ms.
A debounce protects expensive downstream work from executing too frequently.
This can be implemented in \lang as follows:
\begin{lstlisting}
comp Debounce (value: any) : any {
  state slow, setSlow default value;
  on value do {
    clearTimeout();
    setTimeout($\lambda$_. setSlow($\lambda$_. value))
  };
  return slow;
}
\end{lstlisting}

We first initialize a \kw{slow} stateful variable, which will store our slowed
down value.
We bind an on-block to the fast \emph{value}. In this block,
we create a timeout (running for {100\kw{ms}) to update the \kw{slow} state, but first we cancel our previous timeout.
%

The builtins \kw{setTimeout} and \kw{clearTimeout} have types:
\begin{align*}
  \kw{setTimeout f}
    \;&:\; \eventually{\eventEff{timeout}{}}(F_f) \;*\; \Next{100\kw{ms}}{\eventEff{timeout}{}} \\
  \kw{clearTimeout()}
    \;&:\; \cancelEv{\eventEff{timeout}{}} \;*\; \remove{\eventEff{timeout}{}}
\end{align*}

Unlike \kw{bind}, \kw{setTimeout}
registers a \emph{one-shot} handler.
When \eventEff{timeout}{} fires the callback runs once
and the registration is
consumed.
We describe this behavior with the graded diamond\\
$\eventually{\eventEff{timeout}{}}(F)$, read ``contingent on
\eventEff{timeout}{} firing, $F$ happens once.''
Here $F$ is the effect of the
callback body: \kw{setSlow} produces $\Next{1r}{\stch{\kw{slow}}}$.
The accompanying $\Next{100\kw{ms}}{e}$ records that the event
\eventEff{timeout}{} itself fires after $100$ms of wall-clock time.
\lang's $\Next{}{}$ modality is indexed not only by renders
\kw{r}, but by any declared time unit --- renders, milliseconds, network
round-trips --- so the system can describe asynchronous schedules
alongside synchronous render counts.

For timeouts, a \eventEff{timeout}{} event is scheduled and
a handler is registered.
To clean up both, \kw{clearTimeout} produces cancel and remove effects
$\cancelEv{\eventEff{timeout}{}} * \remove{\eventEff{timeout}{}}$.
The cancel effect will prevent the \eventEff{timeout}{} event from firing, and
the remove effect clears all current event listeners to the event.
Putting it together, the cascading effect of \kw{value} in \kw{Debounce}
is
\[
  \cancelEv{\eventEff{timeout}{}} \;*\; \remove{\eventEff{timeout}{}} \;*\;
  \eventually{\eventEff{timeout}{}}\bigl(\Next{1r}{\stch{\kw{slow}}}\bigr) \;*\;
  \Next{100\kw{ms}}{\eventEff{timeout}{}}
\]
which reads: when \kw{value} changes (1) any pending \eventEff{timeout}{} is
cancelled and its handlers are unregistered (2) a fresh \eventEff{timeout}{} is
scheduled to fire after $100$ms (3) contingent on it firing, in the next render
\kw{slow} may change.

The payoff of the debounce example is that the protection survives
composition.
Suppose a parent component watches \kw{slow} from a
\kw{Debounce} child and issues a network request:
\begin{align*}
&\synSubComp{\text{slowInput}}{\kw{Debounce}(\text{input})} \\
&\synOn{\text{slowInput}}{\kw{fetch}(\text{``/api/search?q=''} + \text{slowInput}, \ldots)}
\end{align*}
\noindent Then the effect of \kw{input} carries the prefix
$\eventually{\eventEff{timeout}{}}(\Next{1r}{(\Next{1n}{F})})$,
so the network call $F$ is visibly behind a debounce timer.
If \kw{input} is changed by a frequent event (e.g.\ an \kw{onChange} on
an \kw{<input/>}), a parent or library author can see in the type that
the downstream \kw{fetch} is protected; if a debounce were missing, the
type would be missing the leading $\eventually{}$ and \lang's
post-hoc analyses could warn the programmer.

\paragraph{\textbf{Temporal Dependency Graphs}}
\react's \kw{useEffect} is intentionally an ``escape hatch'' from the
declarative paradigm (see David
Khourshid's ``Goodbye useEffect'' \cite{khourshid2022goodbye}).
For most of React we rely on predictable functional reactive
programming, and then carve out a section of our program where that
predictability no longer holds.
Synchronizing with external systems can have unexpected behavior, effects may
clash with sibling effects, and cleanup happens some unknowable amount of time
in the future.

\lang's type-and-effect system can be used to explore the \emph{temporal
dependency graph}. In this graph, nodes represent potential state changes,
and edges model delays between two nodes.
This way, the graph visualizes effects over \emph{multiple renders}, even though
our effects only capture dependencies over a single render.
By recursively expanding these effects over the graph, we can compute the \emph{full
effect} of any event.

%
%
Temporal dependency graphs can be used to automatically detect inter-render loops,
analyze the performance of event handlers and an app's initial render, and ensure stale event
handlers are not left behind.
Each of these analyses can be performed via off-the-shelf graph walking algorithms.
The simplest example that the effect system catches is a state-change loop, showcased in
the following example and the corresponding looping graph.
%
\begin{center}
\begin{minipage}[c]{0.56\linewidth}
\begin{lstlisting}
  comp MutualRecursion (clock: int) : unit {
    state x, setX default 0;
    state y, setY default 0;
    on x do { setY addOne };
    on y do { setX addOne };
    return ();
  }
\end{lstlisting}
\end{minipage}\hfill
\begin{minipage}[c]{0.40\linewidth}
\centering
$\stch{x} \xrightarrow{\;1r\;} \stch{y}
 \qquad
 \stch{y} \xrightarrow{\;1r\;} \stch{x}$

\medskip

\begin{tikzpicture}[>={Stealth[length=2mm]}, node distance=16mm,
    every node/.style={font=\small},
    nd/.style={draw, rounded corners, inner sep=4pt, minimum size=8mm}]
  \node[nd] (x) {$\stch{x}$};
  \node[nd, right=of x] (y) {$\stch{y}$};
  \draw[->, thick] (x) to[bend left=40] node[above] {$1r$} (y);
  \draw[->, thick] (y) to[bend left=40] node[below] {$1r$} (x);
\end{tikzpicture}
\end{minipage}
\end{center}
The effect of changing \kw{x} is $\Next{1r}{\stch{y}}$; and vice versa,
the effect for changing \kw{y} is $\Next{1r}{\stch{x}}$.
From the graph, we can infer that the full effect of modifying \kw{x} is
an infinite chain of modifications of \kw{x} and \kw{y}.
\lang can auto-detect this kind of error and warn the programmer at compile time.
%

\section{Formal Syntax}
\label{sec:syntax}

\begin{figure}[t]
\label{fig:syntax-of-willow}
\begin{tabbing}
  Declarations\quad\=$p$ \quad\= $::=$\quad\= \kill
  Components \>$C$   \> $::=$  \> \synComponent{A}{x_1 : \tau_1, x_2 : \tau_2, \ldots, x_n : \tau_n}{\tau}{p_1 \; p_2 \; \ldots \; p_n \; \synReturn{x}}\\[4pt]
  Declarations \>$p$ \> $::=$  \> \synCompLet{x}{e}
                    $\mid$ \synState{x}{e}
                    \\\>\>\>   \synOn{x_1, x_2, \ldots, x_n}{e}
                    $\mid$ \synSubComp{x}{A(x_1, x_2, \ldots, x_n)}\\[4pt]
  Expressions \>$e$  \> $::=$  \> $x$
                    $\mid$ $c \in \{\m{true}, \m{false}, ()\}$
                    $\mid$ $e_1\;e_2$
                    $\mid$ $e_1;\;e_2$
                    $\mid$ \synFn{x}{e}
                    \\\>\>\> \synIf{x}{e_1}{e_2}
                    $\mid$ $(e_1,e_2)$
                    $\mid$ $\kw{fst}\;e$
                    $\mid$ $\kw{snd}\;e$
                    \\\>\>\> $\kw{bind}\;\eventEffBar{\ell}{v}\;e$
                    $\mid$ $\kw{once}\;\eventEffBar{\ell}{v}\;e$
                    $\mid$ $\kw{cancel}\;\eventEffBar{\ell}{v}$
                    $\mid$ $\kw{remove}\;\eventEffBar{\ell}{v}$
\end{tabbing}
\caption{Syntax of \lang Programs}
\label{fig:syntax}
\Description{Syntax}
\end{figure}

\lang's syntax is inspired by \react with a few simplifications.
Like \react, \lang programs are composed of components.
The last declared component is considered as the main toplevel component of the
program.
Term $C$ denotes a component named $A$ that takes a sequence of arguments
(similar to a function), and contains a series of declarations, terminated by
return of a variable.

Declarations $p$ have four possibilities: a simple let declaration
$\synCompLet{x}{e}$ binds variable $x$ to the value of $e$.
A state declaration binds a state variable $x$ and its setter $\mi{setX}$, and
is required to include a default initial value in the form of an expression.
Effect blocks take a list of variables that are monitored such that when any of
them change value compared to the previous render, the effect block's expression
$e$ is executed.
Finally, we also support \emph{subcomponents} that enable a programmer to
use any of the previously declared components.
The declaration $\synSubComp{x}{A(x_1, x_2, \ldots, x_n)}$ executes component
$A$ on arguments $x_1, \ldots x_n$ and binds the returned value to $x$.

Expressions in \lang appear in let, state, and effect block declarations.
Most of the expressions are standard, namely variables, constants,
function applications, and lambda-expressions.
If-expressions are written as $\synIf{x}{e_1}{e_2}$ which branches on the value of $x$.
We support pairs of the form $(e_1, e_2)$ which can be projected out using
\kw{fst} and \kw{snd} respectively.
%
Finally, \lang provides primitives for interacting with the event layer.
The expressions $\kw{bind}\;\eventEffBar{\ell}{v}\;e$ and $\kw{once}\;\eventEffBar{\ell}{v}\;e$
register the closure $e$ as a handler for the event $\eventEffBar{\ell}{v}$:
\kw{bind} installs a persistent listener that runs on every firing,
while \kw{once} installs a one-shot listener that runs at most once.
The expressions $\kw{cancel}\;\eventEffBar{\ell}{v}$ and $\kw{remove}\;\eventEffBar{\ell}{v}$
tear down event work: \kw{cancel} suppresses a single pending firing
of $\eventEffBar{\ell}{v}$, while \kw{remove} unregisters every listener
attached to $\eventEffBar{\ell}{v}$.


%

\section{Semantics}\label{sec:semantics}

We have created a formal render-based semantics for \lang inspired by \react and
many other frontend \js libraries/frameworks.
Render-based semantics are generally used in the context of frontend user
interface coding, so we discuss our semantics in terms relevant to
this paradigm.

Render-based semantics work by running a program's code every time a piece of
program state changes.
Whatever the program returns is considered the output for that render.
In frontend web development the output of a program is \html.
In many frameworks there is a virtual copy of the \html visible to the user
called the "Virtual DOM."
The returned \html is diffed with this Virtual DOM to find the exact changes
that need to be made to the visible \html, and these changes are made in one
atomic step.
To keep \lang environment-agnostic we don't formalize what happens with
the output of a \lang program.
We assume that there is a higher runtime that \lang exists within and it
handles the returned outputs on each render.

Real-world render-based frameworks make heavy use of caching and memoization in
order to improve performance.
We ignore these optimizations because pieces of software such as React Compiler \cite{react_compiler}
have substantially reduced the need to introduce these optimizations manually.

The declarative render-based programming style revolves around the existence of
an event queue.
For example in the \js runtime, to initiate a network request a programmer
calls \texttt{fetch} with a URL and request information and attaches a closure
via \texttt{.then} to run when the response arrives.
The in-flight request is handled outside \js by the host environment, and when
it completes the closure is queued to run on the next turn of the event loop.
%
The resulting execution order, with microtasks interleaved against the task
queue and \react's own priority-lane scheduler on top, can be as difficult to
predict as understanding any given \js runtime.
We simplify these render-based semantics with a more concrete execution order.
%

In our formal semantics, \lang's internal event queue holds closures associated
with setter calls; event interactions introduced by the event
primitives \kw{bind}, \kw{once}, \kw{cancel}, \kw{remove}; and events being
fired \eventEffBar{\ell}{v}.
%
%
%
\lang's queue is therefore concerned only with the events its primitives
introduce.
Events from the external environment move into \lang's event queue via the
"external scheduler" which is a queue of events that \lang flushes between
renders.

There are two phases to \lang program execution:
(1) The render phase, in which programmer-written code executes and the event queue is built,
(2) The housekeeping phase, in which the scheduler and event queue are flushed and program
state is updated.
These two phases are looped over and over as new inputs arrive from the external
environment.

%
%
%

%



\subsection{Semantics Derivations}
In all the semantic derivations there is a program signature $\Sigma$ that does
not change throughout program execution.
$\Sigma$ is a mapping from component names to a tuple of the full component's
code, its effect environment $\Delta$, and type environment $\Gamma$ which we
discuss in \S~\ref{sec:types}.
%
%
$$A: (\synComponent{A}{\overline{x:\tau}}{\tau_r}{p}, \Delta, \Gamma)$$

\emph{Component judgment.}
\semDerivTopStep
  {\mathcal{L}}
  {V_e}
  {V_s}
  {\mathcal{C}}
  {\synComponent{C}{\overline{x = v}}{\tau}{p}}
  {s}
  {U}
takes one step of component evaluation.
Here $\mathcal{L}$: listener map from event labels to registered listeners
tagged $\always{}$ persistent or $\eventually{}$ one-shot, $V_e$: snapshot of
the value environment from  the end of the previous render, $V_s$:
state-variable environment seeded from $V_e$ and mutated by setter calls
during housekeeping, $\mathcal{C}$: cancellation multiset with one entry per
pending \kw{cancel} suppressing the next dispatch of the named event, \kw s: the
execution status either \kw{rendered} if the event queue is not finished
processing or \kw{waiting} if housekeeping is over, and $U$: \lang's event
queue of setter closures and event-primitive operations awaiting housekeeping.


\emph{Declaration judgment.}
\semDerivDecl{\Sigma}{V_e}{V_s}{V}{c}{p}{v}{U}{V'} \;
evaluates declarations $p$ at component path $c$ under working
environment $V$, threading $V_e$ and $V_s$
through unchanged.
It produces the returned value $v$, the update queue $U$
accumulated as a side effect, and the final value environment $V'$
witnessed at return.
Here we introduce $V$: working environment of variable values in scope during
the current render, $c$: dotted component path used as the key for setter
dispatch, $V'$: value environment produced after all declarations in $p$ are
evaluated.

\emph{Expression judgment.}
\semDerivExpr{V}{e}{v}{U}
evaluates expression $e$ under working environment $V$ to value $v$, producing
update queue $U$ as a side effect.

\emph{Effect vs.\ state environments.}
The pair $V_e$, $V_s$ carries values between renders.
During the render phase they are read-only and during the housekeeping phase they are
modified.
$V_s$ is read to evaluate the \kw{state} declaration, the state's value is taken
from $V_s$ and bound with its name in $V$.
\lang writes to $V_s$ while flushing setters from the event queue during
housekeeping.
$V_e$ is a frozen snapshot of the value environment taken just before
housekeeping, when choosing whether to execute an effect block the bound
variables are checked for differences between $V_e$ and $V$.
$V_s$ begins each housekeeping phase equal to $V_e$ and is mutated as setter
calls are popped from the update queue.


\emph{Component paths.}
Setter evaluation happens between renders, so each setter must be paired with
the state variable it targets.
However, because \lang allows components to be used multiple times as
subcomponents, we need to be able to address each piece of state with a globally
unique name.
The component path $c$ provides this unique identifier.
We write $c.x$ for the path obtained by appending $x$ to $c$, with
``\texttt{.}'' as separator and no leading dot.
At the top level the path is empty and $c.x = x$, while inside a subcomponent
bound to variable $y$ in the parent component the path is $c'.y$ where $c'$ is
the parent's path and $c.x = c'.y.x$.
The path attached to each setter is the key for that state variable in $V_s$ and
$V_e$ at the top level, so setter dispatch is exact-key lookup.

\subsection{Component}

%
The component-level rules drive (1) the housekeeping phase in which
the event queue is flushed, and (2) the transition between renders
in which fresh arguments arrive and a new render is executed.
We show the most interesting rules in this section collected in
Figure~\ref{fig:sem-component} and leave the rest to the
appendix. 
Rules at this level only deal with executing the main component of the program.
We evaluate subcomponents at the declaration level of the semantics.
We make this choice so that all variable names are relative to the top main
component in this rules level, which makes handling state variables more
straightforward.

We start with showing how setters are evaluated between renders.
When a setter is called at the expression level, it does not immediately change
the value of its corresponding state variable but instead is entered into the
event queue to be evaluated during housekeeping.
In \rulename{flush first update} we pop off a setter call event from the queue
and execute its closure.

Before evaluating its closure we first remove all setters from the environment,
because setters are not allowed to have any effects.
The value returned from the closure is used as the new value for the $y$ state,
which is achieved by updating $y$ in $V_s$.
If $v \neq v'$ then $y$'s value will differ between $V_s$ and $V_e$, so for the next render $y$
will trigger the effects it is bound to.




\begin{figure}[t]
\begin{center}
\Rule{flush first update}
  {
    \semDerivTopStep
      {\mathcal{L}}
      {V_e}
      {V_s, y=v}
      {\mathcal{C}}
      {\synComponent{C}{\overline{x = v}}{\tau}{p}}
      {rendered}
      {setter_y(\Closure{y}{e}{V}),U'}
    \\\\
    V' = removeSetters(V) \\
    \semDerivExpr{V',y=v}{e}{v'}{\cdot}
  }{
    \semDerivTopStep
      {\mathcal{L}}
      {V_e}
      {V_s, y=v'}
      {\mathcal{C}}
      {\synComponent{C}{\overline{x = v}}{\tau}{p}}
      {rendered}
      {U'}
  }
  \\[10pt]
\Rule{flush external events}
  {
    \semDerivTopStep
      {\mathcal{L}}{V_e}{V_s}{\mathcal{C}}
      {\synComponent{C}{\overline{x = v}}{\tau}{p}}
      {rendered}{U}
    \\
    \mathcal{E} \;\Rightarrow_{\mathsf{ext}}\; U_{\mathcal{E}}
  }{
    \semDerivTopStep
      {\mathcal{L}}{V_e}{V_s}{\mathcal{C}}
      {\synComponent{C}{\overline{x = v}}{\tau}{p}}
      {rendered}{U_{\mathcal{E}},\,U}
  }
  \\[10pt]
\Rule{pop listen}
  {
    \semDerivTopStep
      {\mathcal{L}}{V_e}{V_s}{\mathcal{C}}
      {\synComponent{C}{\overline{x = v_1}}{\tau}{p}}
      {rendered}{\mathsf{listen}(\ell\langle\overline{v_2}\rangle,\, c,\, m),\, U'}
    \\\\
    m \in \{\eventually{}, \always{}\}
    \\
    \mathcal{L}' = \mathcal{L}[\ell\langle\overline{v_2}\rangle \mapsto \mathcal{L}(\ell\langle\overline{v_2}\rangle) \cup \{(c, m)\}]
  }{
    \semDerivTopStep
      {\mathcal{L}'}{V_e}{V_s}{\mathcal{C}}
      {\synComponent{C}{\overline{x = v_1}}{\tau}{p}}
      {rendered}{U'}
  }
  \\[10pt]
\Rule{pop event}
  {
    \semDerivTopStep
      {\mathcal{L}}{V_e}{V_s}{\mathcal{C}}
      {\synComponent{C}{\overline{x = v}}{\tau}{p}}
      {rendered}{(\ell\langle\overline{v_1}\rangle,\, v_2),\, U'}
    \\\\
    \mathcal{C}(\ell\langle\overline{v_1}\rangle) = 0
    \\
    \mathcal{L}(\ell\langle\overline{v_1}\rangle)
      = \{(\Closure{x_1}{e_1}{V_1}, m_1), \ldots, (\Closure{x_k}{e_k}{V_k}, m_k)\}
    \\
    \overline{\semDerivExpr{V_i,\; x_i = v_2}{e_i}{()}{U_i}}
    \and
    \overline{\listenersafe{U_i}}
    \\
    \mathcal{L}' = \mathcal{L}\bigl[\ell\langle\overline{v_1}\rangle \mapsto
      \{(\Closure{x_i}{e_i}{V_i}, m_i, F_i) \mid m_i = \always{}\}\bigr]
  }{
    \semDerivTopStep
      {\mathcal{L}'}{V_e}{V_s}{\mathcal{C}}
      {\synComponent{C}{\overline{x = v}}{\tau}{p}}
      {rendered}{U',\, U_1, \ldots, U_k}
  }
  \\[10pt]
\Rule{waiting to rendered}
  {
    \semDerivTopStep
      {\mathcal{L}}
      {V_e}
      {V_s}
      {\mathcal{C}}
      {\synComponent{C}{\overline{x = v}}{\tau}{p}}
      {waiting}
      {\cdot}
    \\\\
    \semDerivDecl{\Sigma}{V_e}{V_s}{\overline{x = v'}}{""}{p}{v_r}{U'}{V'} \\
    \exists i \;s.t. \; v_i \neq v'_i \\
  }{
    \semDerivTopStep
      {\mathcal{L}}
      {V'}
      {V'}
      {\mathcal{C}}
      {\synComponent{C}{\overline{x = v'}}{\tau}{p}}
      {rendered}
      {U'}
  }
\end{center}
\caption{
  Selected component-level semantics rules
}
\label{fig:sem-component}
\Description{Selected component-level operational semantics rules for \lang.}
\end{figure}


The \rulename{flush external events} rule shows how event-effects are flushed
from the external scheduler and enter \lang's event queue.
We take $\mathcal{E}$ as a given object \lang has no control over.
$\mathcal E$ is allowed to fill with labelled events such as mouse clicks,
network request successes or failures, etc, and \lang handles these once a
render at the start of housekeeping.
Here $U_{\mathcal{E}}$ is a series of pairs $(\ell\langle\overline{v_1}\rangle,\, v_2)$
of labelled events with the values they carry.
An example could be $(\kw{click[doc]}, (100, 200))$ for a document click at
position x=100 y=200.


The \rulename{pop listen} rule shows how event listeners are added.
The \kw{bind} and \kw{once} expressions do not instantly bind a closure to an
event, but instead add a
$\mathsf{listen}(\ell\langle\overline{v_2}\rangle,\,c,\, m)$ event to the queue.
The elements of the listen event's tuple are $\ell\langle\overline{v_2}\rangle$
the event to be listened to, $c$ the closure to execute on fire, and $m$ the
modality either \always{} or \eventually{} depending on if the listener was
created with \kw{bind} or \kw{once}.
The closure and modality are then added to the ordered set of event handlers


The \rulename{pop event} rule shows how an event handler is executed in response
to an event firing $(\ell\langle\overline{v_1}\rangle,\, v_2)$.
First, we require that the event has not been cancelled so we enforce that the
cancellation multiset does not contain any instances of
$\ell\langle\overline{v_1}\rangle$.
Next we retrieve all listeners of the event
$\overline {(\Closure{x_1}{e_1}{V_1}, m_1)}$
from the listener mapping
$\mathcal L$.
The closures are all executed and given $v_2$ as their argument, these are
allowed to have any side effect.
We enforce that the queues are "listener safe" which means that they are not
allowed to output a new labelled event $\ell'\langle\overline{v'}\rangle$ in
their event queue.
The event queues produced by executing the listeners are appended to the
program event queue to be evaluated.
Finally, only listeners marked with the \always{} modality are kept in the
listener mapping.


Once the event queue is fully flushed the state transitions from
\kw{rendered} to \kw{waiting} where \lang waits for new top-level arguments.
Now we review where a render occurs.
In \rulename{waiting to rendered}, \lang receives new top-level arguments from
the external program.
To write this in the semantics we require that the previous render had arguments
$v$ and the next render will have arguments $v'$ where at least one argument
differs.
Next we can evaluate the main component's declarations with the declaration
evaluation judgement, receive a new event queue to be processed, and start
housekeeping over again.
%
%
This rule is where we see the value environment from the end of the previous
render $V'$ become the initial value for $V_e$ and $V_s$.


\subsection{Declarations}

Declarations evaluate under the working environment $V$.
At this layer we read from the effect and state environments $V_e$ and $V_s$
and accumulate an update queue as a side effect.
We review re-render behavior for space purposes because the first
render behavior is straightforward; the rules appear in
Figure~\ref{fig:sem-decl}.

In \rulename{state rerender}, a re-rendered state variable declaration's
\kw{default} expression is ignored.
The \kw{default} expression is used on the first render to bootstrap an initial
value, but now we use the value in $V_s$ that has been modified by setter calls
during housekeeping.
To $V$ we add $x=v_0$ from $V_s$, and a setter value with label $c.x$.
Setters appear to act like normal functions in the type system, but in the
semantics they are a dummy value that holds the global path to its state.
This is the bridge between renders: setters dispatched during the previous
housekeeping have already updated $V_s$, so the next render sees their
effects without re-running initialization.

\begin{figure}[t]
\begin{center}
\Rule{state rerender}
  {
    \semDerivDecl{\Sigma}{V_e}{V_s}{V, x=v_0, \setter{x}=setter_{c.x}}{c}{p}{v}{U}{V'} \\
    x=v_0 \in V_s, x=v_e \in V_e
  }
  {
    \semDerivDecl
      {\Sigma}{V_e}{V_s}{V}
      {c}
      {\synState{x}{e}
        p}
      {v}{U}{V'}
  }
  \\[10pt]
\Rule{effect rerender, no changes}
  {
    \semDerivDecl{\Sigma}{V_e}{V_s}{V}{c}{p}{v}{U}{V'} \\
    \overline{x=v} \in V_e \\
    \overline{x=v'} \in V \\
    \forall i, v_i = v_i' \\
  }
  {
    \semDerivDecl
      {\Sigma}{V_e}{V_s}{V}
      {c}
      {\synOn{x_1, x_2,\ldots,x_n}{e}
        p}
      {v}{U}{V'}
  }
  \\[10pt]
\Rule{effect rerender, yes changes}
  {
    \semDerivDecl{\Sigma}{V_e}{V_s}{V}{c}{p}{v}{U'}{V'} \\
    \semDerivExpr{V}{e}{v_e}{U} \\
    \overline{x=v} \in V_e \\
    \overline{x=v'} \in V \\
    \exists i, v_i \ne v_i' \\
  }
  {
    \semDerivDecl
      {\Sigma}{V_e}{V_s}{V}
      {c}
      {\synOn{x_1, x_2,\ldots,x_n}{e}
        p}
      {v}{U,\, U'}{V'}
  }
  \\[10pt]
\Rule{subcomp}{
    \semDerivDecl{\Sigma}{V_e'}{V_s'}{\overline{x'=v}}{c.y}{p_A}{v_A}{U_y}{V_y'} \\
    \semDerivDecl{\Sigma}{V_e}{V_s}{V, y= v_A}{c}{p}{v}{U}{V'} \\
    V_e' = \text{removePrefix}("y.",\text{startsWith}("y.",V_e)) \\
    V_s' = \text{removePrefix}("y.",\text{startsWith}("y.",V_s)) \\
    V_y'' = \text{addPrefix}("y.", V_y') \\
    \overline{x=v} \in V \\
    A: (\synComponent{A}{\overline{x'}}{\tau}{p_A}, \Delta, \Gamma) \in \Sigma\\
  }{
    \semDerivDecl
      {\Sigma}
      {V_e}{V_s}{V}{c}
      {\synSubComp{y}{A(\overline{x})}p}
      {v}{U_y, U}{V'\cup V_y''}
  }
\end{center}
\caption{Re-render declaration-level semantics rules.}
\label{fig:sem-decl}
\Description{Declaration-level operational semantics rules for re-rendering in \lang.}
\end{figure}


Next we see how effect blocks are executed.
This is the only declaration that's allowed to perform side-effectful functions
in \lang.
Here we check for differing values of the watched variables $\overline{x}$
between $V_e$ and $V$.
If there are no differences then we are in the \rulename{effect rerender, no
changes} case and we skip the declaration.
If at least one variable has changed then we are in \rulename{effect rerender,
yes changes} and move to evaluate the effect block's body
\semDerivExpr{V}{e}{v_e}{U}.
This expression can have a side effect, which we add to the event queue to be
processed in housekeeping, the return value of this expression is discarded.


The subcomponent accepts the relevant subsets of $V_e$ and $V_s$ from its
parent.
The subcomponent returns a value, an update queue, and its final value
environment, which need to be prefixed with the parent's path.
Reading \rulename{subcomp} top to bottom:
\begin{enumerate}
  \item Retrieve the subcomponent's code $p_A$ from $\Sigma$.
  \item Calculate $V_e'$ and $V_s'$ by restricting to entries whose keys begin with
  ``$y.$'' and strip the prefix.
  \item Evaluate $p_A$ under $V_e'$, $V_s'$, and a working
  environment of the subcomponent's arguments.
  \item Re-prefix the final value environment $V_y'$ with ``$y.$'',
  yielding $V_y''$.
  \item Bind $y$ to the subcomponent's return $v_A$, then
  evaluate the remaining parent declarations.
  \item Place the subcomponent's queue $U_y$ before the parent's queue
  $U$ and union $V_y''$ into $V'$ so that the subcomponent's state
  is stored for the next render.
\end{enumerate}


\subsection{Expression} \;

Most of the expression semantics of \lang are unchanged from a normal
expression semantics of any other language.
\lang expressions create an update queue as a side effect which is ordered in
execution order the same as the update queue for declarations.
Take for example the \rulename{seq} rule in Figure~\ref{fig:sem-expr}.
\begin{figure}[t]
\begin{center}
\Rule{seq}
  {
    \semDerivExpr{V}{e_1}{v_1}{U_1} \\
    \semDerivExpr{V}{e_2}{v_2}{U_2} \\
  }
  {
    \semDerivExpr{V}{e_1\;;\;e_2}{v_2}{U_1, U_2}
  }
\Rule{function app}
  {
    \semDerivExpr{V}{e_1}{\Closure{x}{e_3}{V'}}{U_1} \\
    \semDerivExpr{V}{e_2}{v_2}{U_2} \\
    \semDerivExpr{V', x: v_2}{e_3}{v_3}{U_3} \\
  }
  {
    \semDerivExpr{V}{e_1\;e_2}{v_3}{U_1, U_2, U_3}
  }
\Rule{setter app}
  {
    \semDerivExpr{V}{e_1}{setter_x}{U_1} \\
    \semDerivExpr{V}{e_2}{\Closure{y}{e_3}{V'}}{U_2}
  }
  {
    \semDerivExpr{V}{e_1\;e_2}{()}{U_1, U_2, setter_x(\Closure{y}{e_3}{V'})}
  }
\Rule{bind}
  {
    \semDerivExpr{V}{e}{\Closure{x}{e'}{V'}}{U}
  }
  {
    \semDerivExpr{V}{\kw{bind}\;\ell\langle\overline{v}\rangle\;e}{()}
      {U,\, \mathsf{listen}(\ell\langle\overline{v}\rangle,\, \Closure{x}{e'}{V'},\, \always{})}
  }
\Rule{cancel}
  {
  }
  {
    \semDerivExpr{V}{\kw{cancel}\;\ell\langle\overline{v}\rangle}{()}
      {\mathsf{cancel}(\ell\langle\overline{v}\rangle)}
  }
\Rule{remove}
  {
  }
  {
    \semDerivExpr{V}{\kw{remove}\;\ell\langle\overline{v}\rangle}{()}
      {\mathsf{remove}(\ell\langle\overline{v}\rangle)}
  }
\end{center}
\caption{Selected expression-level semantics rules.}
\label{fig:sem-expr}
\Description{Selected expression-level operational semantics rules for \lang.}
\end{figure}

The first and second expressions are evaluated, they return values, and produce
update queues $U_1$ and $U_2$ as a side effect.
In this particular rule we discard $v_1$, return $v_2$, and as a side effect
produce the queue $U_1, U_2$ because the events in $U_1$ happened first.

The \rulename{function app} rule is normal with added event queue handling.
When a function application is a normal closure value then the closure is
evaluated immediately and its effect is placed in the event queue.
In the \rulename{setter app} rule we see how the setter dummy value is used.
This rule requires its argument to be a closure value, and places the closure
onto the event queue rather than executing it right now.
As we saw earlier this closure will be used to modify the value of $x$ between
renders.


Finally we see the \rulename{bind}, \rulename{cancel}, and \rulename{remove}
rules; \rulename{once} is exactly like bind but outputs an \eventually{} label
instead of an \always{}.
These each have their own corresponding event that they place on the queue.
We reviewed the listen event earlier, and review the cancel and remove events in
the appendix.
\kw{bind} and \kw{once} both evaluate their argument expression to a closure and
emit a listen event tagged with their corresponding modality.
\kw{cancel} and \kw{remove} take no expression argument and unconditionally emit
their corresponding queue item.

\section{Type-and-Effect System}
\label{sec:types}
\begin{figure}[t]
\begin{tabbing}
  Event Labels\quad\=$\eventEffBar{\ell}{v}$ \quad\= $::=$\quad\= \kill
  Types \>$\tau$  \> $::=$  \> $\tyUnit$
                    $\mid$ $\tyBool$
                    $\mid$ $\tau_1 \times \tau_2$
                    $\mid$ $(\tyArrow{\tau_1}{\tau_2}[F])$\\[4pt]
  Effects \>$F$  \> $::=$  \> $\cdot$
                    $\mid$ $\stch{x}$
                    $\mid$ $\eventEffBar{\ell}{v}$
                    $\mid$ $\Next{Nu}{F}$
                    $\mid$ $F_1 * F_2$
                    $\mid$ $F_1 + F_2$
                    \\\>\>\> $\eventually{\eventEffBar{\ell}{v}}(F)$
                    $\mid$ $\always{\eventEffBar{\ell}{v}}(F)$
                    $\mid$ $\cancelEv{\eventEffBar{\ell}{v}}$
                    $\mid$ $\remove{\eventEffBar{\ell}{v}}$\\[4pt]
  Event Labels \>$\eventEffBar{\ell}{v}$  \> $::=$  \> $\ell\langle \overline{v} \rangle$ \quad where $\overline{v}$ is a tuple of base values\\[4pt]
  Time Unit \>$u$  \> $::=$  \> $r \mid n \mid \kw{ms} \mid \ldots$
\end{tabbing}
\caption{Grammar of \lang Types and Effects}
\label{fig:types-effects}
\Description{Grammar of Types and Effects}
\end{figure}
\lang's type-and-effect system answers \emph{when}-questions about a
render-based program: when can a state change cascade into another, when
are event-handlers registered and removed, when does a feedback loop appear, and
which effects fire on the first render.

\lang has two base effects, the state change effect \stch x and the event effect
\eventEffBar{\ell}{v}.
The state change effect signifies that a state's setter function is called
so its variable $x$ may change.
An event effect corresponds to an event in the environment \lang is embedded
in.
On the web this can be a keypress, a mouse movement, a network request.

The primary vehicle of timing analysis is the temporal next modality
$\Next{N}{}$ whose superscript $N$ counts time units, letting effects describe
not only \emph{what} may happen but \emph{when}.
Units include renders ($r$), network requests ($n$), wall-clock
milliseconds ($\m{ms}$), and any unit the host environment cares to
introduce.
The syntax for effects is based upon the next operator in temporal
logic~\cite{Pnueli77SFCS}.

The event handler modalities \eventually{\eventEffBar{\ell}{v}}~"eventually" and
\always{\eventEffBar{\ell}{v}}~"always" describe when an event handler will be
registered.
The superscripts given to \textit{eventually} and \textit{always} are the name
of the event being bound to: a click, a key press, a promise's resolution or error.
The modalities \cancelEv{\eventEffBar{\ell}{v}}~"cancel" and
\remove{\eventEffBar{\ell}{v}}~"remove" balance with these to ensure no stale
event handlers remain after each render.
\lang's type-and-effect system creates a unified abstraction over both event
handlers and pending events via these modalities.

Effects move between two positions in \lang.
The arrow type in \lang $(\tyArrow{\tau_1}{\tau_2}[F])$ is extended by a latent
effect annotation F.
This annotation is the effect that may occur when the arrow's closure is
executed.
The typing of an expression \tyDerivExpr{\Gamma}{e}{\tau}{F}
reads ``under $\Gamma$, expression $e$ has type $\tau$ and produces effect $F$.''
We say that when an expression is wrapped in a closure as a function body then
the latent effect of the expression is captured into the type system, and when a
function is called then its effect is released.
%

Sequential effects are denoted with $*$ and branches in programs are denoted
with $+$.
The LHS and RHS of a sequencing operator denote which effect will be put on
\lang's event queue first (in $F_1 * F_2$, $F_1$ happens first).
This means that in general effects are not commutative across $*$.

Effects are introduced to the \lang type-and-effect system through type
annotations on built-in functions.
Most syntax constructs combine the effects of their children using the
sequencing and branching constructors.
The built-ins that introduce effects are setters, network calls, and the
event-layer primitives \kw{bind}, \kw{once}, \kw{cancel}, and
\kw{remove}.
%

\begin{center}
  \ensuremath{
  \inferrule*[left=(app)]{
    \tyDerivExpr{\Gamma}{e_1}{(\tyArrow{\tau_1}{\tau_2}[F])}{F_1}\\
    \tyDerivExpr{\Gamma}{e_2}{\tau_1}{F_2}
  }{
    \tyDerivExpr{\Gamma}{e_1\;e_2}{\tau_2}{F_1 * F_2 * F}
  }
  \inferrule*[left=(fn)]{
    \tyDerivExpr{\Gamma, x :\tau_1}{e}{\tau_2}{F}
  }{
    \tyDerivExpr{\Gamma}{\synFn{x}{e}}{(\tyArrow{\tau_1}{\tau_2}[F])}{\cdot}
  }}
\end{center}
Both \rulename{app} and \rulename{fn} are the typical function introduction
and application rules, except extended with our effect system.
The application's effect combines three pieces via the sequencing combinator
$*$: $F_1$ from evaluating $e_1$, $F_2$ from evaluating $e_2$, and the closure's
latent effect $F$.
In the function introduction rule \rulename{fn} we see how the effect of the
function body moves from the latent position in its typing judgement to the
latent position in the arrow type.
A closure expression has no effect on its own so the closure expression has the
empty effect.

Time units appear as a superscript on the next modality $\Next{}{}$ and
only identical units collapse, so
$\Next{1r}{(\Next{1r}{F})}$ can be written as $\Next{2r}{F}$.
We deliberately keep units in \lang incompatible: a network request is more taxing
than one extra render, and some milliseconds of delay matter for perceived
responsiveness but have no performance implication.
Nesting modalities sequences in time: $\Next{1r}(\Next{1n}{F})$ kicks off a
network request during the next render and $F$ happens after it returns.
Sequencing and branching place each child effect relative to a
shared ``zero time'': in $(\Next{1r}{F_1} * \Next{1n}{F_2})$,
$F_1$ runs whenever the next render occurs, and $F_2$
runs once a network request resolves.
$(\Next{1r}{F_1} * \Next{1r}{F_2})$ can be simplified to $\Next{1r}{(F_1 * F_2)}$.
New time units are introduced by built-in functions.
In a real-world scenario programmers could type-alias their own units to make
library \api{}s read better.

Below we give the type for a possible function \kw{asyncCompute}, a function
that takes two closures: one to run on success and one on failure.
The function returns unit immediately and after one unit of computation time,
will either succeed or fail.

\noindent\begin{minipage}{\linewidth}
\begin{lstlisting}[numbers=none]
asyncCompute : $\forall F1, F2$. ($\tyArrow{\kw{int}}{\tyUnit}[F1]$) -> ($\tyArrow{\kw{int}}{\tyUnit}[F2]$) -> unit
             | $\Next{1u}$(comp[suc] + comp[err])
             * $\eventually{\text{comp[suc]}}$($F1$ * $\remove{\text{comp[err]}}$) * $\eventually{\text{comp[err]}}$($F2$ * $\remove{\text{comp[suc]}}$)
\end{lstlisting}
\end{minipage}
%
%
%
The effect of \kw{asyncCompute} shows our unknown unit of computation as "1u" in
the superscript of a next modality.
After one unit of time we can see that either the \kw{comp[suc]} event or the \kw{comp[err]}
event is promised to fire.
At the time of function call, two single-shot event handler registrations are
put onto the event queue.
They are bound to the \texttt{comp[suc]} success or \texttt{comp[err]} error events.
On success the \texttt{comp[suc]} event will fire so the first argument will be
called, its effect F1 will happen, and all handlers for \texttt{comp[err]} will
be removed.
The reverse goes for the \kw{comp[err]} event.
Through analysis of \lang's effect for \texttt{asyncCompute} we can see
(1) when each event will occur, (2) every event is handled, and (3) all handlers
are cleaned up.
%
%

\emph{Immediate, cascading, and full effects.}
\lang distinguishes three kinds of effect.
The \emph{immediate} effect of a function is its own latent side effect.

A variable's \emph{cascading} effect describes what watching effect blocks
will do when that variable changes.
For example, $\synOn{x}{\setter{y}\;\m{addOne}}$ contributes $\Next{1r}{\stch y}$
to the cascading effect of $x$.
When $x$ changes the block fires (during the next render), the setter enqueues a
closure, and one render later $y$ may change.
Cascading effects live in $\Delta$ alongside the variable's dependencies,
written $\deltaEntry{x}{\m{deps}}{\m{cascade}}$.
State variables have no dependencies; let-bound variables do.
To determine a variable's dependencies we parameterize \lang over a
given dataflow function \df.
\df takes an expression as input, and returns a set of variable names that may
alter the value of the expression when they change.
The simplest instantiation of \df is just the set of free variables.

The \emph{full} effect of a variable or function collects every effect
that may be triggered, transitively, by a single mutation or event firing.
It is recovered post-typecheck by a graph expansion over $\Delta$ that
tracks visited nodes to mark loops.
For example, given
$\deltaEntry{x}{}{\Next{1r}{\stch z}},\;
\deltaEntry{y}{z}{\cdot},\;
\deltaEntry{z}{}{\Next{1r}{\stch x}}$
--- ``$x$ triggers a state change in $z$ one render later, $y$ depends on
$z$ so it is recomputed whenever $z$ changes, and $z$ may mutate $x$ after
one render'' --- the full effect of $x$ changing is
$\Next{1r}{(\stch{z} * \stch{y} * \Next{1r}{\m{loop}[x]})}$,
where $\m{loop}[x]$ marks the point at which the expansion revisits $x$.
Loops are surfaced rather than swept away.
Under a reactive paradigm a cycle in the cascading graph is usually a bug.
Preservation (the metatheory section) makes the full effect honest, any
well-typed program that steps continues to type with an effect bounded by
the original under subeffecting, so the full effect computed at type-check
time is a genuine over-approximation of what runs.

\emph{Subeffecting.}
The relation $F \le F'$ used in the rules above is \lang's subeffecting
relation.
$F'$ is the more informative side, it makes more commitments about
what may happen than $F$ does.
Under this reading $+$ acts as a meet: a branch over-approximation can
be replaced by either side, so $F_1 + F_2 \le F_i$.
Dually, $*$ acts as a join, with $F \le F_1 * F_2$ whenever $F$ refines
either side.
The next modality $\Next{N}{}$ is monotone and obeys $\Next{0}{F} = F$ and
$\Next{N_1}{}\Next{N_2}{} = \Next{N_1+N_2}{}$, the graded-monad laws.
Subeffecting on $\stch{x}$, $\eventEffBar{\ell}{v}$, and the event-guarded
constructs is similarly familiar: the empty effect sits below each, and
the body of $\always{}$ and $\eventually{}$ is monotone.
The full rules, including the dual side for $\stch{x}$ and the four
event-guarded leaves, are in Appendix.

\subsection{Declaration-Level Typing Rules}
\label{sec:decl-typing}
%

%
The declaration typing derivation is
\tyDerivDecl{\Sigma}{\Delta}{\Gamma_1}{p}{\Gamma_2},
where $\Sigma$ is the program signature, which maps each component name
to its code, effect environment, and typing environment.
$\Delta$ is the current component's effect environment; it maps variables
to their dependencies and cascading effect $\deltaEntry{x}{\m{deps}}{\m{cascade}}$.
We treat $\Delta$ as given, the typing rules check that effects in the program
are subeffects of the effects given in $\Delta$.
The difficulty in finding a $\Delta$ candidate for a program is finding a
minimal cascading effect.
Typing rules enforce that each variable's cascading entry in $\Delta$ is consistent
using the "causation" judgement
\causesderiv{\Delta}{x}{F}.
Our prototype inference/typechecker recovers $\Delta$ programmatically (\S\ref{sec:examples}).


\begin{center}
\ensuremath{
  \inferrule*[left=(x causes F)]{
    \deltaEntry{x}{\ldots}{F'} \in \Delta
    \and
    F \le F'
  }{
    \causesderiv{\Delta}{x}{F}
  }
  \qquad
  \inferrule*[left=(x causes y)]{
    \deltaEntry{y}{\ldots,x,\ldots}{F'} \in \Delta
  }{
    \causesderiv{\Delta}{x}{\stch y}
  }
}
\end{center}

\causesderiv{\Delta}{x}{F} reads ``a change in $x$ is a sufficient cause for the
effect $F$, given the cascading entries in $\Delta$.''
The first rule is the base case: $x$ causes $F$ directly when $\Delta$
records a cascade $F'$ for $x$ that is at least as informative as $F$
under subeffecting.
The second is the dependency case: $x$ causes $\stch y$ when $y$'s
dependencies include $x$, because if $x$ changes $y$ is recomputed, which
is itself a state-change-like event.

\begin{figure}[t]
\begin{center}
\Rule{ty state decl}{
    \tyDerivExpr{\Gamma}{e}{\tau}{\cdot}\\
    \deltaEntry{x}{}{F} \in \Delta\\
    \Gamma' = \Gamma, x:\tau, \setter{x}: (\tySetter[\tau]{x})
  }{
    \tyDerivDecl
      {\Sigma}{\Gamma}{\Delta}
      {\synState{x}{e}}
      {\Gamma'}
  }
\Rule{ty let decl}{
    \tyDerivExpr{\Gamma}{e}{\tau}{\cdot} \\
    \overline{y} = df(e) \\
    \deltaEntry{x}{\overline y}{F} \in \Delta
  }{
    \tyDerivDecl{\Sigma}{\Gamma}{\Delta}{\synCompLet{x}{e}}{\Gamma, x: \tau}
  }
\Rule{ty on decl}{
    \tyDerivExpr{\Gamma}{e}{\tyUnit}{F} \\
    \forall i \in [n], \causesderiv{\Delta}{x_i}{F}
  }{
    \tyDerivDecl{\Sigma}{\Gamma}{\Delta}{\synOn{x_1, x_2, \ldots, x_n}{e}}{\Gamma}
  }
\Rule{ty subcomp decl}{
  \overline{\causesderiv{\Delta}{x_i}{\stch y.z_i}}\\
  \causesderiv{\Delta}{y.return}{\stch y}\\
  A: (\tinyComp{A}(\overline{z_i:\tau_i}):\tau_r\;\{p\}, \Delta_A, \Gamma_A) \in \Sigma\\
  \overline{x_i: \tau_i}\in \Gamma \\
  \overline{v_\Delta} = fv(\Delta_A)\\
  \Delta_A\overline{[y.v_{\Delta i}/v_{\Delta i}]} \subseteq \Delta\\
  \overline{v_\Gamma} = fv(\Gamma_A)\\
  }{
    \tyDerivDecl{\Sigma}{\Gamma}{\Delta}{
      \synSubComp{y}{A(x_1, x_2, \ldots, x_n)}
    }{\Gamma, \Gamma_A\overline{[y.v_{\Gamma i}/v_{\Gamma i}]}, y: \tau_r}
  }
\end{center}
\caption{Declaration-level typing rules.}
\label{fig:ty-decl}
\Description{The declaration-level typing rules for \lang: state, let, on, and subcomponent declarations.}
\end{figure}


\rulename{ty state decl} (Figure~\ref{fig:ty-decl}) requires the default
expression $e$ to have no effect.
State variables have no dependencies, so their entry in $\Delta$ looks like
$\deltaEntry{x}{}{F}$.
The cascade $F$ records what effect blocks watching $x$ will do when $x$
changes.
%
\rulename{ty let decl} similarly looks up $x$'s entry, but with
dependencies $\overline{y}$ computed by \df.
A let-bound variable is assumed to change whenever any dependency does.


\rulename{ty on decl} types the body $e$ with some effect $F$, then checks
that each watched variable $x_i$ ``causes'' $F$ --- i.e., $F$ is
accounted for by $x_i$'s cascading entry in $\Delta$.
This is the static side of \lang's preservation story: the cascade promised in
$\Delta$ at type-check accounts for the effect block's body.


\rulename{ty subcomp decl} appears to do a lot but is mostly managing
$\alpha$-renaming of variables. This rule retrieves the subcomponent's entry
from $\Sigma$, then enforces variable aliases.
We enforce a timing guarantee that the supplied arguments to $A$ are
dependencies that change $A$'s parameters in the same render.
We also do the same for $A$'s return value back to the variable bound to the
result of $A$.
This means that arguments and return values are passed as normal with no
event-queue or timing interaction.
We enforce that $A$'s effect environment $\Delta_A$ is a sub-environment of
the current $\Delta$ (with some renaming).
$A$'s typing environment $\Gamma_A$ is not required of the caller; instead its
renamed copy is added to the output so the subcomponent's internal bindings
become accessible as $y.v$ in the parent.
All declaration-level variables in subcomponents are assumed to be unique,
so we can use $y.$ as a unique prefix for the variables of this subcomponent
without clashing with another instance of $A$ in this scope.

\subsection{Expression-Level Typing Rules}

Earlier we reviewed the function application rule, we now apply three other
important expression typing rules, shown in Figure~\ref{fig:ty-expr}.

\begin{figure}[t]
\begin{center}
\Rule{seq}{
    \tyDerivExpr{\Gamma}{e_1}{\tau_1}{F_1} \\
    \tyDerivExpr{\Gamma}{e_2}{\tau_2}{F_2}
  }{
    \tyDerivExpr{\Gamma}{e_1;e_2}{\tau_2}{F_1 * F_2}
  }
\Rule{branch}{
    \tyDerivExpr{\Gamma}{e_1}{\tyBool}{F_1} \\
    \tyDerivExpr{\Gamma}{e_2}{\tau}{F_2}\\
    \tyDerivExpr{\Gamma}{e_3}{\tau}{F_3}
  }{
    \tyDerivExpr{\Gamma}{\synIf{e_1}{e_2}{e_3}}{\tau}{F_1 * (F_2 + F_3)}
  }
\Rule{ty bind}{
    \tyDerivExpr{\Gamma}{e}{(\tyArrow{\tau}{\tyUnit}[F])}{F_e}\\
    \tau = \Sigma_E(\eventEffBar{\ell}{v})
  }{
    \tyDerivExpr
      {\Gamma}
      {\kw{bind}\;\eventEffBar{\ell}{v}\;e}
      {\tyUnit}
      {F_e * \always{\eventEffBar{\ell}{v}}(F)}
  }
\Rule{ty once}{
    \tyDerivExpr{\Gamma}{e}{(\tyArrow{\tau}{\tyUnit}[F])}{F_e}\\
    \tau = \Sigma_E(\eventEffBar{\ell}{v})
  }{
    \tyDerivExpr
      {\Gamma}
      {\kw{once}\;\eventEffBar{\ell}{v}\;e}
      {\tyUnit}
      {F_e * \eventually{\eventEffBar{\ell}{v}}(F)}
  }
\Rule{ty cancel}{ }{
    \tyDerivExpr
      {\Gamma}
      {\kw{cancel}\;\eventEffBar{\ell}{v}}
      {\tyUnit}
      {\cancelEv{\eventEffBar{\ell}{v}}}
  }
\Rule{ty remove}{ }{
    \tyDerivExpr
      {\Gamma}
      {\kw{remove}\;\eventEffBar{\ell}{v}}
      {\tyUnit}
      {\remove{\eventEffBar{\ell}{v}}}
  }
\end{center}
\caption{Selected expression-level typing rules, including the four event
primitives.}
\label{fig:ty-expr}
\Description{Expression-level typing rules for \lang: sequencing, branching, and the bind, once, cancel, and remove event primitives.}
\end{figure}


The return type of $e_1$ is discarded and the return type of $e_2$ is returned.
The effects of $e_1$ and $e_2$ are sequenced with $*$.
\rulename{branch} sequences the effect of the condition with a $+$ joining the
effects of the branches.
Exactly one of $F_2$ or $F_3$ may happen.

Shown in Figure~\ref{fig:ty-expr}, the \kw{bind} and \kw{once} rules require
their argument to be a closure
whose latent effect $F$ becomes the body of the event-guarded modality,
and whose parameter type matches the payload type
$\Sigma_E(\eventEffBar{\ell}{v})$ carried by the event.
The mode tag ($\always{}$ versus $\eventually{}$) matches the
listener tag the semantics installs in the listener map $\mathcal{L}$.
The \kw{cancel} and \kw{remove} rules have no premise: the event label is
provided syntactically, and the semantics treats both operations as
unconditional, so the effect annotation describes an \emph{attempt} at
teardown that may be a no-op at runtime.

\subsection{Effect Polymorphism}
\label{subsec:polymorphism}
\lang's prototype type-and-effect inference algorithm implements effect
polymorphism through a basic Hindley-Milner style type inference algorithm used
for effects.
For \lang this means that both functions and components are allowed to
parameterize effect variables that can be filled in at function call or at
subcomponent declaration.
%
\lang's formal type-and-effect system does not implement this feature for
simplicity of the type system.
Because \lang does not feature any form of recursion we do not have to handle
effect-polymorphic recursion, so this makes our implementation easier.
It is appropriate to use an effect in \lang when a function or component takes a
function with arbitrary effect as a parameter.
\begin{lstlisting}[numbers=none]
  both : $\forall F.$ $\tyArrow{(\tyArrow{\kw{int}}{\tyUnit}[F]) \to (\kw{int} \times \kw{int})}{\tyUnit}[F*F]$
  comp TextInput<F> (init: string, onChange: $\tyArrow{\kw{string}}{\tyUnit}[F]$) : html {...}
\end{lstlisting}
Here we see a utility function \kw{both} that applies its first
argument to both elements of an integer pair.
The effect produced is $F*F$ because the function parameter is called twice.
The second small example is the type of a utility component TextInput, which one
can imagine is given a typical \react implementation of a text input element.
An initial value for the input is supplied, and when the input is changed by the
user, onChange would be called.
When $TextInput$ is used as a subcomponent $F$ can be inferred as a concrete
effect, and \lang propagates the effect of the function given for \kw{onChange}.
Due to the way the subcomponent rule ensures that subcomponent effect
environments ($\Delta_A$) are contained within their parent environments
($\Delta$) this does not cause any namespace clashes.
In examples, we freely use effect polymorphism to prevent code duplication and
improve readability.

\subsection{Post-Typecheck Analysis}
\label{subsec:post-analysis}

Once type-and-effect inference finishes, effects in both $\Delta$ and \html can
be viewed as graph nodes for standard graph analysis.


As mentioned earlier, \lang can automatically do loop detection, analyze the
performance of event handlers, ensure stale event handlers are not left
behind, and analyze the performance of an app's initial render.
Each of these algorithms starts with identifying key expressions in a program
and calculating the full effect of those expressions.

To turn an effect into a directional graph we view each effect as its
own node in the graph, and use the effect environment as the list of edges in the graph.
We can view the dependency array as a list of incoming edges, and view the
cascading effect as the outgoing edges.
The timing and event annotations on next, eventually, and always modalities can
be viewed as weights in this graph.
Earlier we outlined how to calculate the full effect of changing a variable, but
this process can be used on the effect of any expression.
The full effect is recovered by a simple graph expansion: if a node $n$ is a
descendant of its own expansion we place a \m{loop}[$n$] node and do not expand
again.
This is how we earlier recovered the full effect
$\Next{1r}{(\stch{z} * \stch{y} * \Next{1r}{\m{loop}[x]})}$ of $x$.
%

\emph{Effect Summary}
We can make the full effect of a variable more readable to a programmer by
hiding state change effects corresponding to variables in child subcomponents'
scopes.
Calculate the full effect of an expression, hide all variables that start
with a "$y.$" prefix, collapse all next modalities' superscripts so that
\Next{1r}{\Next{1r}{F}} becomes \Next{2r}{F}.

\emph{Loop detection}
A common bug in declarative reactive programs is inter-render loops caused by
two or more effect blocks triggering each other's effects each time they run.
To the uninitiated this may sound as simple as not having "on x do {setY ...}; on y
do {setX ...}" however this problem goes deeper.
The reason effect blocks exist is to synchronize with external state, so effect
blocks are where declarative reactive programs interface with programs of a
different paradigm.
When effect blocks only synchronize by either retrieving or pushing updates but
not both then there is no problem.
Bugs can arise when two way synchronization occurs between the client and
server-side.

Take a case as simple as keeping user data up to date on the client side.
A user's data may be updated by the current application or externally in a separate
instance of the same app.
This means a common pattern is to have a web socket listening for user profile
updates so the app can update proactively to reflect changes.
This socket should only update the client-side data though; if receiving data on
the client side causes an update to be pushed to server-side then the client no
longer controls whether it will receive that push back to itself, and a loop
could form.
With \lang we can give external \api{s} temporal effects so \lang's loop
detection can catch these cases.


%

\emph{Event Checking Analysis}
Every type of event fires with a different distribution over time.
Keyboard presses happen in short bursts, mouse movements are frequent and
continual, network requests only resolve a single time.
Knowing these distributions, we can warn programmers of performance intensive
event handlers placed on key events.

\emph{Event handlers cleaned up}
Given the full effect of a variable, walk the effect's graph and ensure that
all event handlers are matched by a corresponding \kw{remove} modality.
This includes walking both branches of the $+$ constructor separately and
looking inside of the \always{} and \eventually{} modalities.
Using this simple graph walk we can revisit the \kw{asyncCompute} example.
We see, just with \lang's type-and-effect system, that no event
handlers are left after the function executes.

\emph{First Render Analysis}
First render is the part of program behavior users feel.
Initial paint dominates perceived performance, Google's Core Web Vitals now
codify Largest Contentful Paint as a first-class web performance
metric~\cite{GoogleLCP}.

A component's first render is the case where every variable
counts as "changed," so every effect block fires.
This means that all program logic is evaluated at the time few pieces of state
have initial values yet.
In the initial render the application has not had time to fetch up-to-date data
from the server, and has not even been able to pull data from local cache.
Permission to play sound/video or use the microphone/camera is not established
yet, whether previously given or not.
This means that the initial render starts more asynchronous computations than
any other time in an app's lifecycle while in a completely unpopulated state.
For programmers this can be a difficult mental shift, as usually programs are
read in the context of steady-state execution.
\lang's type system can give programmers a static understanding of the first
paint's timing separate from the steady-state behavior.

\section{Metatheory}
\label{sec:metatheory}

We present a proof of preservation for \lang.
Our preservation proof's main guarantee can be summarized as follows:
"when a variable changes, a sub-effect of its cascade given in $\Delta$ will
appear in the next housekeeping phase."
To prove this we rely upon an instrumented semantics, which is extended with
helpful side conditions and employs the use of a trace of update queues rather
than a single update queue.
The trace's main function is to couple event queues with the effect that
caused them.
Given $\Delta$ has an entry $\deltaEntry{x}{y}{F}$, then preservation tells
us that when $x$ changes we should see that $y$ had changed.
If there is an effect block bound to $x$ then we should see it produce an update
queue that fulfills some subeffect of $F$.

As a brief aside, the main extension to the language we require is that
\kw{bind} and \kw{once} expressions are annotated with their listener's effect.
This information is already available in the typing of a program so we just also
require it to be in the instrumented semantics.
We use this to extend the listener mapping to be from an event label key to a
tuple of $(c, m, F)$ closure $c$, modality $m$, and $F$ the effect of $c$ as
determined by the type system.
This helps greatly when ensuring that event handlers have the effect they are
typed to have.


Returning to traces: the records that make up traces are
\begin{itemize}
  \item \semPfTraceArgs{\overline x}{\overline v_1}{\overline v_2}
  records which of the main component's arguments changed this render.
  \item \semPfTraceLet{x}{\overline{y}} records a let binding and which
  variables it was affected by.
  \item \semPfTraceOn{\overline{x}}{U} records that a list of variables
  $\overline x$ changed between this render and last render and triggered the
  event queue $U$ to be added to the global event queue.
  \item \semPfTraceReturn{x} records that the variable $x$ was returned by a
  component.
  \item \semPfTraceExt{(\eventEffBar{\ell_1}{v_{11}},\, v_{21}), \ldots,
        (\eventEffBar{\ell_n}{v_{1n}},\, v_{2n})} marks the receipt of a
        batch of external event firings from the scheduler $\mathcal{E}$ and
        delimits a block of per-event processing records that follow.
  \item \semPfTraceFire{(\eventEffBar{\ell}{v_1},\, v_2)} records an
        event firing is waiting to be dispatched.
  \item \semPfTraceFireCxl{(\eventEffBar{\ell}{v_1},\, v_2)} records that a
        firing was suppressed by the cancellation multiset
        $\mathcal{C}$.
  \item \semPfTraceFireSuc{(\eventEffBar{\ell}{v_1},\, v_2)}
        {(U_1, F_1), \ldots, (U_k, F_k)} records that a firing was
        successfully dispatched: for each registered listener $i$, the listener
        body produced update queue $U_i$, and the record pairs $U_i$ with the
        effect $F_i$ stored in the listener map $\mathcal{L}$ at registration
        time.  Recording $F_i$ in the trace is what allows the post-hoc Full
        Effect analysis to walk listener-registered effects through the
        operational trace.
\end{itemize}

The instrumented component semantic judgement is a well-typed configuration.
It contains typing environments $\Sigma, \Delta, \Gamma_s$.
$\Sigma$ and $\Delta$ have already been explained, $\Gamma_s$ is the entire
typing environment of the main component that is returned from type checking.
$\mathcal S$ is a state tuple containing $(X_1, X_2,  V_s,  \mathcal{L},
\mathcal{C})$.
$X_1$ and $X_2$ are the sets of variables that were modified during the
housekeeping phases before the previous render and the most recent render respectively.
$\mathcal L$ is extended in the way we discussed earlier.
$V_s$ and $\mathcal C$ are unchanged from the regular semantics.
$A$ is the name of the main component.
$\kw{status}$ is either \kw{waiting} or \kw{rendered}.
$T_1$ is the already processed part of the trace produced by the most recent render.
$T_2$ is the trace that is yet to be processed.
$$\confPfTop
  {\mathcal{S}}
  {A}{\overline{x=v}}
  {status}{T_1}{T_2}$$




We now give the theorem statement for Preservation of Configurations and then
describe relevant judgements directly after.
The theorem formally states that given a well-typed program with main
component A, at each step of the housekeeping phase (1) the effect environment
and set of states/arguments modified in the previous housekeeping phase explain
the trace produced by the most recent render (2) the already-processed section
of the trace explains each variable that is marked as having been changed for
next render.

\begin{theorem}[Preservation for Configurations] \;
\label{thm:preservation-configurations}
Let $\mathcal{K} = \confPfTop{\mathcal{S}}{A}{\overline{x{=}v}}{status$_1$}{T_1}{T_2}$ with
$\mathcal{S} = (X_1, X_2, V_{s1}, \mathcal{L}, \mathcal{C})$, and suppose $\mathcal{K}$ is well-typed, i.e.
\[
\begin{array}{l}
\tyDerivComp{\Sigma}{\tinyComp{A}}{(\Delta, \Gamma_s)} 
\qquad
\justifies{\Delta, X_1}{T_1,T_2}
  \qquad \justifiesVarSet{\Delta,X_1}{T_1}{X_2} \\[2pt]
\valTyEnvDeriv{\Gamma_s}{V_{s1}}
  \qquad \overline{\valTyDeriv{v}{\tau}} 
\qquad
\text{for every } (c,m,F) \in \mathcal{L}(\eventEffBar{\ell}{v_1}):\
  \valTyDeriv{c}{\tyArrow{\Sigma_E(\eventEffBar{\ell}{v_1})}{\tyUnit}[F]} \\[2pt]
\text{for every firing } (\eventEffBar{\ell}{v_1},v_2)
  \text{ in } \mathcal{E} \text{ or any } \semPfTraceFire{} \text{ record of } T_2:\
  \valTyDeriv{v_2}{\Sigma_E(\eventEffBar{\ell}{v_1})}.
\end{array}
\]
If $\mathcal{K} \to \mathcal{K}'$ with
$\mathcal{K}' = \confPfTop{\mathcal{S}'}{A}{\overline{x{=}v'}}{status$_2$}{T_3}{T_4}$ and
$\mathcal{S}' = (X_3, X_4, V_{s2}, \mathcal{L}', \mathcal{C}')$,
then the analogous well-typedness conditions hold of $\mathcal{K}'$.
\end{theorem}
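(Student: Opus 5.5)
The proof is a case analysis on the instrumented step $\mathcal{K} \to \mathcal{K}'$, following the component-level rules: \rulename{flush first update}, \rulename{flush external events}, \rulename{pop listen}, \rulename{pop event} (and its cancelled variant), the cancel/remove pops, the transition from \kw{rendered} to \kw{waiting}, and \rulename{waiting to rendered}. For each case we re-establish six conditions:
\begin{itemize}
\item $\justifies{\Delta, X_3}{T_3,T_4}$ and $\justifiesVarSet{\Delta,X_3}{T_3}{X_4}$;
\item $\valTyEnvDeriv{\Gamma_s}{V_{s2}}$ and $\overline{\valTyDeriv{v'}{\tau}}$;
\item typing of every listener in $\mathcal{L}'$;
\item typing of every pending firing payload.
\end{itemize}
The component judgement $\tyDerivComp{\Sigma}{\tinyComp{A}}{(\Delta,\Gamma_s)}$ is invariant because $\Sigma$, $\Delta$ and $\Gamma_s$ never change.

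Before the case analysis we prove two auxiliary lemmas by induction on derivations.

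\textbf{Expression-level preservation.} If $\tyDerivExpr{\Gamma}{e}{\tau}{F}$, $\valTyEnvDeriv{\Gamma}{V}$ and $\semDerivExpr{V}{e}{v}{U}$, then $\valTyDeriv{v}{\tau}$ and $U$ realizes some $F' \le F$. Here ``realizes'' means:
\begin{itemize}
\item each $setter_x$ item contributes $\Next{1r}{\stch x}$;
\item each listen item with annotated $(m,F_h)$ contributes $m(F_h)$;
\item cancel and remove items contribute $\cancelEv{}$ and $\remove{}$ respectively;
\item concatenation of queues corresponds to $*$.
\end{itemize}
The cases are routine. For \rulename{function app} we use the closure typing to obtain the latent effect. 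For \rulename{branch} we pick the taken side and use $F_2+F_3 \le F_i$. Queue sequencing matches $F_1 * F_2 * F$ by monotonicity of $*$.

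\textbf{Declaration-level lemma.} This lifts the expression lemma through \rulename{state rerender}, the two effect rules and \rulename{subcomp}. It shows that the trace produced by a render is explained by $\Delta$ and by the set of variables that changed. For each $\semPfTraceOn{\overline{x}}{U}$ record, \rulename{ty on decl} gives $\causesderiv{\Delta}{x_i}{F}$, so $U$ realizes a subeffect of $x_i$'s cascade. Let records are handled by the second causation rule, using $df$. For subcomponents we use the renaming inclusion $\Delta_A[y.v/v] \subseteq \Delta$ and the alias causations for arguments and return.

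With these lemmas, most housekeeping cases are straightforward.
\begin{itemize}
\item \rulename{flush first update}: the closure is typed $\tau \to \tau$ with empty effect, and setters are removed, so the new value $v'$ has the state's type and $V_{s2}$ stays well-typed. The popped record moves from $T_2$ to $T_1$, and the updated variable $y$ is added to $X_2$. It is justified because the $\Next{1r}{\stch y}$ realized by the setter item lies in an already-processed, justified record.
\item \rulename{pop listen}: the annotated $F$ together with the typing of $\kw{bind}$/$\kw{once}$ gives $\valTyDeriv{c}{\tyArrow{\Sigma_E(\eventEffBar{\ell}{v})}{\tyUnit}[F]}$.
\item \rulename{pop event}: the listener typing and the payload typing feed the expression lemma, so each $U_i$ realizes a subeffect of $F_i$. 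The $\semPfTraceFireSuc{}{}$ record is justified by the $\always{}$ or $\eventually{}$ effect that originally justified the listen item. The listener-safety condition ensures no new firings with untyped payloads appear.
\item \rulename{flush external events}: payload typing is exactly the hypothesis on $\mathcal{E}$.
\end{itemize}

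The main obstacle is \rulename{waiting to rendered}. Here the sets shift as $(X_3,X_4) := (X_2,\emptyset)$, a fresh trace is produced with $T_3$ empty and $T_4$ the new render trace, and $V_e$ is reset. We must show that $\justifies{\Delta,X_2}{T_4}$ holds for the new trace. This needs a correspondence lemma: every effect block that fires in the new render watches some variable that differs between $V_e$ and $V$, and every such variable lies in $X_2$ or among the changed arguments recorded by $\semPfTraceArgs{}{}{}$. This is where the old invariant $\justifiesVarSet{\Delta,X_1}{T_1}{X_2}$ is consumed. I would prove it by induction over the declaration list, maintaining the invariant that a let-bound variable differs only if some member of $df(e)$ differs. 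That invariant requires the assumption that $df$ is sound, i.e.\ that it over-approximates free-variable dependence. If this step resists a direct proof, the fallback is to weaken justification to ``changed variables are contained in $X_2$ closed under $\Delta$-dependencies'' and to prove that closure property separately.
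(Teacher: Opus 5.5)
Your three-layer structure (expression lemma, declaration lemma, top-level case split) matches the paper's. The top-level case split, however, is over the wrong relation. The $\to$ in the theorem is the instrumented configuration relation: \rulename{No Updates}, \rulename{Flush Let}, \rulename{Flush Update}, \rulename{Receive Ext}, \rulename{Fire Cancel}, \rulename{Fire Successful}, \rulename{waiting to rendered}. It consumes the trace one \emph{record} at a time.

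Per-item steps such as \rulename{flush first update} or \rulename{pop listen} are not steps of this relation, and they could not maintain $\justifiesVarSet{\Delta,X_1}{T_1}{X_2}$ anyway. That judgement works at record granularity: \rulename{jvs-on} adds every setter target of an $\mathit{EFF}$ record at once. After popping one setter out of a record's queue, $X_2$ is therefore not the change set of any trace prefix. The paper instead flushes the whole queue of an $\mathit{EFF}$ record in one \rulename{Flush Update} step, and all listener queues in one \rulename{Fire Successful} step, via the flush-queue judgement. It proves a Flush Queue Preservation lemma by induction on the \rulename{fq} rules, with Listen typing inversion supplying the typing of each newly registered listener.

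The genuine gap is your ``main obstacle''. You misread the post-state of \rulename{waiting to rendered}. The rule sets $X_3 = X_2\cup X'$, $X_4 = X'$ and $T_3 = \semPfTraceArgs{\overline x}{\overline v_1}{\overline v_2}$, where $X'$ is the set of changed arguments. It evaluates the body in the instrumented declaration semantics with changed set $X_2\cup X'$.

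The instrumented \rulename{effect rerender} rules fire a block iff $\overline x\cap X\neq\emptyset$, not by comparing $V_e$ with $V$, and they record $\semPfTraceOn{\overline x\cap X}{U}$. Hence the premise $\overline x\subseteq X$ of \rulename{justify-on} holds by construction. The causation premise comes from \rulename{ty on decl} together with expression preservation. The case is therefore just Preservation for Declarations at $X_2\cup X'$, which yields $\justifies{\Delta,X_2\cup X'}{T_4}$ and $\valTyEnvDeriv{\Gamma_s}{V_{s2}}$, followed by \rulename{justify-args} and \rulename{jvs-args}. The old change-set invariant plays no role.

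Moreover, the correspondence lemma you propose is false. $X$ only ever collects setter targets and changed arguments. A watched let-bound variable, or a subcomponent result $y$, can differ between renders without lying in $X_2\cup X'$, and \rulename{justify-on} has no closure under $\Delta$-dependencies. Your fallback of weakening justification to a dependency closure changes the judgement, and hence proves a different theorem.

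A smaller slip is in the branch case of your expression lemma. Using $F_2+F_3\le F_i$ places the typed effect below the realized one, which is the opposite of ``$U$ realizes some $F'\le F$''. The subeffecting relation also has no general $*$-monotonicity rule. The paper sidesteps subeffecting here by typing queues directly with \rulename{tq-pick} and \rulename{tq-merge}.
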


\emph{Component typing}
$\tyDerivComp{\Sigma}{\tinyComp{A}}{(\Delta,\Gamma_s)}$ states that under the program
signature $\Sigma$, the component $A$ is typeable under effect environment $\Delta$ and
typing environment $\Gamma_s$.

\emph{Trace justification} $\justifies{\Delta,X}{T_1, T_2}$ states that for each entry
in the trace we can point to a variable in $X$ and an effect in $\Delta$ that justify it.
%
%
$\Delta$ contains effects that the type system expects to happen after variables
change and $X$ is the list of variables that changed, so this judgement requires
that $T_1,T_2$ reflects the type-and-effect system's allowed effects.

\emph{Trace-driven variable-change set} $\justifiesVarSet{\Delta,X}{T_1}{X'}$
This judgement states that not only should $\Delta,X$ justify $T_1$ but all three
together should be able to justify each variable that is marked as having
changed for the next render.
%

\emph{Value well-typedness} $\valTyEnvDeriv{\Gamma}{V}$ signifies that all variable-value mappings in $V$ are
well-typed under the variable-type mappings in $\Gamma$.
$\valTyDeriv{v}{\tau}$ states that value $v$ is well-typed under $\tau$.

\emph{Listener-map well-formedness} This requirement on the listener map
tells us the closures stored in the map are well-typed under the arrow type
$\tyArrow{\Sigma_E(\eventEffBar{\ell}{v})}{\tyUnit}[F]$.
The input to the arrow must be of the type that is carried by the event the
closure is bound to.
So for a closure bound to the event \eventEffBar{\ell}{v} the value carried by
that event firing will have type $\Sigma_E(\eventEffBar{\ell}{v})$.
%

\emph{Payload typing for in-flight firings} Here we require that the values
associated with in-flight events are of the correct type
$\Sigma_E(\eventEffBar{\ell}{v})$.

We do not constrain the scheduler or give guarantees about scheduled events
being on time.
This is by the nature of the scheduler being external, we must assume the
scheduler is well-behaved.
There is no constant unit conversion from milliseconds to renders or any other
units, so any well-timed property on the scheduler would follow by
tautology.

\usetikzlibrary{shapes.geometric}

\section{Implementation and Evaluation}\label{sec:examples}

We implement a prototype type-and-effect inference algorithm for \lang
written in Haskell, and accompany it with the suite of analysis algorithms (Section~\ref{subsec:post-analysis})
to explore how \lang could give programmers more information about their programs
before runtime.
Our inference algorithm requires minimal type annotations, only needing function
and component arguments to be type-and-effect annotated.
The effects of all programs in the paper are automatically inferred by the \lang{} inference
checker.

To demonstrate how \lang's type-and-effect system can help find timing bugs
before runtime, we show a focused example of the username input on a
signup form for a website.
%
%
Our example will show a text input field where the user will write their desired
username.
The username will then be sent to the server over the network and the client
will receive a boolean response of whether the username is available or taken.
By dynamically checking the availability we make our form feel more responsive
than if we required the user to validate all inputs at once when attempting to
submit the form.

\begin{lstlisting}
comp UsernameInput () : (bool, html) {
  state username,       setUsername       default "";
  state availableNames, setAvailableNames default new Set();
  state status, setStatus default "idle";
  comp slowUsername = Debounce(username);

  let handleUsernameSuc = $\lambda$(name, isAvailable).
    if isAvailable then setAvailableNames($\lambda$s. s.add(name)) else ()
  let handleUsernameErr = $\lambda$_. setStatus($\lambda$_. "error");
  on slowUsername do {
    if availableNames.has(slowUsername) then
      setStatus($\lambda$_. "idle")
    else (
      setStatus($\lambda$_. "checking");
      fetchUsernameCheck(slowUsername, handleUsernameSuc, handleUsernameErr))};
  return (availableNames.has(username), /* html */)}
\end{lstlisting}

In lines 2-4 we declare our state variables.
We need to track the user's in-progress username, so we store that in the
\kw{username} state string.
\kw{availableNames} is a set we use to hold username candidates marked
as available by the server.
\kw{status} will store the network request status of our username
checks.
On line 5 we declare a debounced version of \kw{username}, this will be helpful
to slow down the number of network requests to our username availability \api
endpoint.
%
Internally the debounce schedules a timeout to fire its slow update; we refer to
that timeout event by the label \eventEff{db}{} in the effects below.

On lines 7 and 8 we declare a handler for when a username check request
succeeds.
We receive a tuple of the checked name and a bool of whether it is available or
not.
If available we add the name to the availableNames set, and if not we do nothing.
There are two failure states for a username check: first the network
request itself could fail and second the name could be unavailable.
We handled the second case already but we handle the network failure on line 9
by setting the status to "error".

Whenever the user stops typing in the text input field for the username, the
debounce will resolve, \kw{slowUsername} will update to its new value, and the
on-block on line 10 will execute.
If we already have checked this username and it's available then we can set the
network status to idle (lines 11-12).
Otherwise we set our \kw{status} to "checking" on line 14 and initiate a new \api
call to check availability on line 15.
Here, \kw{fetchUsernameCheck} has a similar type to \kw{asyncCompute} from
\S\ref{sec:types}, except its events are \eventEff{req}{check,suc} and
\eventEff{req}{check,err}.

Finally we return (1) a bool of whether the currently typed username is available and
(2) the html of the username input which shows username availability,
network request status, and an input.
\begin{center}
  \lstinline|<input id="name" value={username} onChange={$\lambda$e. setUsername($\lambda$_. e.target.value)}/>|
\end{center}
Now we can see what \lang tells us about this program.
%

\begin{center}
\resizebox{\linewidth}{!}{%
\begin{tikzpicture}[>={Stealth[length=2mm]},
    every node/.style={font=\scriptsize},
    nd/.style={draw, rounded corners, inner sep=3pt, minimum height=6mm},
    ev/.style={draw, diamond, aspect=2.2, inner sep=1pt, font=\scriptsize},
    cas/.style={->, thick}]
  \node[ev] (chg) at (0,4)      {$\eventEff{change}{\#name}$};
  \node[nd] (un)  at (2.4,4)    {$\stch{\textit{username}}$};
  \node[nd, anchor=west] (can) at (un.east)  {$\cancelEv{\eventEff{db}{}}$};
  \node[nd, anchor=west] (rem) at (can.east) {$\remove{\eventEff{db}{}}$};
  \node[ev] (db)  at (6.8,4)    {$\eventEff{db}{}$};
  \node[nd] (su)  at (8.8,4)     {$\stch{\textit{slowUsername}}$};
  \node[nd] (st1) at (8.8,2.2)   {$\stch{\textit{status}}$};
  \node[nd] (st2) at (11.0,4)     {$\stch{\textit{status}}$};
  \node[ev] (rqs) at (12.8,5.2)   {$\eventEff{req}{check,suc}$};
  \node[ev] (rqe) at (12.8,2.6)   {$\eventEff{req}{check,err}$};
  \node[nd] (remerr) at (15.7,5.2) {$\remove{\eventEff{req}{check,err}}$};
  \node[nd] (an)  at (18.1,6)   {$\stch{\textit{availableNames}}$};
  \node[nd] (cdot) at (18.1,4.6) {\;\;$(\;\cdot\;)$\;\;};
  \node[nd] (ustrem) at (15.7,2.6) {$\remove{\eventEff{req}{check,suc}}$};
  \node[nd] (ust) at (18.1,2.6) {$\stch{\textit{status}}$};
  \draw[cas] (chg) -- node[above]{$1r$} (un);
  \draw[cas] (rem) -- node[above]{$100ms$} (db);
  \draw[cas] (db)  -- node[above]{$1r$} (su);
  \draw[cas] (su)  -- node[left]{$1r$} (st1);
  \draw[cas] (su)  -- node[above]{$1r$} (st2);
  \draw[cas] (st2) -- node[above,pos=0.4]{$1n$} (rqs);
  \draw[cas] (st2) -- node[below,pos=0.4]{$1n$} (rqe);
  \draw[cas] (rqs) -- (remerr);
  \draw[cas] (remerr) -- node[above]{$1r$} (an);
  \draw[cas] (remerr) -- (cdot);
  \draw[cas] (rqe) -- (ustrem);
  \draw[cas] (ustrem) -- node[below]{$1r$} (ust);
\end{tikzpicture}}
\end{center}


Manual inspection of the full effect of our text input changing shows us a bug
in our program.
At every point in the network request we update \textit{status} except for when
the request is successful.
Immediately when \textit{slowUsername} is changed, both branches of our on-block
update the status.
Then after the network request the error case updates the status, but the
success case does not.
This means that on success the loading spinner would confusingly continue to
spin while telling the user that their username is available and they don't have
anything to wait for.
This can be fixed by updating the \textit{status} in the
handleUsernameSuc closure.

\lstinline|$\lambda$(name, isAvailable). (setStatus($\lambda$_."idle");  if isAvailable...)|

Looking at the types also gives the realization that we're allowing multiple
network requests to share the same \textit{status} variable.
%
%
This means that we have created a race condition on the status of multiple
network requests.
In a traditional setting like \react the solution would use pointers but
in the stripped down setting of \lang we slow down.
We can bind our username check status to check if there is already a network request in progress and do nothing if so.
\begin{lstlisting}
  on slowUsername, status do {
    if availableNames.has(slowUsername) || status == "checking" then () else ...
\end{lstlisting}
This prevents us from kicking off more than one network request at a time and
removes the race condition.

The last problem \lang flags is via loop detection.
%
%
The full effect analysis automatically flags a loop in our code, so we have to
do manual analysis to see if this is okay or not.
%

Now our analysis.
When availableNames \emph{does not include} slowUsername and status == "idle" then we
want to check the username.
This case leads us to the \kw{else} branch which checks the username.
The case in which availableNames \emph{includes} slowUsername and status ==
"idle" happens when the user has settled on a username and the username has been
checked.
This case leads to the \kw{then} branch which does nothing.
The case in which availableNames \emph{does not include} slowUsername and status
== "error" brings us to the \kw{else} branch and triggers another check.
This is desirable because it's a built-in retry mechanism for the network error
failure case.
So \lang tells us there is a temporal loop in our code, and there is, but it is
acting as our error-handling case so we can leave the code as is.


\begin{center}
\resizebox{\linewidth}{!}{%
\begin{tikzpicture}[>={Stealth[length=2mm]},
    every node/.style={font=\scriptsize},
    nd/.style={draw, rounded corners, inner sep=3pt, minimum height=6mm},
    ev/.style={draw, diamond, aspect=2.2, inner sep=1pt, font=\scriptsize},
    cas/.style={->, thick}]
  \node[ev] (chg) at (0,4)      {$\eventEff{change}{\#name}$};
  \node[nd] (un)  at (2.4,4)    {$\stch{\textit{username}}$};
  \node[nd, anchor=west] (can) at (un.east)  {$\cancelEv{\eventEff{db}{}}$};
  \node[nd, anchor=west] (rem) at (can.east) {$\remove{\eventEff{db}{}}$};
  \node[ev] (db)  at (6.8,4)    {$\eventEff{db}{}$};
  \node[nd] (su)  at (8.8,4)     {$\stch{\textit{slowUsername}}$};
  \node[nd] (st1) at (8.8,2.2)   {\;\;$(\;\cdot\;)$\;\;};
  \node[nd] (st2) at (11.0,4)     {$\stch{\textit{status}}$};
  \node[ev] (rqs) at (12.8,5.2)   {$\eventEff{req}{check,suc}$};
  \node[ev] (rqe) at (12.8,2.2)   {$\eventEff{req}{check,err}$};
  \node[nd] (remerr) at (15.6,5.2) {$\remove{\eventEff{req}{check,err}}$};
  \node[nd] (sts) at (17.7,5.2) {$\m{loop}[\textit{status}]$};
  \node[nd] (an)  at (19.8,6.2) {$\stch{\textit{availableNames}}$};
  \node[nd] (cdot) at (19.8,4.4) {\;\;$(\;\cdot\;)$\;\;};
  \node[nd] (ustrem) at (15.6,2.2) {$\remove{\eventEff{req}{check,suc}}$};
  \node[nd] (ust) at (17.7,2.2) {$\m{loop}[\textit{status}]$};
  \draw[cas] (chg) -- node[above]{$1r$} (un);
  \draw[cas] (rem) -- node[above]{$100ms$} (db);
  \draw[cas] (db)  -- node[above]{$1r$} (su);
  \draw[cas] (su)  -- node[left]{} (st1);
  \draw[cas] (su)  -- node[above]{$1r$} (st2);
  \draw[cas] (st2) -- node[above,pos=0.4]{$1n$} (rqs);
  \draw[cas] (st2) -- node[below,pos=0.4]{$1n$} (rqe);
  \draw[cas] (rqs) -- node[above]{} (remerr);
  \draw[cas] (remerr) -- node[above]{$1r$} (sts);
  \draw[cas] (sts) -- (an);
  \draw[cas] (sts) -- (cdot);
  \draw[cas] (rqe) -- (ustrem);
  \draw[cas] (ustrem) -- node[below]{$1r$} (ust);
\end{tikzpicture}}
\end{center}

\section{Related Work}
\label{sec:related}

\emph{React, and GUI Programming}

Functional reactive programming (FRP) began with
Fran~\cite{elliottFunctionalReactiveAnimation1997}, which modeled time-varying
\emph{behaviors} and discrete \emph{events} as first-class values.
This value-centric view of change was carried onto the web by
Flapjax~\cite{Meyerovich09OOPSLA} and FrTime~\cite{Cooper06ESOP}, which build
dataflow graphs that automatically propagate updates, and the lineage continues
through ultrametric and modal accounts of higher-order
FRP~\cite{krishnaswamiUltrametricSemanticsReactive2011}.
Elm~\cite{czaplickiElmConcurrentFRP2012} specialized FRP for GUIs and
popularized the model--view--update loop that later shaped Redux and the React
ecosystem.
Demetrescu et al.~\cite{demetrescuReactiveImperativeProgramming2011} bring
dataflow constraints to imperative C/C++.


Most closely related are formal semantics for React itself.
$\lambda_{\text{react}}$~\cite{madsenSemanticsEssenceReact2020} gives a
semantics for legacy class-based React, while
React-tRace~\cite{leeReacttRaceSemanticsUnderstanding2025} models the modern
hooks API, including the subtle ordering of effects and state updates.
%
%
\lang's semantics differ from Lee et al.'s to focus on the timing analysis.
We attempt to make \lang's temporal dependency graphs widely applicable to
reactive programming while looking to \react to see what the problem space is like.
React-tRace fully encodes the component lifecycle of mounting and unmounting, as
well as the highly asynchronous order of setter, effect, and subcomponent
evaluation.
If we were to directly use the tRace semantics we would have to introduce an
entire layer of dynamic naming that would clutter our type system's readability.
We would like to implement \lang's type-and-effect system for React tRace in the
future if possible.

Adjoint Reactive GUI programming~\cite{graulundAdjointReactiveGUI2021} is the nearest
type-theoretic neighbor on the GUI side.
It introduces a modality for receiving a value from an event and reasons about
the time needed to obtain a value of a given type.
However that work uses a coeffect discipline over values, \lang uses an effect
discipline over computations.

A separate strand integrates session types with GUI and reactive programming.
Event-Driven Multiparty Session
Actors~\cite{fowlerEventDrivenMultipartySession2017} use flow-sensitive effect
typing for actors participating in several sessions at once, and
Model-View-Update-Communicate~\cite{fowlerModelViewUpdateCommunicateSessionTypes2019}
folds session typing into an Elm-style architecture.
%

\emph{Timing Analysis}
\lang's graded next modality $\Next{}{}$ is most directly inspired by temporal
session types.
Das et al.~\cite{dasParallelComplexityAnalysis2018} add timing modalities to
session types to bound the parallel complexity of message-passing programs.
The next operator itself is syntactically the next operator of temporal
logic~\cite{Pnueli77SFCS}.
ESTEREL~\cite{Berry92SCP} and LUSTRE~\cite{Halbwachs91IEEE} compute over
discrete logical instants and explicit clocks, building on the dataflow
semantics of Kahn process networks~\cite{Kahn74IFIP}.
\lang's render is a comparable discrete tick, but \lang targets the
asynchronous, externally-scheduled world of UI events and over-approximates
branching as \emph{may}-effects rather than assuming a single fixed synchronous
clock.
Modal FRP calculi such as Simply RaTT~\cite{bahrSimplyRaTTFitchstyle2019} use a
Fitch-style $\Next{}{}$ modality to stage reactive computation across time steps
and to rule out space leaks.
%
%
Ahman~\cite{ahmanWhenProgramsHave2023} studies termination and resource locking
for non-declarative functional reactive programs.
%
The actor-reactor model~\cite{vandenvonderTacklingAwkwardSquad2017}
catalogs the pitfalls --- glitches, stale reads, unintended feedback --- that
reactive languages routinely fall into; \lang attacks the same hazards
statically, surfacing cyclic cascades and stale-listener bugs as type-level
effects.
Time-indexed hardware systems suggest that \lang's render model could generalize
beyond software: the ephemeral history
register~\cite{EphemeralHistoryRegister2004} is a circuit-design construct
similar to React's scheduled state.
Timeline types~\cite{nigamModularHardwareDesign2023} index a type system by
when signals arrive, albeit for statically-scheduled hardware rather than
dynamically-scheduled UI events.

\emph{Effect systems}
\lang's next modality $\Next{}{}$ is a graded modality in the sense of graded monads~\cite{Katsumata14}, and
bounded linear logic~\cite{girardBoundedLinearLogic1992} and its generalization
to resource semirings~\cite{BrunelGMZ14,ghicaBoundedLinearTypes2014}, which annotate types
with quantities drawn from an algebraic structure.
%
The link between graded temporal modalities and effects has been
developed recently by Sekiyama et al., who give an algebraic axiomatization of
temporal effects for higher-order recursive
programs~\cite{sekiyamaAlgebraicTemporalEffects2025} and a dependent temporal
type-and-effect system with answer-effect modification for delimited
continuations~\cite{sekiyamaTemporalVerificationAnswerEffect2023}.
%
%
\lang's effects also build on the broader effect-system tradition:
algebraic effects and handlers~\cite{kammarHandlersAction2013a} and the
effect-polymorphic language Frank~\cite{lindleyBeBe2017} inform our treatment of
effect polymorphism.
%
Two systems for typing parallel and concurrent structure are
methodologically close: Muller's static prediction of parallel computation
graphs~\cite{mullerStaticPredictionParallel2022} and the graph types for ADTs
with futures of Rinaldi et al.~\cite{rinaldi_pipelines_2024}, whose vertex
structures describe dependencies among futures.
We took heavy inspiration from this system while designing \lang, and we
anticipate needing to incorporate vertex structures in the future if we are to
allow dynamic mounting and unmounting of subcomponents.
More broadly, \lang's temporal dependency graphs connect to a long line of work
on dependency analysis, from the program dependence
graph~\cite{ferranteProgramDependenceGraph1987} to semantics-based accounts of
dependency~\cite{cousotAbstractSemanticDependency2019} and their roots in secure
information flow~\cite{denningCertificationProgramsSecure1977}.
\lang differs in tracking \emph{when} dependencies fire across renders rather
than only which computations depend on which.

\section{Conclusion}



We have presented \lang, a core calculus for render-based reactive programming whose
type-and-effect system makes the temporal behavior of reactive programs
explicit.
By treating the \emph{render} as the fundamental unit of time and tracking state
changes, event registrations, and scheduled events as graded effects, \lang
turns timing properties that are normally buried in framework runtimes into
static, checkable artifacts.
The temporal next modality $\Next{}{}$ records \emph{when}, in renders or any
declared time unit, an observable update may occur, while the
$\always{}$, $\eventually{}$, $\cancelEv{}$, and $\remove{}$ modalities track
the registration and teardown lifecycle of event listeners.
We gave \lang a two-phase, time-aware operational semantics, proved preservation
of the effect system against an instrumented version of that semantics, and
showed that the inferred effects form \emph{temporal dependency graphs} that
standard graph algorithms can analyze for render cascades, inter-render
loops, stale listeners, high-frequency handlers, and first-render timing.
A prototype inference algorithm checks every example in the paper, and our
signup-form case study shows \lang statically surfacing a stuck loading state, a
request race condition, and an update loop.

%
\lang does not yet model the mounting and unmounting of
components that Lee et al.~\cite{leeReacttRaceSemanticsUnderstanding2025}
formalize.
Lifting this restriction would require a dependent type system built
on the vertex structures of Rinaldi et al.~\cite{rinaldi_pipelines_2024}.
This would let \lang's analysis apply directly to pre-existing render-based
frameworks.
%
%
%
By making the \emph{when} of reactive computation a first-class,
statically-analyzable property, \lang gives user interface programmers a
principled way to understand the responsiveness of their applications at compile
time and to spend less effort on manual temporal reasoning.

\bibliographystyle{ACM-Reference-Format}
\bibliography{bib}


\begin{thebibliography}{42}


\ifx \showCODEN    \undefined \def \showCODEN     #1{\unskip}     \fi
\ifx \showISBNx    \undefined \def \showISBNx     #1{\unskip}     \fi
\ifx \showISBNxiii \undefined \def \showISBNxiii  #1{\unskip}     \fi
\ifx \showISSN     \undefined \def \showISSN      #1{\unskip}     \fi
\ifx \showLCCN     \undefined \def \showLCCN      #1{\unskip}     \fi
\ifx \shownote     \undefined \def \shownote      #1{#1}          \fi
\ifx \showarticletitle \undefined \def \showarticletitle #1{#1}   \fi
\ifx \showURL      \undefined \def \showURL       {\relax}        \fi
\providecommand\bibfield[2]{#2}
\providecommand\bibinfo[2]{#2}
\providecommand\natexlab[1]{#1}
\providecommand\showeprint[2][]{arXiv:#2}

\bibitem[Ahman(2023)]%
        {ahmanWhenProgramsHave2023}
\bibfield{author}{\bibinfo{person}{Danel Ahman}.}
  \bibinfo{year}{2023}\natexlab{}.
\newblock \showarticletitle{When {Programs} {Have} to {Watch} {Paint} {Dry}}.
  In \bibinfo{booktitle}{\emph{Foundations of {Software} {Science} and
  {Computation} {Structures}}}, \bibfield{editor}{\bibinfo{person}{Orna
  Kupferman} {and} \bibinfo{person}{Pawel Sobocinski}} (Eds.).
  \bibinfo{publisher}{Springer Nature Switzerland}, \bibinfo{address}{Cham},
  \bibinfo{pages}{1--23}.
\newblock
\showISBNx{978-3-031-30829-1}
\href{https://doi.org/10.1007/978-3-031-30829-1_1}{doi:\nolinkurl{10.1007/978-3-031-30829-1_1}}


\bibitem[Bahr et~al\mbox{.}(2019)]%
        {bahrSimplyRaTTFitchstyle2019}
\bibfield{author}{\bibinfo{person}{Patrick Bahr},
  \bibinfo{person}{Christian~Uldal Graulund}, {and}
  \bibinfo{person}{Rasmus~Ejlers Møgelberg}.} \bibinfo{year}{2019}\natexlab{}.
\newblock \showarticletitle{Simply {RaTT}: a fitch-style modal calculus for
  reactive programming without space leaks}.
\newblock \bibinfo{journal}{\emph{Proceedings of the ACM on Programming
  Languages}} \bibinfo{volume}{3}, \bibinfo{number}{ICFP} (\bibinfo{date}{July}
  \bibinfo{year}{2019}), \bibinfo{pages}{1--27}.
\newblock
\showISSN{2475-1421}
\href{https://doi.org/10.1145/3341713}{doi:\nolinkurl{10.1145/3341713}}


\bibitem[Banerjee et~al\mbox{.}(2013)]%
        {banerjee_graphical_2013}
\bibfield{author}{\bibinfo{person}{Ishan Banerjee}, \bibinfo{person}{Bao
  Nguyen}, \bibinfo{person}{Vahid Garousi}, {and} \bibinfo{person}{Atif
  Memon}.} \bibinfo{year}{2013}\natexlab{}.
\newblock \showarticletitle{Graphical user interface ({GUI}) testing:
  Systematic mapping and repository}.
\newblock \bibinfo{journal}{\emph{Information and Software Technology}}
  \bibinfo{volume}{55}, \bibinfo{number}{10} (\bibinfo{date}{Oct.}
  \bibinfo{year}{2013}), \bibinfo{pages}{1679--1694}.
\newblock
\showISSN{0950-5849}
\href{https://doi.org/10.1016/j.infsof.2013.03.004}{doi:\nolinkurl{10.1016/j.infsof.2013.03.004}}


\bibitem[Berry and Gonthier(1992)]%
        {Berry92SCP}
\bibfield{author}{\bibinfo{person}{G\'{e}rard Berry} {and}
  \bibinfo{person}{Georges Gonthier}.} \bibinfo{year}{1992}\natexlab{}.
\newblock \showarticletitle{The ESTEREL synchronous programming language:
  design, semantics, implementation}.
\newblock \bibinfo{journal}{\emph{Sci. Comput. Program.}} \bibinfo{volume}{19},
  \bibinfo{number}{2} (\bibinfo{date}{Nov.} \bibinfo{year}{1992}),
  \bibinfo{pages}{87–152}.
\newblock
\showISSN{0167-6423}
\href{https://doi.org/10.1016/0167-6423(92)90005-V}{doi:\nolinkurl{10.1016/0167-6423(92)90005-V}}


\bibitem[Brunel et~al\mbox{.}(2014)]%
        {BrunelGMZ14}
\bibfield{author}{\bibinfo{person}{Alo{\"{\i}}s Brunel}, \bibinfo{person}{Marco
  Gaboardi}, \bibinfo{person}{Damiano Mazza}, {and} \bibinfo{person}{Steve
  Zdancewic}.} \bibinfo{year}{2014}\natexlab{}.
\newblock \showarticletitle{A Core Quantitative Coeffect Calculus}. In
  \bibinfo{booktitle}{\emph{Programming Languages and Systems - 23rd European
  Symposium on Programming, {ESOP} 2014, Held as Part of the European Joint
  Conferences on Theory and Practice of Software, {ETAPS} 2014, Grenoble,
  France, April 5-13, 2014, Proceedings}} \emph{(\bibinfo{series}{Lecture Notes
  in Computer Science}, Vol.~\bibinfo{volume}{8410})},
  \bibfield{editor}{\bibinfo{person}{Zhong Shao}} (Ed.).
  \bibinfo{publisher}{Springer}, \bibinfo{pages}{351--370}.
\newblock
\href{https://doi.org/10.1007/978-3-642-54833-8\_19}{doi:\nolinkurl{10.1007/978-3-642-54833-8\_19}}


\bibitem[Clark and Team(2017)]%
        {react-fiber}
\bibfield{author}{\bibinfo{person}{Lin Clark} {and} \bibinfo{person}{React
  Team}.} \bibinfo{year}{2017}\natexlab{}.
\newblock \bibinfo{title}{React Fiber Architecture}.
\newblock
  \bibinfo{howpublished}{\url{https://github.com/acdlite/react-fiber-architecture}}.
\newblock


\bibitem[Cooper and Krishnamurthi(2006)]%
        {Cooper06ESOP}
\bibfield{author}{\bibinfo{person}{Gregory~H. Cooper} {and}
  \bibinfo{person}{Shriram Krishnamurthi}.} \bibinfo{year}{2006}\natexlab{}.
\newblock \showarticletitle{Embedding dynamic dataflow in a call-by-value
  language}. In \bibinfo{booktitle}{\emph{Proceedings of the 15th European
  Conference on Programming Languages and Systems}} (Vienna, Austria)
  \emph{(\bibinfo{series}{ESOP'06})}. \bibinfo{publisher}{Springer-Verlag},
  \bibinfo{address}{Berlin, Heidelberg}, \bibinfo{pages}{294–308}.
\newblock
\showISBNx{354033095X}
\href{https://doi.org/10.1007/11693024_20}{doi:\nolinkurl{10.1007/11693024_20}}


\bibitem[Cousot(2019)]%
        {cousotAbstractSemanticDependency2019}
\bibfield{author}{\bibinfo{person}{Patrick Cousot}.}
  \bibinfo{year}{2019}\natexlab{}.
\newblock \showarticletitle{Abstract Semantic Dependency}. In
  \bibinfo{booktitle}{\emph{Static Analysis - 26th International Symposium, SAS
  2019, Porto, Portugal, October 8-11, 2019, Proceedings}}
  \emph{(\bibinfo{series}{Lecture Notes in Computer Science},
  Vol.~\bibinfo{volume}{11822})},
  \bibfield{editor}{\bibinfo{person}{Bor-Yuh~Evan Chang}} (Ed.).
  \bibinfo{publisher}{Springer}, \bibinfo{pages}{389--410}.
\newblock
\href{https://doi.org/10.1007/978-3-030-32304-2\_19}{doi:\nolinkurl{10.1007/978-3-030-32304-2\_19}}


\bibitem[Czaplicki(2012)]%
        {czaplickiElmConcurrentFRP2012}
\bibfield{author}{\bibinfo{person}{Evan Czaplicki}.}
  \bibinfo{year}{2012}\natexlab{}.
\newblock \showarticletitle{Elm : {Concurrent} {FRP} for {Functional} {GUIs}}.
\newblock
\urldef\tempurl%
\url{https://www.semanticscholar.org/paper/Elm-%3A-Concurrent-FRP-for-Functional-GUIs-Czaplicki/1791a8a278b83c54425d7581cb45320feba5f4b0}
\showURL{%
\tempurl}


\bibitem[Das et~al\mbox{.}(2018)]%
        {dasParallelComplexityAnalysis2018}
\bibfield{author}{\bibinfo{person}{Ankush Das}, \bibinfo{person}{Jan Hoffmann},
  {and} \bibinfo{person}{Frank Pfenning}.} \bibinfo{year}{2018}\natexlab{}.
\newblock \showarticletitle{Parallel complexity analysis with temporal session
  types}.
\newblock \bibinfo{journal}{\emph{Proc. ACM Program. Lang.}}
  \bibinfo{volume}{2}, \bibinfo{number}{ICFP} (\bibinfo{date}{July}
  \bibinfo{year}{2018}), \bibinfo{pages}{91:1--91:30}.
\newblock
\href{https://doi.org/10.1145/3236786}{doi:\nolinkurl{10.1145/3236786}}


\bibitem[Demetrescu et~al\mbox{.}(2011)]%
        {demetrescuReactiveImperativeProgramming2011}
\bibfield{author}{\bibinfo{person}{Camil Demetrescu}, \bibinfo{person}{Irene
  Finocchi}, {and} \bibinfo{person}{Andrea Ribichini}.}
  \bibinfo{year}{2011}\natexlab{}.
\newblock \showarticletitle{Reactive imperative programming with dataflow
  constraints}. In \bibinfo{booktitle}{\emph{Proceedings of the 2011 {ACM}
  international conference on {Object} oriented programming systems languages
  and applications}} \emph{(\bibinfo{series}{{OOPSLA} '11})}.
  \bibinfo{publisher}{Association for Computing Machinery},
  \bibinfo{address}{New York, NY, USA}, \bibinfo{pages}{407--426}.
\newblock
\showISBNx{978-1-4503-0940-0}
\href{https://doi.org/10.1145/2048066.2048100}{doi:\nolinkurl{10.1145/2048066.2048100}}


\bibitem[Denning and Denning(1977)]%
        {denningCertificationProgramsSecure1977}
\bibfield{author}{\bibinfo{person}{Dorothy~E. Denning} {and}
  \bibinfo{person}{Peter~J. Denning}.} \bibinfo{year}{1977}\natexlab{}.
\newblock \showarticletitle{Certification of Programs for Secure Information
  Flow}.
\newblock \bibinfo{journal}{\emph{Commun. ACM}} \bibinfo{volume}{20},
  \bibinfo{number}{7} (\bibinfo{year}{1977}), \bibinfo{pages}{504--513}.
\newblock
\href{https://doi.org/10.1145/359636.359712}{doi:\nolinkurl{10.1145/359636.359712}}


\bibitem[Elliott and Hudak(1997)]%
        {elliottFunctionalReactiveAnimation1997}
\bibfield{author}{\bibinfo{person}{Conal Elliott} {and} \bibinfo{person}{Paul
  Hudak}.} \bibinfo{year}{1997}\natexlab{}.
\newblock \showarticletitle{Functional reactive animation}. In
  \bibinfo{booktitle}{\emph{Proceedings of the second {ACM} {SIGPLAN}
  international conference on {Functional} programming}}
  \emph{(\bibinfo{series}{{ICFP} '97})}. \bibinfo{publisher}{Association for
  Computing Machinery}, \bibinfo{address}{New York, NY, USA},
  \bibinfo{pages}{263--273}.
\newblock
\showISBNx{978-0-89791-918-0}
\href{https://doi.org/10.1145/258948.258973}{doi:\nolinkurl{10.1145/258948.258973}}


\bibitem[Ferrante et~al\mbox{.}(1987)]%
        {ferranteProgramDependenceGraph1987}
\bibfield{author}{\bibinfo{person}{Jeanne Ferrante}, \bibinfo{person}{Karl~J.
  Ottenstein}, {and} \bibinfo{person}{Joe~D. Warren}.}
  \bibinfo{year}{1987}\natexlab{}.
\newblock \showarticletitle{The Program Dependence Graph and Its Use in
  Optimization}.
\newblock \bibinfo{journal}{\emph{ACM Transactions on Programming Languages and
  Systems}} \bibinfo{volume}{9}, \bibinfo{number}{3} (\bibinfo{year}{1987}),
  \bibinfo{pages}{319--349}.
\newblock
\href{https://doi.org/10.1145/24039.24041}{doi:\nolinkurl{10.1145/24039.24041}}


\bibitem[Fowler(2019)]%
        {fowlerModelViewUpdateCommunicateSessionTypes2019}
\bibfield{author}{\bibinfo{person}{Simon Fowler}.}
  \bibinfo{year}{2019}\natexlab{}.
\newblock \bibinfo{title}{Model-{View}-{Update}-{Communicate}: {Session}
  {Types} meet the {Elm} {Architecture}}.
\newblock
\href{https://doi.org/10.48550/ARXIV.1910.11108}{doi:\nolinkurl{10.48550/ARXIV.1910.11108}}
\newblock
\shownote{Version Number: 3}.


\bibitem[Fowler and Hu(2017)]%
        {fowlerEventDrivenMultipartySession2017}
\bibfield{author}{\bibinfo{person}{Simon Fowler} {and} \bibinfo{person}{Raymond
  Hu}.} \bibinfo{year}{2017}\natexlab{}.
\newblock \bibinfo{title}{Event-{Driven} {Multiparty} {Session} {Actors}}.
\newblock \bibinfo{howpublished}{Presented at the 11th ACM SIGPLAN Workshop on
  Higher-Order Programming with Effects (HOPE 2023)}.
\newblock
\newblock
\shownote{No proceedings}.


\bibitem[Ghica and Smith(2014)]%
        {ghicaBoundedLinearTypes2014}
\bibfield{author}{\bibinfo{person}{Dan~R. Ghica} {and} \bibinfo{person}{Alex~I.
  Smith}.} \bibinfo{year}{2014}\natexlab{}.
\newblock \showarticletitle{Bounded {Linear} {Types} in a {Resource}
  {Semiring}}. In \bibinfo{booktitle}{\emph{Programming {Languages} and
  {Systems}}}, \bibfield{editor}{\bibinfo{person}{Zhong Shao}} (Ed.).
  \bibinfo{publisher}{Springer}, \bibinfo{address}{Berlin, Heidelberg},
  \bibinfo{pages}{331--350}.
\newblock
\showISBNx{978-3-642-54833-8}
\href{https://doi.org/10.1007/978-3-642-54833-8_18}{doi:\nolinkurl{10.1007/978-3-642-54833-8_18}}


\bibitem[Girard et~al\mbox{.}(1992)]%
        {girardBoundedLinearLogic1992}
\bibfield{author}{\bibinfo{person}{Jean-Yves Girard}, \bibinfo{person}{Andre
  Scedrov}, {and} \bibinfo{person}{Philip~J. Scott}.}
  \bibinfo{year}{1992}\natexlab{}.
\newblock \showarticletitle{Bounded linear logic: a modular approach to
  polynomial-time computability}.
\newblock \bibinfo{journal}{\emph{Theoretical Computer Science}}
  \bibinfo{volume}{97}, \bibinfo{number}{1} (\bibinfo{date}{April}
  \bibinfo{year}{1992}), \bibinfo{pages}{1--66}.
\newblock
\showISSN{0304-3975}
\href{https://doi.org/10.1016/0304-3975(92)90386-T}{doi:\nolinkurl{10.1016/0304-3975(92)90386-T}}


\bibitem[{Google}(2020)]%
        {GoogleLCP}
\bibfield{author}{\bibinfo{person}{{Google}}.} \bibinfo{year}{2020}\natexlab{}.
\newblock \bibinfo{title}{Largest Contentful Paint (LCP)}.
\newblock \bibinfo{howpublished}{web.dev. \url{https://web.dev/articles/lcp}}.
\newblock
\newblock
\shownote{Core Web Vitals metric for perceived load speed; sites should target
  LCP of 2.5 seconds or less.}.


\bibitem[Graulund et~al\mbox{.}(2021)]%
        {graulundAdjointReactiveGUI2021}
\bibfield{author}{\bibinfo{person}{Christian~Uldal Graulund},
  \bibinfo{person}{Dmitrij Szamozvancev}, {and} \bibinfo{person}{Neel
  Krishnaswami}.} \bibinfo{year}{2021}\natexlab{}.
\newblock \showarticletitle{Adjoint {Reactive} {GUI} {Programming}}.
\newblock In \bibinfo{booktitle}{\emph{Foundations of {Software} {Science} and
  {Computation} {Structures}}}, \bibfield{editor}{\bibinfo{person}{Stefan
  Kiefer} {and} \bibinfo{person}{Christine Tasson}} (Eds.).
  Vol.~\bibinfo{volume}{12650}. \bibinfo{publisher}{Springer International
  Publishing}, \bibinfo{address}{Cham}, \bibinfo{pages}{289--309}.
\newblock
\showISBNx{978-3-030-71994-4 978-3-030-71995-1}
\href{https://doi.org/10.1007/978-3-030-71995-1_15}{doi:\nolinkurl{10.1007/978-3-030-71995-1_15}}
\newblock
\shownote{Series Title: Lecture Notes in Computer Science}.


\bibitem[Halbwachs et~al\mbox{.}(1991)]%
        {Halbwachs91IEEE}
\bibfield{author}{\bibinfo{person}{N. Halbwachs}, \bibinfo{person}{P. Caspi},
  \bibinfo{person}{P. Raymond}, {and} \bibinfo{person}{D. Pilaud}.}
  \bibinfo{year}{1991}\natexlab{}.
\newblock \showarticletitle{The synchronous data flow programming language
  LUSTRE}.
\newblock \bibinfo{journal}{\emph{Proc. IEEE}} \bibinfo{volume}{79},
  \bibinfo{number}{9} (\bibinfo{year}{1991}), \bibinfo{pages}{1305--1320}.
\newblock
\href{https://doi.org/10.1109/5.97300}{doi:\nolinkurl{10.1109/5.97300}}


\bibitem[Kahn(1974)]%
        {Kahn74IFIP}
\bibfield{author}{\bibinfo{person}{Gilles Kahn}.}
  \bibinfo{year}{1974}\natexlab{}.
\newblock \showarticletitle{The Semantics of a Simple Language for Parallel
  Programming}. In \bibinfo{booktitle}{\emph{IFIP Congress}}.
\newblock
\urldef\tempurl%
\url{https://api.semanticscholar.org/CorpusID:18030506}
\showURL{%
\tempurl}


\bibitem[Kammar et~al\mbox{.}(2013)]%
        {kammarHandlersAction2013a}
\bibfield{author}{\bibinfo{person}{Ohad Kammar}, \bibinfo{person}{Sam Lindley},
  {and} \bibinfo{person}{Nicolas Oury}.} \bibinfo{year}{2013}\natexlab{}.
\newblock \showarticletitle{Handlers in action}. In
  \bibinfo{booktitle}{\emph{Proceedings of the 18th {ACM} {SIGPLAN}
  international conference on {Functional} programming}}
  \emph{(\bibinfo{series}{{ICFP} '13})}. \bibinfo{publisher}{Association for
  Computing Machinery}, \bibinfo{address}{New York, NY, USA},
  \bibinfo{pages}{145--158}.
\newblock
\showISBNx{978-1-4503-2326-0}
\href{https://doi.org/10.1145/2500365.2500590}{doi:\nolinkurl{10.1145/2500365.2500590}}


\bibitem[Katsumata(2014)]%
        {Katsumata14}
\bibfield{author}{\bibinfo{person}{Shin{-}ya Katsumata}.}
  \bibinfo{year}{2014}\natexlab{}.
\newblock \showarticletitle{Parametric effect monads and semantics of effect
  systems}. In \bibinfo{booktitle}{\emph{The 41st Annual {ACM} {SIGPLAN-SIGACT}
  Symposium on Principles of Programming Languages, {POPL} '14, San Diego, CA,
  USA, January 20-21, 2014}}, \bibfield{editor}{\bibinfo{person}{Suresh
  Jagannathan} {and} \bibinfo{person}{Peter Sewell}} (Eds.).
  \bibinfo{publisher}{{ACM}}, \bibinfo{pages}{633--646}.
\newblock
\href{https://doi.org/10.1145/2535838.2535846}{doi:\nolinkurl{10.1145/2535838.2535846}}


\bibitem[Khourshid(2022)]%
        {khourshid2022goodbye}
\bibfield{author}{\bibinfo{person}{David Khourshid}.}
  \bibinfo{year}{2022}\natexlab{}.
\newblock \bibinfo{title}{Goodbye, useEffect}.
\newblock \bibinfo{howpublished}{Reactathon 2022 Conference Talk, Real World
  React (YouTube)}.
\newblock
\urldef\tempurl%
\url{https://www.youtube.com/watch?v=HPoC-k7Rxwo}
\showURL{%
\tempurl}


\bibitem[Krishnaswami and Benton(2011)]%
        {krishnaswamiUltrametricSemanticsReactive2011}
\bibfield{author}{\bibinfo{person}{Neelakantan~R. Krishnaswami} {and}
  \bibinfo{person}{Nick Benton}.} \bibinfo{year}{2011}\natexlab{}.
\newblock \showarticletitle{Ultrametric {Semantics} of {Reactive} {Programs}}.
  In \bibinfo{booktitle}{\emph{2011 {IEEE} 26th {Annual} {Symposium} on {Logic}
  in {Computer} {Science}}}. \bibinfo{publisher}{IEEE},
  \bibinfo{address}{Toronto, ON, Canada}, \bibinfo{pages}{257--266}.
\newblock
\showISBNx{978-1-4577-0451-2}
\href{https://doi.org/10.1109/LICS.2011.38}{doi:\nolinkurl{10.1109/LICS.2011.38}}


\bibitem[Lee et~al\mbox{.}(2025)]%
        {leeReacttRaceSemanticsUnderstanding2025}
\bibfield{author}{\bibinfo{person}{Jay Lee}, \bibinfo{person}{Joongwon Ahn},
  {and} \bibinfo{person}{Kwangkeun Yi}.} \bibinfo{year}{2025}\natexlab{}.
\newblock \bibinfo{title}{React-{tRace}: {A} {Semantics} for {Understanding}
  {React} {Hooks}}.
\newblock
\href{https://doi.org/10.1145/3763067}{doi:\nolinkurl{10.1145/3763067}}
\newblock
\shownote{arXiv:2507.05234 [cs]}.


\bibitem[Lindley et~al\mbox{.}(2017)]%
        {lindleyBeBe2017}
\bibfield{author}{\bibinfo{person}{Sam Lindley}, \bibinfo{person}{Conor
  McBride}, {and} \bibinfo{person}{Craig McLaughlin}.}
  \bibinfo{year}{2017}\natexlab{}.
\newblock \bibinfo{title}{Do be do be do}.
\newblock
\href{https://doi.org/10.48550/arXiv.1611.09259}{doi:\nolinkurl{10.48550/arXiv.1611.09259}}
\newblock
\shownote{arXiv:1611.09259 [cs]}.


\bibitem[Madsen et~al\mbox{.}(2020)]%
        {madsenSemanticsEssenceReact2020}
\bibfield{author}{\bibinfo{person}{Magnus Madsen}, \bibinfo{person}{Ondřej
  Lhoták}, {and} \bibinfo{person}{Frank Tip}.}
  \bibinfo{year}{2020}\natexlab{}.
\newblock \showarticletitle{A {Semantics} for the {Essence} of {React}}. In
  \bibinfo{booktitle}{\emph{{LIPIcs}, {Volume} 166, {ECOOP} 2020}},
  Vol.~\bibinfo{volume}{166}. \bibinfo{publisher}{Schloss Dagstuhl –
  Leibniz-Zentrum für Informatik}, \bibinfo{pages}{12:1--12:26}.
\newblock
\showISBNx{978-3-95977-154-2}
\showISSN{1868-8969}
\href{https://doi.org/10.4230/LIPICS.ECOOP.2020.12}{doi:\nolinkurl{10.4230/LIPICS.ECOOP.2020.12}}
\newblock
\shownote{Artwork Size: 26 pages, 814986 bytes Medium: application/pdf}.


\bibitem[Memon(2002)]%
        {memon_gui_2002}
\bibfield{author}{\bibinfo{person}{A.M. Memon}.}
  \bibinfo{year}{2002}\natexlab{}.
\newblock \showarticletitle{{GUI} testing: pitfalls and process}.
\newblock \bibinfo{journal}{\emph{Computer}} \bibinfo{volume}{35},
  \bibinfo{number}{8} (\bibinfo{date}{Aug.} \bibinfo{year}{2002}),
  \bibinfo{pages}{87--88}.
\newblock
\showISSN{1558-0814}
\href{https://doi.org/10.1109/MC.2002.1023795}{doi:\nolinkurl{10.1109/MC.2002.1023795}}


\bibitem[{Meta Platforms}(2013)]%
        {ReactMeta}
\bibfield{author}{\bibinfo{person}{{Meta Platforms}}.}
  \bibinfo{year}{2013}\natexlab{}.
\newblock \bibinfo{title}{React: A JavaScript Library for Building User
  Interfaces}.
\newblock \bibinfo{howpublished}{\url{https://react.dev}}.
\newblock


\bibitem[{Meta Platforms}(2024)]%
        {react_compiler}
\bibfield{author}{\bibinfo{person}{{Meta Platforms}}.}
  \bibinfo{year}{2024}\natexlab{}.
\newblock \bibinfo{title}{React Compiler}.
\newblock
  \bibinfo{howpublished}{\url{https://react.dev/learn/react-compiler/introduction}}.
\newblock


\bibitem[Meyerovich et~al\mbox{.}(2009)]%
        {Meyerovich09OOPSLA}
\bibfield{author}{\bibinfo{person}{Leo~A. Meyerovich}, \bibinfo{person}{Arjun
  Guha}, \bibinfo{person}{Jacob Baskin}, \bibinfo{person}{Gregory~H. Cooper},
  \bibinfo{person}{Michael Greenberg}, \bibinfo{person}{Aleks Bromfield}, {and}
  \bibinfo{person}{Shriram Krishnamurthi}.} \bibinfo{year}{2009}\natexlab{}.
\newblock \showarticletitle{Flapjax: a programming language for Ajax
  applications}. In \bibinfo{booktitle}{\emph{Proceedings of the 24th ACM
  SIGPLAN Conference on Object Oriented Programming Systems Languages and
  Applications}} (Orlando, Florida, USA) \emph{(\bibinfo{series}{OOPSLA '09})}.
  \bibinfo{publisher}{Association for Computing Machinery},
  \bibinfo{address}{New York, NY, USA}, \bibinfo{pages}{1–20}.
\newblock
\showISBNx{9781605587660}
\href{https://doi.org/10.1145/1640089.1640091}{doi:\nolinkurl{10.1145/1640089.1640091}}


\bibitem[Muller(2022)]%
        {mullerStaticPredictionParallel2022}
\bibfield{author}{\bibinfo{person}{Stefan~K. Muller}.}
  \bibinfo{year}{2022}\natexlab{}.
\newblock \showarticletitle{Static prediction of parallel computation graphs}.
\newblock \bibinfo{journal}{\emph{Proc. ACM Program. Lang.}}
  \bibinfo{volume}{6}, \bibinfo{number}{POPL} (\bibinfo{date}{Jan.}
  \bibinfo{year}{2022}), \bibinfo{pages}{46:1--46:31}.
\newblock
\href{https://doi.org/10.1145/3498708}{doi:\nolinkurl{10.1145/3498708}}


\bibitem[Nigam et~al\mbox{.}(2023)]%
        {nigamModularHardwareDesign2023}
\bibfield{author}{\bibinfo{person}{Rachit Nigam},
  \bibinfo{person}{Pedro~Henrique Azevedo~de Amorim}, {and}
  \bibinfo{person}{Adrian Sampson}.} \bibinfo{year}{2023}\natexlab{}.
\newblock \showarticletitle{Modular {Hardware} {Design} with {Timeline}
  {Types}}.
\newblock \bibinfo{journal}{\emph{Proc. ACM Program. Lang.}}
  \bibinfo{volume}{7}, \bibinfo{number}{PLDI} (\bibinfo{date}{June}
  \bibinfo{year}{2023}), \bibinfo{pages}{120:343--120:367}.
\newblock
\href{https://doi.org/10.1145/3591234}{doi:\nolinkurl{10.1145/3591234}}


\bibitem[Pnueli(1977)]%
        {Pnueli77SFCS}
\bibfield{author}{\bibinfo{person}{Amir Pnueli}.}
  \bibinfo{year}{1977}\natexlab{}.
\newblock \showarticletitle{The temporal logic of programs}. In
  \bibinfo{booktitle}{\emph{Proceedings of the 18th Annual Symposium on
  Foundations of Computer Science}} \emph{(\bibinfo{series}{SFCS '77})}.
  \bibinfo{publisher}{IEEE Computer Society}, \bibinfo{address}{USA},
  \bibinfo{pages}{46–57}.
\newblock
\href{https://doi.org/10.1109/SFCS.1977.32}{doi:\nolinkurl{10.1109/SFCS.1977.32}}


\bibitem[Rinaldi et~al\mbox{.}(2024)]%
        {rinaldi_pipelines_2024}
\bibfield{author}{\bibinfo{person}{Francis Rinaldi}, \bibinfo{person}{june
  wunder}, \bibinfo{person}{Arthur Azevedo~de Amorim}, {and}
  \bibinfo{person}{Stefan~K. Muller}.} \bibinfo{year}{2024}\natexlab{}.
\newblock \showarticletitle{Pipelines and Beyond: Graph Types for {ADTs} with
  Futures}.
\newblock \bibinfo{journal}{\emph{Proc. ACM Program. Lang.}}
  \bibinfo{volume}{8}, \bibinfo{number}{POPL} (\bibinfo{date}{Jan.}
  \bibinfo{year}{2024}), \bibinfo{pages}{17:482--17:511}.
\newblock
\href{https://doi.org/10.1145/3632859}{doi:\nolinkurl{10.1145/3632859}}


\bibitem[Rosenband(2004)]%
        {EphemeralHistoryRegister2004}
\bibfield{author}{\bibinfo{person}{Daniel~L. Rosenband}.}
  \bibinfo{year}{2004}\natexlab{}.
\newblock \showarticletitle{The ephemeral history register: flexible scheduling
  for rule-based designs}. In \bibinfo{booktitle}{\emph{Proceedings of the
  {Second} {ACM}/{IEEE} {International} {Conference} on {Formal} {Methods} and
  {Models} for {Co}-{Design}}} \emph{(\bibinfo{series}{{MEMOCODE} '04})}.
  \bibinfo{publisher}{IEEE Computer Society}, \bibinfo{address}{USA},
  \bibinfo{pages}{189--198}.
\newblock
\showISBNx{978-0-7803-8509-2}
\href{https://doi.org/10.1109/MEMCOD.2004.1459853}{doi:\nolinkurl{10.1109/MEMCOD.2004.1459853}}


\bibitem[Sekiyama and Unno(2023)]%
        {sekiyamaTemporalVerificationAnswerEffect2023}
\bibfield{author}{\bibinfo{person}{Taro Sekiyama} {and}
  \bibinfo{person}{Hiroshi Unno}.} \bibinfo{year}{2023}\natexlab{}.
\newblock \showarticletitle{Temporal {Verification} with {Answer}-{Effect}
  {Modification}: {Dependent} {Temporal} {Type}-and-{Effect} {System} with
  {Delimited} {Continuations}}.
\newblock \bibinfo{journal}{\emph{Proc. ACM Program. Lang.}}
  \bibinfo{volume}{7}, \bibinfo{number}{POPL} (\bibinfo{date}{Jan.}
  \bibinfo{year}{2023}), \bibinfo{pages}{71:2079--71:2110}.
\newblock
\href{https://doi.org/10.1145/3571264}{doi:\nolinkurl{10.1145/3571264}}


\bibitem[Sekiyama and Unno(2025)]%
        {sekiyamaAlgebraicTemporalEffects2025}
\bibfield{author}{\bibinfo{person}{Taro Sekiyama} {and}
  \bibinfo{person}{Hiroshi Unno}.} \bibinfo{year}{2025}\natexlab{}.
\newblock \showarticletitle{Algebraic {Temporal} {Effects}: {Temporal}
  {Verification} of {Recursively} {Typed} {Higher}-{Order} {Programs}}.
\newblock \bibinfo{journal}{\emph{Proc. ACM Program. Lang.}}
  \bibinfo{volume}{9}, \bibinfo{number}{POPL} (\bibinfo{date}{Jan.}
  \bibinfo{year}{2025}), \bibinfo{pages}{78:2306--78:2336}.
\newblock
\href{https://doi.org/10.1145/3704914}{doi:\nolinkurl{10.1145/3704914}}


\bibitem[Van~den Vonder et~al\mbox{.}(2017)]%
        {vandenvonderTacklingAwkwardSquad2017}
\bibfield{author}{\bibinfo{person}{Sam Van~den Vonder}, \bibinfo{person}{Joeri
  De~Koster}, \bibinfo{person}{Florian Myter}, {and} \bibinfo{person}{Wolfgang
  De~Meuter}.} \bibinfo{year}{2017}\natexlab{}.
\newblock \showarticletitle{Tackling the awkward squad for reactive
  programming: the actor-reactor model}. In
  \bibinfo{booktitle}{\emph{Proceedings of the 4th {ACM} {SIGPLAN}
  {International} {Workshop} on {Reactive} and {Event}-{Based} {Languages} and
  {Systems}}} \emph{(\bibinfo{series}{{REBLS} 2017})}.
  \bibinfo{publisher}{Association for Computing Machinery},
  \bibinfo{address}{New York, NY, USA}, \bibinfo{pages}{27--33}.
\newblock
\showISBNx{978-1-4503-5515-5}
\href{https://doi.org/10.1145/3141858.3141863}{doi:\nolinkurl{10.1145/3141858.3141863}}


\bibitem[Wan and Hudak(2000)]%
        {Wan00PLDI}
\bibfield{author}{\bibinfo{person}{Zhanyong Wan} {and} \bibinfo{person}{Paul
  Hudak}.} \bibinfo{year}{2000}\natexlab{}.
\newblock \showarticletitle{Functional reactive programming from first
  principles}. In \bibinfo{booktitle}{\emph{Proceedings of the ACM SIGPLAN 2000
  Conference on Programming Language Design and Implementation}} (Vancouver,
  British Columbia, Canada) \emph{(\bibinfo{series}{PLDI '00})}.
  \bibinfo{publisher}{Association for Computing Machinery},
  \bibinfo{address}{New York, NY, USA}, \bibinfo{pages}{242–252}.
\newblock
\showISBNx{1581131992}
\href{https://doi.org/10.1145/349299.349331}{doi:\nolinkurl{10.1145/349299.349331}}


\end{thebibliography}

\appendix
\section{Collected Grammars}
\label{app:grammars}

For reference, we collect here the full grammars of \lang, reproduced from the
main text: the term syntax (Figure~\ref{app:fig:syntax}) and the grammar of
types and effects (Figure~\ref{app:fig:types-effects}).

\begin{figure}[h]
\begin{tabbing}
  Declarations\quad\=$p$ \quad\= $::=$\quad\= \kill
  Components \>$C$   \> $::=$  \> \synComponent{A}{x_1 : \tau_1, x_2 : \tau_2, \ldots, x_n : \tau_n}{\tau}{p_1 \; p_2 \; \ldots \; p_n \; \synReturn{x}}\\[4pt]
  Declarations \>$p$ \> $::=$  \> \synCompLet{x}{e}
                    $\mid$ \synState{x}{e}
                    \\\>\>\>   \synOn{x_1, x_2, \ldots, x_n}{e}
                    $\mid$ \synSubComp{x}{A(x_1, x_2, \ldots, x_n)}\\[4pt]
  Expressions \>$e$  \> $::=$  \> $x$
                    $\mid$ $c \in \{\m{true}, \m{false}, ()\}$
                    $\mid$ $e_1\;e_2$
                    $\mid$ $e_1;\;e_2$
                    $\mid$ \synFn{x}{e}
                    \\\>\>\> \synIf{x}{e_1}{e_2}
                    $\mid$ $(e_1,e_2)$
                    $\mid$ $\kw{fst}\;e$
                    $\mid$ $\kw{snd}\;e$
                    \\\>\>\> $\kw{bind}\;\eventEffBar{\ell}{v}\;e$
                    $\mid$ $\kw{once}\;\eventEffBar{\ell}{v}\;e$
                    $\mid$ $\kw{cancel}\;\eventEffBar{\ell}{v}$
                    $\mid$ $\kw{remove}\;\eventEffBar{\ell}{v}$
\end{tabbing}
\caption{Syntax of \lang Programs (reproduced from Figure~\ref{fig:syntax}).}
\label{app:fig:syntax}
\Description{Syntax of \lang programs}
\end{figure}

\begin{figure}[h]
\begin{tabbing}
  Event Labels\quad\=$\eventEffBar{\ell}{v}$ \quad\= $::=$\quad\= \kill
  Types \>$\tau$  \> $::=$  \> $\tyUnit$
                    $\mid$ $\tyBool$
                    $\mid$ $\tau_1 \times \tau_2$
                    $\mid$ $(\tyArrow{\tau_1}{\tau_2}[F])$\\[4pt]
  Effects \>$F$  \> $::=$  \> $\cdot$
                    $\mid$ $\stch{x}$
                    $\mid$ $\eventEffBar{\ell}{v}$
                    $\mid$ $\Next{Nu}{F}$
                    $\mid$ $F_1 * F_2$
                    $\mid$ $F_1 + F_2$
                    \\\>\>\> $\eventually{\eventEffBar{\ell}{v}}(F)$
                    $\mid$ $\always{\eventEffBar{\ell}{v}}(F)$
                    $\mid$ $\cancelEv{\eventEffBar{\ell}{v}}$
                    $\mid$ $\remove{\eventEffBar{\ell}{v}}$\\[4pt]
  Event Labels \>$\eventEffBar{\ell}{v}$  \> $::=$  \> $\ell\langle \overline{v} \rangle$ \quad where $\overline{v}$ is a tuple of base values\\[4pt]
  Time Unit \>$u$  \> $::=$  \> $r \mid n \mid \kw{ms} \mid \ldots$
\end{tabbing}
\caption{Grammar of \lang Types and Effects (reproduced from Figure~\ref{fig:types-effects}).}
\label{app:fig:types-effects}
\Description{Grammar of Types and Effects}
\end{figure}

\section{Types}

\subsection{Program} \;

These rules establish program-level well-formedness. The judgement
$\Sigma \Rightarrow \tinyComp{A_1}, \ldots, \tinyComp{A_n} \Rightarrow \Sigma'$
checks a sequence of component definitions against a starting signature
$\Sigma$, threading each type-checked component back into the signature so that
later components may refer to earlier ones, and yields the final signature
$\Sigma'$ that maps every component name to its code, effect environment
$\Delta$, and typing environment $\Gamma$.

\begin{placedfigure}
\Rule{wf comp}
  {
    \Sigma
      \Rightarrow \synComponent{A}{\overline{x_i : \tau_i}}{\tau}{p_1,p_2,\ldots, p_n}
      \Rightarrow (\Delta, \Gamma)
  }
  {
    \Sigma
      \Rightarrow \tinyComp{A}
      \Rightarrow \Sigma, A: (\tinyComp{A}, \Delta, \Gamma)
  }

\bigskip

\Rule{wf program}
  {
    \Sigma
      \Rightarrow \synComponent{A_1}{\overline{x_i : \tau_i}}{\tau}{p_1,p_2,\ldots, p_n}
      \Rightarrow (\Delta, \Gamma)
    \\\\
    \Sigma, A_1: (\tinyComp{A_1}, \Delta, \Gamma)
      \Rightarrow \tinyComp{A_2}, \ldots, \tinyComp{A_n}
      \Rightarrow \Sigma'
  }
  {
    \Sigma
      \Rightarrow \tinyComp{A_1}, \ldots, \tinyComp{A_n}
      \Rightarrow \Sigma'
  }
\captionof{figure}{Program-level well-formedness.}
\label{fig:app-ty-program}
\end{placedfigure}

\subsection{Component} \;

Component-level typing checks a single component against the program signature
$\Sigma$. The judgement $\tyDerivComp{\Sigma}{\tinyComp{A}}{(\Delta, \Gamma)}$
says that, given the declared effect environment $\Delta$, the component's
declaration block type-checks and produces the typing environment $\Gamma$ that
binds its state, let, and subcomponent names.

\begin{placedfigure}
\Rule{ty component}
  {
    \tyDerivDecl{\Sigma}{\overline{x_i : \tau_i}}{\Delta}{p_1,p_2,\ldots, p_n}{\Gamma}
  }
  {
    \tyDerivComp
      {\Sigma}
      {\synComponent{A}{\overline{x_i : \tau_i}}{\tau}{p_1,p_2,\ldots, p_n}}
      {(\Delta, \Gamma)}
  }
\captionof{figure}{Component-level typing.}
\label{fig:app-ty-component}
\end{placedfigure}

\subsection{Declaration} \;

Declaration typing threads a typing environment through a component's
declaration block. The judgement $\tyDerivDecl{\Sigma}{\Gamma}{\Delta}{p}{\Gamma'}$
checks a single declaration $p$ under signature $\Sigma$, effect environment
$\Delta$, and incoming environment $\Gamma$, extending it to $\Gamma'$; the
sequencing judgement
$\tyDerivDecls{\Sigma}{\Gamma}{\Delta}{p_1;\ldots;\synReturn{x}}{\tau}$
chains these across a whole block that ends in a return of type $\tau$. Each
rule additionally checks, via the ``causes'' judgement
(Section~\ref{app:causes}), that the effects a declaration performs are
permitted by $\Delta$.

\begin{placedfigure}
\Rule{ty return decl}{
    \causesderiv{\Delta}{x}{\stch return}
}{
    \tyDerivDecls{\Sigma}{\Gamma, x: \tau}{\Delta}{\synReturn{x}}{\tau}
  }



\Rule{declarations}{
    \tyDerivDecl{\Sigma}{\Gamma_1}{\Delta}{p_1}{\Gamma_2}
    \\
    \tyDerivDecls{\Sigma}{\Gamma_2}{\Delta}{p_2;\ldots; p_n; \synReturn{x}}{\tau}
  }
  {
    \tyDerivDecls{\Sigma}{\Gamma_1}{\Delta}{p_1;p_2;\ldots; p_n; \synReturn{x}}{\tau}
  }

\Rule{ty state decl}{
    \tyDerivExpr{\Gamma}{e}{\tau}{\cdot}\\
    \deltaEntry{x}{}{F} \in \Delta\\
    \Gamma' = \Gamma, x:\tau, \setter{x}: (\tySetter[\tau]{x})
  }{
    \tyDerivDecl
      {\Sigma}{\Gamma}{\Delta}
      {\synState{x}{e}}
      {\Gamma'}
  }



\Rule{ty let decl}{
    \tyDerivExpr{\Gamma}{e}{\tau}{\cdot} \\
    \overline{y} = \df(e) \\
    \deltaEntry{x}{\overline y}{F} \in \Delta
  }{
    \tyDerivDecl{\Sigma}{\Gamma}{\Delta}{\synCompLet{x}{e}}{\Gamma, x: \tau}
  }



\Rule{ty on decl}{
    \tyDerivExpr{\Gamma}{e}{\tyUnit}{F} \\
    \forall i \in [n], \causesderiv{\Delta}{x_i}{F}
  }{
    \tyDerivDecl{\Sigma}{\Gamma}{\Delta}{\synOn{x_1, x_2, \ldots, x_n}{e}}{\Gamma}
  }


\Rule{ty subcomp decl}{
  A: (\tinyComp{A}(\overline{z_i:\tau_i}):\tau_r\;\{p\}, \Delta_A, \Gamma_A) \in \Sigma\\
  \overline{\causesderiv{\Delta}{x_i}{\stch y.z_i}}\\
  \causesderiv{\Delta}{y.return}{\stch y}\\
  \overline{x_i: \tau_i}\in \Gamma \\\\
  \overline{v_\Delta} = fv(\Delta_A)\\
  \Delta_A\overline{[y.v_{\Delta i}/v_{\Delta i}]} \subseteq \Delta\\
  \overline{v_\Gamma} = fv(\Gamma_A)\\
  }{
    \tyDerivDecl{\Sigma}{\Gamma}{\Delta}{
      \synSubComp{y}{A(x_1, x_2, \ldots, x_n)}
    }{\Gamma, \Gamma_A\overline{[y.v_{\Gamma i}/v_{\Gamma i}]}, y: \tau_r}
  }
\captionof{figure}{Declaration-level typing rules.}
\label{fig:app-ty-decl}
\end{placedfigure}

\subsection{Expression} \;

Expression typing is a standard type-and-effect system. The judgement
$\tyDerivExpr{\Gamma}{e}{\tau}{F}$ assigns expression $e$ the type $\tau$ and
the effect $F$ describing the update-queue items its evaluation may emit ---
state changes, listener registrations, cancellations, and removals. Pure
expressions carry the empty effect $\cdot$, sequencing and application combine
sub-effects with $*$, branching combines them with $+$, and the event
primitives $\kw{bind}$, $\kw{once}$, $\kw{cancel}$, and $\kw{remove}$ introduce
the event-guarded modalities and teardown effects.

\begin{placedfigure}
\Rule{unit}{
    \and
  }{
    \tyDerivExpr{\Gamma}{()}{\tyUnit}{\cdot}
  }
\quad
\Rule{bool}{
    c \in \{true, false\}
  }{
    \tyDerivExpr{\Gamma}{c}{\tyBool}{\cdot}
  }
\quad
\Rule{base}{
    c \in \alpha
  }{
    \tyDerivExpr{\Gamma}{c}{\alpha}{\cdot}
  }
\quad
\Rule{var}{
  }{
    \tyDerivExpr{\Gamma, x: \tau}{x}{\tau}{\cdot}
  }

\Rule{app}{
    \tyDerivExpr{\Gamma}{e_1}{(\tyArrow{\tau_1}{\tau_2}[F])}{F_1}\\
    \tyDerivExpr{\Gamma}{e_2}{\tau_1}{F_2}
  }{
    \tyDerivExpr{\Gamma}{e_1\;e_2}{\tau_2}{F_1 * F_2 * F}
  }
\quad
\Rule{fn}{
    \tyDerivExpr{\Gamma, x :\tau_1}{e}{\tau_2}{F}
  }{
    \tyDerivExpr{\Gamma}{\synFn{x}{e}}{(\tyArrow{\tau_1}{\tau_2}[F])}{\cdot}
  }

\Rule{seq}{
    \tyDerivExpr{\Gamma}{e_1}{\tau_1}{F_1} \\
    \tyDerivExpr{\Gamma}{e_2}{\tau_2}{F_2}
  }{
    \tyDerivExpr{\Gamma}{e_1;e_2}{\tau_2}{F_1 * F_2}
  }
\quad
\Rule{branch}{
    \tyDerivExpr{\Gamma}{e_1}{\tyBool}{F_1} \\
    \tyDerivExpr{\Gamma}{e_2}{\tau}{F_2}\\
    \tyDerivExpr{\Gamma}{e_3}{\tau}{F_3}
  }{
    \tyDerivExpr{\Gamma}{\synIf{e_1}{e_2}{e_3}}{\tau}{F_1 * (F_2 + F_3)}
  }
\quad
\Rule{prod}{
    \tyDerivExpr{\Gamma}{e_1}{\tau_1}{F_1} \\
    \tyDerivExpr{\Gamma}{e_2}{\tau_2}{F_2}
  }{
    \tyDerivExpr{\Gamma}{(e_1, e_2)}{\tau_1 \times \tau_2}{F_1 * F_2}
  }
\quad
\Rule{fst}{
    \tyDerivExpr{\Gamma}{e_1}{\tau_1 \times \tau_2}{F}
  }{
    \tyDerivExpr{\Gamma}{\kw{fst}\;e_1}{\tau_1}{F}
  }
\quad
\Rule{snd}{
    \tyDerivExpr{\Gamma}{e_1}{\tau_1 \times \tau_2}{F}
  }{
    \tyDerivExpr{\Gamma}{\kw{snd}\;e_1}{\tau_1}{F}
  }
\quad
\Rule{ty bind}{
    \tyDerivExpr{\Gamma}{e}{(\tyArrow{\tau}{\tyUnit}[F])}{F_e}\\
    \tau = \Sigma_E(\eventEffBar{\ell}{v})
  }{
    \tyDerivExpr
      {\Gamma}
      {\kw{bind}\;\eventEffBar{\ell}{v}\;e}
      {\tyUnit}
      {F_e * \always{\eventEffBar{\ell}{v}}(F)}
  }
\quad
\Rule{ty once}{
    \tyDerivExpr{\Gamma}{e}{(\tyArrow{\tau}{\tyUnit}[F])}{F_e}\\
    \tau = \Sigma_E(\eventEffBar{\ell}{v})
  }{
    \tyDerivExpr
      {\Gamma}
      {\kw{once}\;\eventEffBar{\ell}{v}\;e}
      {\tyUnit}
      {F_e * \eventually{\eventEffBar{\ell}{v}}(F)}
  }
\quad
\Rule{ty cancel}{
    \\
  }{
    \tyDerivExpr
      {\Gamma}
      {\kw{cancel}\;\eventEffBar{\ell}{v}}
      {\tyUnit}
      {\cancelEv{\eventEffBar{\ell}{v}}}
  }
\quad
\Rule{ty remove}{
    \\
  }{
    \tyDerivExpr
      {\Gamma}
      {\kw{remove}\;\eventEffBar{\ell}{v}}
      {\tyUnit}
      {\remove{\eventEffBar{\ell}{v}}}
  }
\captionof{figure}{Expression typing rules.}
\label{fig:app-ty-expr}
\end{placedfigure}

\section{Side Conditions}
\subsection{Typing -- ``causes''} \;
\label{app:causes}

The ``causes'' judgement $\causesderiv{\Delta}{x}{F}$ reads ``a change in $x$ is
a sufficient cause for the effect $F$''. It is the side condition that ties the
declared effect environment $\Delta$ to what a declaration may do: a state, let,
effect, or subcomponent declaration type-checks only when every effect it
performs is caused, according to $\Delta$, by a variable it depends on. The
three rules cover the direct case (where $x$'s own entry in $\Delta$ dominates
$F$), the dependency case (a change in $x$ propagates to any $y$ that lists $x$
among its dependencies), and closure under the $*$ combinator.

\begin{placedfigure}
\Rule{x causes F}{
  \deltaEntry{x}{\ldots}{F'} \in \Delta
  \and
  F \le F'
}{
  \causesderiv{\Delta}{x}{F}
}
\quad
\Rule{x causes y}{
  \deltaEntry{y}{\ldots,x,\ldots}{F'} \in \Delta
}{
  \causesderiv{\Delta}{x}{\stch y}
}
\quad
\Rule{x causes $F_1 * F_2$}{
  \causesderiv{\Delta}{x}{F_1'}
  \and
  \causesderiv{\Delta}{x}{F_2'}
  \and
  F_1 \le F_1'
  \and
  F_2 \le F_2'
}{
  \causesderiv{\Delta}{x}{F_1 * F_2}
}
\captionof{figure}{The ``causes'' judgement.}
\label{fig:app-causes}
\end{placedfigure}

\subsection{Sub-Effecting}
\label{app:subeffecting}

Sub-effecting is the subtyping relation $F \le F'$ on effects, read ``$F$ is a
sub-effect of $F'$'': anywhere an effect $F'$ is expected, an $F$ with
$F \le F'$ may be supplied. The rules propagate sub-effecting through the $+$
and $*$ combinators, make the delay modality $\Next{t}{}$ monotone in its body
and splittable and additive in its grade $t$, provide reflexivity and
transitivity, and let the empty effect $\cdot$ be refined into any single leaf
effect (a state change, an event, an event-guarded modality, or a teardown), so
a computation may always be over-approximated by a larger effect.

\begin{placedfigure}
\Rule{se-plus-r}{
  F \le F_1\\
  F \le F_2
}{
  F \le F_1 + F_2
}
\quad
\Rule{se-plus-l1}{ }{
  F_1 + F_2 \le F_1
}
\quad
\Rule{se-plus-l2}{ }{
  F_1 + F_2 \le F_2
}
\quad
\Rule{se-mult}{
  F \le F_{i \in \{1,2\}}
}{
  F \le F_1 * F_2
}
\quad
\Rule{se-zero-l}{
  F \le F'
}{
  \Next{0} F \le {F'}
}
\quad
\Rule{se-zero-r}{
  F \le F'
}{
  F \le \Next{0} {F'}
}
\quad
\Rule{se-delay}{
  (\Next {t - t'} F) \le F' \\
  t \le t'
}{
  \Next {t} {F} \le \Next{t'} {F'}
}
\quad
\Rule{se-split-l}{ }{
  \Next{t_1 + t_2} {F} \le \Next {t_1} \Next {t_2} {F}
}
\quad
\Rule{se-split-r}{ }{
  \Next {t_1} \Next {t_2} {F} \le \Next{t_1 + t_2} {F}
}
\quad
\Rule{se-eq}{
  F = F'
}{
  F \le F'
}
\quad
\Rule{se-trans}{
  F \le F'\\F' \le F''
}{
  F \le F''
}
\quad
\Rule{se-delay-eq}{
  F \le F'
}{
  \Next {t} {F} \le \Next{t} {F'}
}
\quad
\Rule{se-subeffecting}{ }{
  \cdot \le \stch{x}
}
\label{sec:subeffecting-nse}
\quad
\Rule{se-subeffecting-nse}{ }{
  \cdot \le \eventEffBar{\ell}{v}
}

\quad\Rule{se-subeffecting-eventually}{ }{
  \cdot \le \eventually{\eventEffBar{\ell}{v}}(F)
}
\quad
\Rule{se-subeffecting-always}{ }{
  \cdot \le \always{\eventEffBar{\ell}{v}}(F)
}

\quad\Rule{se-eventually-body}{
  F \le F'
}{
  \eventually{\eventEffBar{\ell}{v}}(F) \le \eventually{\eventEffBar{\ell}{v}}(F')
}
\quad
\Rule{se-always-body}{
  F \le F'
}{
  \always{\eventEffBar{\ell}{v}}(F) \le \always{\eventEffBar{\ell}{v}}(F')
}

\quad\Rule{se-subeffecting-cancel}{ }{
  \cdot \le \cancelEv{\eventEffBar{\ell}{v}}
}
\quad
\Rule{se-subeffecting-remove}{ }{
  \cdot \le \remove{\eventEffBar{\ell}{v}}
}
\captionof{figure}{Sub-effecting.}
\label{fig:app-subeffecting}
\end{placedfigure}


\section{Semantics}
\label{app:semantics}



\subsection{External Scheduler}

\;

Three new objects are added to the component-level housekeeping semantics.
\begin{itemize}
  \item $\mathcal{E}$, the \emph{external scheduler}: an external scheduler presents
    an ordered list of \emph{event firings}, each of the form
    $(\ell\langle\overline{v_1}\rangle,\, v_2)$ where
    $\ell\langle\overline{v_1}\rangle$ is the full event label (label
    $\ell$ together with its discriminant value tuple $\overline{v_1}$) and
    $v_2$ is the \emph{payload value} delivered to listeners when this
    firing is dispatched. The same event may occur more than once in
    $\mathcal{E}$, possibly with different payloads, so a single render can
    see an event fire multiple times. New elements may be present in
    $\mathcal{E}$ at the beginning of each housekeeping phase. $\mathcal{E}$
    is treated as an unknown black box: well-typedness of configurations
    requires only that each emitted payload $v_2$ inhabit
    $\Sigma_E(\ell\langle\overline{v_1}\rangle)$, the data type declared for
    that event in the global event signature.
  \item $\mathcal{L}$, the \emph{listener map}: a partial function
    $\mathcal{L} : \ell\langle\overline{v}\rangle \rightharpoonup
    \mathcal{P}_{\mathit{fin}}(\mathit{Closure} \times \{\eventually{}, \always{}\} \times \mathit{Effect})$
    mapping exact events to finite sets of registered listener records. Each
    record $(c, m, F)$ pairs the closure $c$ with a mode tag --- $\eventually{}$
    for a one-shot listener and $\always{}$ for a persistent one --- and the
    closure's body effect $F$, recorded at the point of registration. The
    effect component $F$ is an instrumented annotation only: it is used by
    well-typedness of configurations to witness the produced update queue when
    the listener fires, and is erased in bare execution. If
    $\ell\langle\overline{v}\rangle \notin \mathrm{dom}(\mathcal{L})$ we treat
    $\mathcal{L}(\ell\langle\overline{v}\rangle) = \emptyset$.
  \item $\mathcal{C}$, the \emph{cancellation multiset}: a finite multiset of
    exact events $\ell\langle\overline{v}\rangle$. When \rulename{pop event}
    is about to dispatch an event, it first consults $\mathcal{C}$: if that
    event has multiplicity $\geq 1$ in $\mathcal{C}$, the dispatch is dropped
    and the event's multiplicity in $\mathcal{C}$ is decremented by one;
    otherwise the event is dispatched normally. Because $\mathcal{C}$ is a
    multiset, the same event may be cancelled more than once, in which case
    each cancellation drops one subsequent dispatch. We write
    $\{\!\!\{\,e\,\}\!\!\}$ for a singleton multiset, $\uplus$ for multiset
    union (which adds multiplicities), and $\setminus$ for multiset difference
    (which subtracts multiplicities).
\end{itemize}

\subsubsection{External task finished} \;

This judgement, $\mathcal{E} \Rightarrow_{\mathsf{ext}} U_{\mathcal{E}}$, reads
the pending event firings out of the external scheduler $\mathcal{E}$ as a queue
of update-queue items to be prepended at the start of a housekeeping phase.
Because $\mathcal{E}$ is an opaque black box, the single rule simply exposes
whatever ordered list of firings it currently holds.

\begin{placedfigure}
\Rule{external task finished}
  {
    \mathcal{E} = (\ell_1\langle\overline{v_{11}}\rangle,\, v_{21}), \ldots,
      (\ell_n\langle\overline{v_{1n}}\rangle,\, v_{2n})
  }
  {
    \mathcal{E} \;\Rightarrow_{\mathsf{ext}}\;
    (\ell_1\langle\overline{v_{11}}\rangle,\, v_{21}), \ldots,
    (\ell_n\langle\overline{v_{1n}}\rangle,\, v_{2n})
  }
\captionof{figure}{External tasks finishing.}
\label{fig:app-sem-ext-finished}
\end{placedfigure}

\subsubsection{Listener-safe queues} \;

A listener fires in response to an event, so it must not be able to inject a
fresh labelled event back into the queue. It may, however, enqueue the internal
housekeeping items: state updates, cancellations, listener removals, and new
listener registrations. The judgement $\listenersafe{U}$ holds exactly when $U$
contains no labelled events.

\begin{placedfigure}
\Rule{lsafe empty}
  { }
  { \listenersafe{\cdot} }
\quad
\Rule{lsafe setter}
  { \listenersafe{U} }
  { \listenersafe{setter_y(f),\, U} }
\quad
\Rule{lsafe cancel}
  { \listenersafe{U} }
  { \listenersafe{\mathsf{cancel}(\ell\langle\overline{v}\rangle),\, U} }

\Rule{lsafe remove}
  { \listenersafe{U} }
  { \listenersafe{\mathsf{remove}(\ell\langle\overline{v}\rangle),\, U} }
\quad
\Rule{lsafe listen}
  { \listenersafe{U} }
  { \listenersafe{\mathsf{listen}(\ell\langle\overline{v}\rangle,\, c,\, m,\, F),\, U} }
\captionof{figure}{Listener-safe queues.}
\label{fig:app-sem-lsafe}
\end{placedfigure}

\textbf{Lemma (Expression evaluation is listener-safe).}
If \semDerivExpr{V}{e}{v}{U}, then $\listenersafe{U}$.

\textit{Proof.}
By induction on the expression-level semantic derivation.  We check that
every rule in \rulename{Expression} above either emits the empty queue
$\cdot$ or extends with one of the four item kinds admitted by the
listener-safe judgement:
\begin{itemize}
  \item \rulename{var}, \rulename{constant}, \rulename{closure create}:
        $U = \cdot$; apply \rulename{lsafe empty}.
  \item \rulename{fst}, \rulename{snd}: $U$ is the queue of a subderivation;
        apply the IH to that subderivation.
  \item \rulename{seq}, \rulename{prod}, \rulename{branch1},
        \rulename{branch2}, \rulename{function app}: $U$ is the
        concatenation of subderivation queues; closure of the
        listener-safe judgement under concatenation (each \rulename{lsafe} rule extends a
        listener-safe tail with one admitted item, so an induction on the
        first queue gives concatenation closure) combined with the IH on
        each subderivation discharges the case.
  \item \rulename{setter app}: $U = U_1, U_2, setter_x(\Closure{y}{e_3}{V'})$.
        $U_1$ and $U_2$ are listener-safe by the IH; the trailing item is
        admitted by \rulename{lsafe setter}.
  \item \rulename{cancel}, \rulename{remove}: $U$ is a single $\mathsf{cancel}$
        or $\mathsf{remove}$ item; admitted by \rulename{lsafe cancel} /
        \rulename{lsafe remove}.
  \item \rulename{bind}, \rulename{once}: $U = U', \mathsf{listen}(\ldots)$.
        $U'$ is listener-safe by the IH; the trailing $\mathsf{listen}$ item
        is admitted by \rulename{lsafe listen}.
\end{itemize}
No expression-level rule emits an event-firing item
$(\eventEffBar{\ell}{v_1},\, v_2)$; labelled events enter the working queue
only via \rulename{flush external events} at the component level, never by
user code.  \qed

\subsection{Component}
\label{app:semantics:component}
\;

The extended judgment
$\semDerivTopStep{\mathcal{L}}{V_e}{V_s}{\mathcal{C}}{C}{status}{U'}$
threads $\mathcal{L}$ and $\mathcal{C}$ as store-passed state through all component rules.
Updated values of $\mathcal{C}$ or $\mathcal{L}$ appear in the conclusion's corresponding
argument positions; rules that do not modify them pass the same value through unchanged.

\begin{placedfigure}
\Rule{init}
  {
    \semDerivDecl{\Sigma}{\cdot}{\cdot}{x_1:v_1,\ldots, x_n:v_n}{``"}{p}{v_r}{U}{V}
  }{
    \semDerivTopStep
      {\emptyset}{V}{V}{\emptyset}
      {\synComponent{C}{\overline{x = v}}{\tau}{p}}
      {rendered}{U}
  }


\Rule{no updates}
  {
    \semDerivTopStep
      {\mathcal{L}}{V_e}{V_s}{\mathcal{C}}
      {\synComponent{C}{\overline{x = v}}{\tau}{p}}
      {rendered}{\cdot}
  }{
    \semDerivTopStep
      {\mathcal{L}}{V_e}{V_s}{\mathcal{C}}
      {\synComponent{C}{\overline{x = v}}{\tau}{p}}
      {waiting}{\cdot}
  }

\Rule{flush external events}
  {
    \semDerivTopStep
      {\mathcal{L}}{V_e}{V_s}{\mathcal{C}}
      {\synComponent{C}{\overline{x = v}}{\tau}{p}}
      {rendered}{U}
    \\\\
    \mathcal{E} \;\Rightarrow_{\mathsf{ext}}\; U_{\mathcal{E}}
  }{
    \semDerivTopStep
      {\mathcal{L}}{V_e}{V_s}{\mathcal{C}}
      {\synComponent{C}{\overline{x = v}}{\tau}{p}}
      {rendered}{U_{\mathcal{E}},\,U}
  }

\Rule{pop setter}
  {
    \semDerivTopStep
      {\mathcal{L}}{V_e}{V_s}{\mathcal{C}}
      {\synComponent{C}{\overline{x = v}}{\tau}{p}}
      {rendered}{setter_y(f),\, U'}
    \\\\
    \setterderiv{V_s}{setter_y(f)}{V_s'}
  }{
    \semDerivTopStep
      {\mathcal{L}}{V_e}{V_s'}{\mathcal{C}}
      {\synComponent{C}{\overline{x = v}}{\tau}{p}}
      {rendered}{U'}
  }

\Rule{pop event}
  {
    \semDerivTopStep
      {\mathcal{L}}{V_e}{V_s}{\mathcal{C}}
      {\synComponent{C}{\overline{x = v}}{\tau}{p}}
      {rendered}{(\ell\langle\overline{v_1}\rangle,\, v_2),\, U'}
    \\\\
    \mathcal{C}(\ell\langle\overline{v_1}\rangle) = 0
    \\\\
    \mathcal{L}(\ell\langle\overline{v_1}\rangle)
      = \{(\Closure{x_1}{e_1}{V_1}, m_1, F_1), \ldots, (\Closure{x_k}{e_k}{V_k}, m_k, F_k)\}
    \and
    \overline{m_i \in \{\eventually{}, \always{}\}}
    \\\\
    \overline{\semDerivExpr{V_i,\; x_i = v_2}{e_i}{()}{U_i}}
    \and
    \overline{\listenersafe{U_i}}
    \\\\
    \mathcal{L}' = \mathcal{L}\bigl[\ell\langle\overline{v_1}\rangle \mapsto
      \{(\Closure{x_i}{e_i}{V_i}, m_i, F_i) \mid m_i = \always{}\}\bigr]
  }{
    \semDerivTopStep
      {\mathcal{L}'}{V_e}{V_s}{\mathcal{C}}
      {\synComponent{C}{\overline{x = v}}{\tau}{p}}
      {rendered}{U',\, U_1, \ldots, U_k}
  }

\Rule{pop event cancelled}
  {
    \semDerivTopStep
      {\mathcal{L}}{V_e}{V_s}{\mathcal{C}}
      {\synComponent{C}{\overline{x = v}}{\tau}{p}}
      {rendered}{(\ell\langle\overline{v_1}\rangle,\, v_2),\, U'}
    \\\\
    \mathcal{C}(\ell\langle\overline{v_1}\rangle) \geq 1
    \and
    \mathcal{C}' = \mathcal{C} \setminus \{\!\!\{\,\ell\langle\overline{v_1}\rangle\,\}\!\!\}
  }{
    \semDerivTopStep
      {\mathcal{L}}{V_e}{V_s}{\mathcal{C}'}
      {\synComponent{C}{\overline{x = v}}{\tau}{p}}
      {rendered}{U'}
  }

\Rule{pop cancel}
  {
    \semDerivTopStep
      {\mathcal{L}}{V_e}{V_s}{\mathcal{C}}
      {\synComponent{C}{\overline{x = v_1}}{\tau}{p}}
      {rendered}{\mathsf{cancel}(\ell\langle\overline{v_2}\rangle),\, U'}
  }{
    \semDerivTopStep
      {\mathcal{L}}{V_e}{V_s}
      {\mathcal{C} \uplus \{\!\!\{\,\ell\langle\overline{v_2}\rangle\,\}\!\!\}}
      {\synComponent{C}{\overline{x = v_1}}{\tau}{p}}
      {rendered}{U'}
  }

\Rule{pop listen}
  {
    \semDerivTopStep
      {\mathcal{L}}{V_e}{V_s}{\mathcal{C}}
      {\synComponent{C}{\overline{x = v_1}}{\tau}{p}}
      {rendered}{\mathsf{listen}(\ell\langle\overline{v_2}\rangle,\, c,\, m,\, F),\, U'}
    \\\\
    m \in \{\eventually{}, \always{}\}
    \\\\
    \mathcal{L}' = \mathcal{L}[\ell\langle\overline{v_2}\rangle \mapsto \mathcal{L}(\ell\langle\overline{v_2}\rangle) \cup \{(c, m, F)\}]
  }{
    \semDerivTopStep
      {\mathcal{L}'}{V_e}{V_s}{\mathcal{C}}
      {\synComponent{C}{\overline{x = v_1}}{\tau}{p}}
      {rendered}{U'}
  }

\Rule{pop remove listener}
  {
    \semDerivTopStep
      {\mathcal{L}}{V_e}{V_s}{\mathcal{C}}
      {\synComponent{C}{\overline{x = v_1}}{\tau}{p}}
      {rendered}{\mathsf{remove}(\ell\langle\overline{v_2}\rangle),\, U'}
    \\\\
    \mathcal{L}' = \mathcal{L}[\ell\langle\overline{v_2}\rangle \mapsto \emptyset]
  }{
    \semDerivTopStep
      {\mathcal{L}'}{V_e}{V_s}{\mathcal{C}}
      {\synComponent{C}{\overline{x = v_1}}{\tau}{p}}
      {rendered}{U'}
  }

\Rule{waiting to rendered}
  {
    \semDerivTopStep
      {\mathcal{L}}{V_e}{V_s}{\mathcal{C}}
      {\synComponent{C}{\overline{x = v}}{\tau}{p}}
      {waiting}{\cdot}
    \\\\
    \semDerivDecl{\Sigma}{V_e}{V_s}{\overline{x = v'}}{``"}{p}{v_r}{U'}{V'} \\
    \exists i \;s.t. \; v_i \neq v'_i \\
  }{
    \semDerivTopStep
      {\mathcal{L}}{V'}{V'}{\mathcal{C}}
      {\synComponent{C}{\overline{x = v'}}{\tau}{p}}
      {rendered}{U'}
  }
\captionof{figure}{Component-level stepping.}
\label{fig:app-sem-component}
\end{placedfigure}

\breathe{}

\begin{placedfigure}
\Rule{valid setter}
  {
    V' = removeSetters(V) \\
    \semDerivExpr{V',x=v}{e}{v'}{\cdot}
  }{
    \setterderiv{V_s,x=v}{setter_x(\Closure{x}{e}{V})}{V_s,x=v'}
  }
\captionof{figure}{Setter validity.}
\label{fig:app-sem-valid-setter}
\end{placedfigure}

\subsection{Declarations} \;

The declaration evaluation judgement
$\semDerivDecl{\Sigma}{V_e}{V_s}{V}{c}{p}{v}{U}{V'}$ is the big-step operational
semantics of a declaration block. Under signature $\Sigma$, previous-render
environment $V_e$, and setter store $V_s$, it evaluates the block $p$ in the
local environment $V$ to a return value $v$, emitting an update queue $U$ and a
final environment $V'$. The two rules here handle a plain \rulename{var}
(let-bound) declaration and the terminating \rulename{return}; the
\rulename{Initialize} and \rulename{Rerender} rules below cover the remaining
declaration forms in their two evaluation modes.

\begin{placedfigure}
\Rule{var}
  {
    \semDerivExpr{V}{e}{v_0}{\cdot} \\
    \semDerivDecl{\Sigma}{V_e}{V_s}{V, x=v_0}{c}{p}{v}{U}{V'} \\
  }
  {
    \semDerivDecl{\Sigma}{V_e}{V_s}{V}{c}{\synCompLet{x}{e} p}{v}{U}{V'}
  }

\quad
\Rule{return}
  {
    \semDerivExpr{V}{x}{v}{\cdot} \\
  }
  {
    \semDerivDecl{\Sigma}{V_e}{V_s}{V}{c}{\synReturn{x}}{v}{\cdot}{V}
  }
\captionof{figure}{Declaration evaluation: \rulename{var} and \rulename{return}.}
\label{fig:app-sem-decls}
\end{placedfigure}

\subsubsection{Initialize} \;

These rules give the \emph{first-render} evaluation of state, effect, and
subcomponent declarations, taken when the component is first mounted (the
previous-render environments are empty, $V_e = V_s = \cdot$). A state
declaration installs its initial value and setter, an effect block always runs
once and enqueues its $\mathsf{on}$ closure, and a subcomponent is recursively
initialized with its arguments and has its resulting environment merged in under
the subcomponent's prefix.

\begin{placedfigure}
\Rule{state init}
  {
    \semDerivExpr{V}{e}{v_0}{\cdot} \\
    \semDerivDecl{\Sigma}{\cdot}{\cdot}{V, x= v_0, \setter x= setter_{c.x}}{c}{p}{v}{U}{V'}
  }
  {
    \semDerivDecl{\Sigma}{\cdot}{\cdot}{V}{c}{\synState{x}{e}p}{v}{U}{V'}
  }

\Rule{effect init}
  {
    \semDerivDecl{\Sigma}{\cdot}{\cdot}{V}{c}{p}{v}{U'}{V'}
  }
  {
    \semDerivDecl
      {\Sigma}{\cdot}{\cdot}{V}
      {c}
      {\synOn{x_1, x_2,\ldots,x_n}{e}
        p}
      {v}{\mathsf{on}(\Closure{\_}{e}{V}),\, U'}{V'}
  }

\Rule{subcomp init}{
    \semDerivDecl{\Sigma}{\cdot}{\cdot}{\overline{x'=v}}{c.y}{p_A}{v_A}{U_y}{V_y'} \\
    \semDerivDecl{\Sigma}{\cdot}{\cdot}{V, y= v_A}{c}{p}{v}{U}{V'} \\
    V_y'' = \text{addPrefix}("y.", V_y') \\
    \overline{x=v} \in V \\
    A: (\synComponent{A}{\overline{x'}}{\tau}{p_A}, \Delta, \Gamma) \in \Sigma
  }{
  \semDerivDecl
    {\Sigma}
    {\cdot}
    {\cdot}
    {V}
    {c}
    {\synSubComp{y}{A(\overline{x})} p}
    {v}
    {U_y, U}
    {V' \cup V_y''}
  }
\captionof{figure}{Declaration evaluation: initialization.}
\label{fig:app-sem-init}
\end{placedfigure}

\subsubsection{Rerender} \;

These rules give the \emph{re-render} evaluation of the same declaration forms,
taken on every render after the first. They are distinguished from
initialization by comparing the previous-render environment $V_e$ against the
current values: a state declaration reuses its stored value, and an effect block
re-runs its body only when one of its watched variables $x_1, \ldots, x_n$ has
actually changed (\rulename{effect rerender, yes changes}), otherwise it is
skipped (\rulename{effect rerender, no changes}).

\begin{placedfigure}
\Rule{state rerender}
  {
    \semDerivDecl{\Sigma}{V_e}{V_s}{V, x=v_0, \setter{x}=setter_{c.x}}{c}{p}{v}{U}{V'} \\
    x=v_0 \in V_s, x=v_e \in V_e
  }
  {
    \semDerivDecl
      {\Sigma}{V_e}{V_s}{V}
      {c}
      {\synState{x}{e}
        p}
      {v}{U}{V'}
  }

\Rule{effect rerender, no changes}
  {
    \semDerivDecl{\Sigma}{V_e}{V_s}{V}{c}{p}{v}{U}{V'} \\
    x_1=v_1', x_2=v_2',\ldots,x_n=v_n' \in V \\
    x_1=v_1, x_2=v_2,\ldots,x_n=v_n \in V_e \\
    \forall i \in [n], v_i = v_i' \\
  }
  {
    \semDerivDecl
      {\Sigma}{V_e}{V_s}{V}
      {c}
      {\synOn{x_1, x_2,\ldots,x_n}{e}
        p}
      {v}{U}{V'}
  }

\Rule{effect rerender, yes changes}
  {
    \semDerivDecl{\Sigma}{V_e}{V_s}{V}{c}{p}{v}{U'}{V'} \\
    \semDerivExpr{V}{e}{v_e}{U} \\
    x_1=v_1', x_2=v_2',\ldots,x_n=v_n' \in V \\
    x_1=v_1, x_2=v_2,\ldots,x_n=v_n \in V_e \\
    \exists i \in [n], v_i \ne v_i' \\
  }
  {
    \semDerivDecl
      {\Sigma}{V_e}{V_s}{V}
      {c}
      {\synOn{x_1, x_2,\ldots,x_n}{e} p}
      {v}{U,\, U'}{V'}
  }

\Rule{subcomp rerender}{
    \semDerivDecl{\Sigma}{V_e'}{V_s'}{\overline{x'=v}}{c.y}{p_A}{v_A}{U_y}{V_y'} \\
    \semDerivDecl{\Sigma}{V_e}{V_s}{V, y= v_A}{c}{p}{v}{U}{V'} \\
    V_e' = \text{removePrefix}("y.",\text{startsWith}("y.",V_e)) \\
    V_s' = \text{removePrefix}("y.",\text{startsWith}("y.",V_s)) \\
    V_y'' = \text{addPrefix}("y.", V_y') \\
    \overline{x=v} \in V \\
    A: (\synComponent{A}{\overline{x'}}{\tau}{p_A}, \Delta, \Gamma) \in \Sigma\\
  }{
    \semDerivDecl
      {\Sigma}
      {V_e}{V_s}{V}{c}
      {\synSubComp{y}{A(\overline{x})}p}
      {v}{U_y, U}{V'\cup V_y''}
  }
\captionof{figure}{Declaration evaluation: rerender.}
\label{fig:app-sem-rerender}
\end{placedfigure}

\subsection{Expression} \;

The expression evaluation judgement $\semDerivExpr{V}{e}{v}{U}$ evaluates
expression $e$ in environment $V$ to a value $v$, collecting into the update
queue $U$ every side-effecting item the evaluation produces --- setter
applications, cancellations, removals, and the listener registrations emitted by
$\kw{bind}$ and $\kw{once}$. It is a standard call-by-value big-step semantics;
the queue $U$ is the operational counterpart of the effect $F$ assigned to $e$
by expression typing.

\begin{placedfigure}
\Rule{var}
  {
  }
  {
    \semDerivExpr{V,x=v}{x}{v}{\cdot}
  }
\quad
\Rule{constant}
  {
    c \in \alpha \cup \{true, false\} \cup \{()\}
  }
  {
    \semDerivExpr{V}{c}{c}{\cdot}
  }
\quad
\Rule{setter app}
  {
    \semDerivExpr{V}{e_1}{setter_x}{U_1} \\
    \semDerivExpr{V}{e_2}{\Closure{y}{e_3}{V'}}{U_2}
  }
  {
    \semDerivExpr{V}{e_1\;e_2}{()}{U_1, U_2, setter_x(\Closure{y}{e_3}{V'})}
  }
\quad
\Rule{function app}
  {
    \semDerivExpr{V}{e_1}{\Closure{x}{e_3}{V'}}{U_1} \\
    \semDerivExpr{V}{e_2}{v_2}{U_2} \\
    \semDerivExpr{V', x: v_2}{e_3}{v_3}{U_3} \\
  }
  {
    \semDerivExpr{V}{e_1\;e_2}{v_3}{U_1, U_2, U_3}
  }
\quad
\Rule{closure create}
  {
  }
  {
    \semDerivExpr{V}{\synFn{x}{e}}{\Closure{x}{e}{V}}{\cdot}
  }
\quad
\Rule{seq}
  {
    \semDerivExpr{V}{e_1}{v_1}{U_1} \\
    \semDerivExpr{V}{e_2}{v_2}{U_2} \\
  }
  {
    \semDerivExpr{V}{e_1\;;\;e_2}{v_2}{U_1, U_2}
  }
\quad
\Rule{branch1}
  {
    \semDerivExpr{V}{e_1}{true}{U_1} \\
    \semDerivExpr{V}{e_2}{v_2}{U_2} \\
  }
  {
    \semDerivExpr{V}{\synIf{e_1}{e_2}{e_3}}{v_2}{U_1, U_2}
  }

\quad
\Rule{branch2}
  {
    \semDerivExpr{V}{e_1}{false}{U_1} \\
    \semDerivExpr{V}{e_3}{v_3}{U_3} \\
  }
  {
    \semDerivExpr{V}{\synIf{e_1}{e_2}{e_3}}{v_3}{U_1, U_3}
  }
\quad
\Rule{prod}{
    \semDerivExpr{V}{e_1}{v_1}{U_1} \\
    \semDerivExpr{V}{e_2}{v_2}{U_2}
    }{
    \semDerivExpr{V}{(e_1, e_2)}{(v_1, v_2)}{U_1, U_2}
  }
\quad
\Rule{fst}{
    \semDerivExpr{V}{e}{(v_1, v_2)}{U}
  }{
    \semDerivExpr{V}{\kw{fst}\;e}{v_1}{U}
  }
\quad
\Rule{snd}{
    \semDerivExpr{V}{e}{(v_1, v_2)}{U}
  }{
    \semDerivExpr{V}{\kw{snd}\;e}{v_2}{U}
  }
\quad
\Rule{cancel}{
    \\
  }{
    \semDerivExpr{V}{\kw{cancel}\;\ell\langle\overline{v}\rangle}{()}{\mathsf{cancel}(\ell\langle\overline{v}\rangle)}
  }
\quad
\Rule{remove}{
    \\
  }{
    \semDerivExpr{V}{\kw{remove}\;\ell\langle\overline{v}\rangle}{()}{\mathsf{remove}(\ell\langle\overline{v}\rangle)}
  }

\quad
\Rule{bind}{
    \semDerivExpr{V}{e}{\Closure{x}{e'}{V'}}{U} \\
    \valTyDeriv{\Closure{x}{e'}{V'}}{\tyArrow{\tau}{\tyUnit}[F]}
  }{
    \semDerivExpr{V}{\kw{bind}\;\ell\langle\overline{v_1}\rangle\;e}{()}{U,\,
    \mathsf{listen}(\ell\langle\overline{v_1}\rangle,\,
    \Closure{x}{e'}{V'},\, \always{},\, F)}
  }

\quad
\Rule{once}{
    \semDerivExpr{V}{e}{\Closure{x}{e'}{V'}}{U} \\
    \valTyDeriv{\Closure{x}{e'}{V'}}{\tyArrow{\tau}{\tyUnit}[F]}
  }{
    \semDerivExpr{V}{\kw{once}\;\ell\langle\overline{v_1}\rangle\;e}{()}{U,\,
    \mathsf{listen}(\ell\langle\overline{v_1}\rangle,\,
    \Closure{x}{e'}{V'},\, \eventually{},\, F)}
  }
\captionof{figure}{Expression evaluation.}
\label{fig:app-sem-expr}
\end{placedfigure}

\section{Instrumented Semantics}
\subsection{Declarations} \;

The instrumented declaration judgement
$\semPfDerivDecl{\Gamma}{X}{V_s}{V}{c}{p}{v}{T}{V'}$ mirrors the ordinary
declaration semantics but additionally records a \emph{trace} $T$ --- a
structured log of the render with one entry per declaration
($\mathit{LET}$, $\mathit{EFF}$, $\mathit{RET}$, and subcomponent
$\mathit{ARG}$s) --- and is parameterized by the set $X$ of variables that
changed this render. The trace is the object that the metatheory's
\emph{justification} judgements later check against $\Delta$; it is instrumented
bookkeeping only and is erased in ordinary execution. These two rules cover the
\rulename{var} (let-bound) and \rulename{return} declarations.

\begin{placedfigure}
\Rule{var}
  {
    \semDerivExpr{V}{e}{v_0}{\cdot} \\
    \semPfDerivDecl{\Gamma}{X}{V_s}{V, x=v_0}{c}{p}{v}{U'}{V'} \\
  }
  {
  \semPfDerivDecl{\Gamma}{X}{V_s}{V}{c}{\synCompLet{x}{e}p}{v}{\semPfTraceLet{x}{\mathit{df}(e)}, U}{V'}
  }
\quad
\Rule{return}
  {
    x = v_s \in V_s \\
    x = v \in V \\
    \valTyEnvDeriv{\Gamma}{V}
  }
  {
    \semPfDerivDecl{\Gamma}{X}{V_s}{V}{c}{\synReturn{x}}{v}{\semPfTraceReturn{x}}{V}
  }
\captionof{figure}{Instrumented declaration evaluation: \rulename{var} and \rulename{return}.}
\label{fig:app-inst-decls}
\end{placedfigure}

\subsubsection{Rerender} \;

These are the re-render rules of the instrumented semantics, the trace-producing
counterparts of the re-render rules of the ordinary semantics. An effect block
that fires (its watched set meets the changed set $X$) records an $\mathit{EFF}$
entry carrying the update queue it produced (\rulename{effect rerender, yes
changes}); one that does not fire records nothing new
(\rulename{effect rerender, no changes}), and subcomponents recurse with the
appropriately re-prefixed changed set.

\begin{placedfigure}
\Rule{state rerender}
  {
    \semPfDerivDecl{\Gamma}{X}{V_s}{V, x=v, \setter{x}=setter_{c.x}}{c}{p}{v}{T}{V'} \\
    x : \tau \in \Gamma \\
    x = v \in V_s \\
    x = setter_{c.x} \in V_s \\
    \valTyDeriv{v}{\tau}
  }
  {
    \semPfDerivDecl
      {\Gamma}{X}{V_s}{V}{c}
      {\synState{x}{e}
        p}
      {v}{T}{V'}
  }
\Rule{effect rerender, no changes}
  {
    \semPfDerivDecl
      {\Gamma}{X}{V_s}{V}{c}
      {p}
      {v}{T}{V'} \\
    \overline{x} \cap X = \emptyset\\
  }
  {
    \semPfDerivDecl
      {\Gamma}{X}{V_s}{V}{c}
      {\synOn{\overline{x}}{e}
        p}
      {v}{T}{V'}
  }
\Rule{effect rerender, yes changes}
  {
    \semPfDerivDecl{\Gamma}{X}{V_s}{V}{c}{p}{v}{T}{V'} \\
    \semDerivExpr{V}{e}{()}{U} \\
    \overline{x} \cap X \neq \emptyset\\
    \exists \Gamma'. (\Gamma' \subseteq \Gamma, \valTyEnvDeriv{\Gamma'}{V})
  }
  {
    \semPfDerivDecl
      {\Gamma}{X}{V_s}{V}{c}
      {\synOn{\overline{x}}{e} p}
      {v}{\semPfTraceOn{\overline{x} \cap X}{U}, T}{V'}
  }
\Rule{subcomp rerender}{
    \semPfDerivDecl{\Gamma_{sF}}{X'}{V_s'}{\overline{x'=v}}{c.y}{p_F}{v_F}{T_y,\semPfTraceReturn{z}}{V_y'} \\
    \semPfDerivDecl{\Gamma_s}{X}{V_s}{V, y= v_F}{c}{p}{v}{T}{V'} \\
    \overline{x=v} \in V \\
    F: (\synComponent{F}{\overline{x'}}{\tau_{rF}}{p_F}, \Delta_F, \Gamma_{sF}) \in \Sigma\\\\
    X' = \text{removePrefix}("y.",\text{startsWith}("y.",X)) \\
    V_s' = \text{removePrefix}("y.",\text{startsWith}("y.",V_s)) \\
    V_y'' = \text{addPrefix}("y.", V_y') \\
  }{
    \semPfDerivDecl
      {\Gamma}
      {X}{V_s}{V}{c}
      {\synSubComp{y}{A(\overline{x})}p}
      {v}{T_y, T}{V'\cup V_y''}
  }
\captionof{figure}{Instrumented declaration evaluation: rerender.}
\label{fig:app-inst-rerender}
\end{placedfigure}

\section{Preservation}

\subsection{Values and Types}

\;

These rules define when a runtime value is well-typed. The value-typing
judgement $\valTyDeriv{v}{\tau}$ assigns a closed value $v$ its type $\tau$
(closures also carry the latent effect $F$ of their body), and the
value-environment judgement $\valTyEnvDeriv{\Gamma}{V}$ lifts this pointwise,
holding when every binding in the runtime environment $V$ matches its declared
type in $\Gamma$. These are the semantic typing relations used throughout the
preservation proof.

\begin{placedfigure}
\Rule{val-unit}
  { }
  {\valTyDeriv{()}{\tyUnit}}
\quad
\Rule{val-true}
  { }
  {\valTyDeriv{true}{\tyBool}}
\quad
\Rule{val-false}
  { }
  {\valTyDeriv{false}{\tyBool}}
\quad
\Rule{val-const}
  { }
  {\valTyDeriv{c}{\alpha}}
\quad
\Rule{val-pair}
  {
    \valTyDeriv{v_1}{\tau_1} \\
    \valTyDeriv{v_2}{\tau_2}
  }
  {\valTyDeriv{(v_1, v_2)}{\tau_1 \times \tau_2}}
\quad
\Rule{val-closure}
  {
    \exists \Gamma \\
    \valTyEnvDeriv{\Gamma}{V} \\
    \tyDerivExpr{\Gamma, x: \tau_1}{e}{\tau_2}{F}
  }
  {\valTyDeriv{\Closure{x}{e}{V}}{(\tyArrow{\tau_1}{\tau_2}[F])}}
\quad
\Rule{val-setter}
  {\mathit{Main}: (\tinyComp{Main}, \Delta, \Gamma) \in \Sigma \\
    x: \tau \in \Gamma
  }
  {\valTyDeriv{setter_x}{(\tySetter{x})}}
\captionof{figure}{Value typing.}
\label{fig:app-val-typing}
\end{placedfigure}

\breathe{}

\begin{placedfigure}
\Rule{valenv-empty}
  { }
  {\valTyEnvDeriv{\cdot}{\cdot}}
\quad
\Rule{valenv-cons}
  {
    \valTyEnvDeriv{\Gamma}{V} \\
    \valTyDeriv{v}{\tau}
  }
  {\valTyEnvDeriv{\Gamma, x:\tau}{V,x = v}}
\captionof{figure}{Value environment typing.}
\label{fig:app-valenv-typing}
\end{placedfigure}

\subsection{Side Conditions}




\subsubsection{Typing Update Queues} \;

We first give the precise effect of a single queue item, one rule per kind of
event the queue may carry, then assemble the effect of a whole queue with
\rulename{tq nil} and \rulename{tq cons}. Subtyping (including
$*$-splitting and $+$-branching across the queue) is handled by the rules in
\textit{Derived from Subtyping Relations} below.

\paragraph{Single events.}\;

\begin{placedfigure}
\Rule{tq setter}
  { }
  {\effQueueDeriv{\Next{1r}{\stch{x}}}{setter_x(f)}}
\quad
\Rule{tq event}
  {
    \valTyDeriv{v_2}{\Sigma_E(\eventEffBar{\ell}{v_1})}
  }
  {\effQueueDeriv{\eventEffBar{\ell}{v_1}}{(\eventEffBar{\ell}{v_1},\, v_2)}}
\quad
\Rule{tq cancel}
  { }
  {\effQueueDeriv{\cancelEv{\eventEffBar{\ell}{v}}}{\mathsf{cancel}(\eventEffBar{\ell}{v})}}
\quad
\Rule{tq remove}
  { }
  {\effQueueDeriv{\remove{\eventEffBar{\ell}{v}}}{\mathsf{remove}(\eventEffBar{\ell}{v})}}
\quad
\Rule{tq listen once}
  {
    \valTyDeriv{c}{\tyArrow{\tau}{\tyUnit}[F]}\\
    \tau = \Sigma_E(\eventEffBar{\ell}{v})
  }
  {\effQueueDeriv
    {\eventually{\eventEffBar{\ell}{v}}(F)}
    {\mathsf{listen}(\eventEffBar{\ell}{v},\, c,\, \eventually{},\, F)}}
\quad
  \Rule{tq listen always}
  {
    \valTyDeriv{c}{\tyArrow{\tau}{\tyUnit}[F]}\\
    \tau = \Sigma_E(\eventEffBar{\ell}{v})
  }
  {\effQueueDeriv
    {\always{\eventEffBar{\ell}{v}}(F)}
    {\mathsf{listen}(\eventEffBar{\ell}{v},\, c,\, \always{},\, F)}}
\captionof{figure}{Queue-item typing: single events.}
\label{fig:app-tq-single}
\end{placedfigure}

\paragraph{Queue assembly.} \;

\begin{placedfigure}
\Rule{tq nil}
  { }
  {\effQueueDeriv{\cdot}{\cdot}}
\quad
\Rule{tq cons}
  {
    \effQueueDeriv{F_1}{e} \\
    \effQueueDeriv{F_2}{U}
  }
  {\effQueueDeriv{F_1 * F_2}{e,\, U}}
\captionof{figure}{Queue-item typing: queue assembly.}
\label{fig:app-tq-assembly}
\end{placedfigure}

\;

\subsubsection{Derived from Subtyping Relations} \;

These two rules extend queue-item typing to whole queues using the sub-effecting
structure of effects. \rulename{tq-merge} types a concatenated queue with the
product of its parts' effects, and \rulename{tq-pick} lets a queue that realises
one branch of a $+$ effect be typed at the whole choice. Together with the base
rules above, they make $\effQueueDeriv{F}{U}$ closed under the $*$ and $+$
connectives, mirroring sub-effecting on the effect side.

\begin{placedfigure}
\Rule{tq-merge}
  {
    \effQueueDeriv{F_1}{U_1} \\
    \effQueueDeriv{F_2}{U_2}
  }
  {\effQueueDeriv{F_1*F_2}{U_1,U_2}}
\quad
\Rule{tq-pick}
  {
    \effQueueDeriv{F_i}{U} \\
    i\in\{1,2\}
  }
  {\effQueueDeriv{F_1+F_2}{U}}
\captionof{figure}{Queue typing derived from subtyping.}
\label{fig:app-tq-subtyping}
\end{placedfigure}

\textbf{Lemma (Listen typing inversion).}
If $\effQueueDeriv{F'}{U}$ and
$\mathsf{listen}(\eventEffBar{\ell}{v},\, c,\, m,\, F) \in U$,
then $\valTyDeriv{c}{\tyArrow{\Sigma_E(\eventEffBar{\ell}{v})}{\tyUnit}[F]}$.

\textit{Proof.}
By induction on the derivation of $\effQueueDeriv{F'}{U}$.
\begin{itemize}
  \item \rulename{tq nil}: $U = \cdot$ contains no items; vacuous.
  \item \rulename{tq event}, \rulename{tq setter}, \rulename{tq cancel},
        \rulename{tq remove}: $U$ is a single non-listen item; vacuous.
  \item \rulename{tq listen once} / \rulename{tq listen always}:
        $U = \mathsf{listen}(\eventEffBar{\ell}{v},\, c,\, m,\, F)$.
        The rule's premises include
        $\valTyDeriv{c}{\tyArrow{\tau}{\tyUnit}[F]}$ together with
        $\tau = \Sigma_E(\eventEffBar{\ell}{v})$, which is the goal.
  \item \rulename{tq cons}: $U = e,\, U''$ derived from
        $\effQueueDeriv{F_1}{e}$ and $\effQueueDeriv{F_2}{U''}$ with
        $F' = F_1 * F_2$.  The listen item is either $e$ (apply the IH to the
        smaller derivation $\effQueueDeriv{F_1}{e}$) or in $U''$ (apply the IH
        to $\effQueueDeriv{F_2}{U''}$).
  \item Subtyping rules ($*$-merge, $+$-pick): structurally smaller
        $\effQueueDeriv{F''}{U}$ premises witness the same items; apply the IH.
\end{itemize}
\qed

\subsubsection{Justifying Traces} \;

The justification judgement $\justifies{\Delta, X}{T}$ reads ``the effect
environment $\Delta$, together with the changed-variable set $X$, justifies the
trace $T$''. It is the central soundness relation of the metatheory: it walks
the trace produced by a render and checks that every recorded effect (each
$\mathit{EFF}$, $\mathit{LET}$, $\mathit{RET}$, external firing, and success
continuation) was permitted by $\Delta$ --- that is, caused by a variable that
actually changed --- so that a well-typed program only ever fires effects its
declared dependencies allow.



\begin{placedfigure}
\Rule{justify-args}
  {\justifies{\Delta, X}{T}}
  {\justifies{\Delta, X}{\semPfTraceArgs{\overline x}{\overline v_1}{\overline v_2},T}}
\quad
\Rule{justify-return}
  {
    \causesderiv{\Delta}{x}{\stch{return}}
  }
  {\justifies{\Delta, X}{\semPfTraceReturn{x}}}

\bigskip

\Rule{justify-ext}
  {
    \forall i.\; R_i \in \left\{
      \semPfTraceFire{(\eventEffBar{\ell_i}{v_{1i}},\, v_{2i})},\;
      \semPfTraceFireCxl{(\eventEffBar{\ell_i}{v_{1i}},\, v_{2i})},\;
      \semPfTraceFireSuc{(\eventEffBar{\ell_i}{v_{1i}},\, v_{2i})}{(U_{i1},F_{i1}),\ldots,(U_{i,k_i},F_{i,k_i})}
    \right\} \\
    \forall i\;\text{with}\;R_i = \semPfTraceFireSuc{(\eventEffBar{\ell_i}{v_{1i}},\, v_{2i})}{(U_{i1},F_{i1}),\ldots,(U_{i,k_i},F_{i,k_i})}.\;\forall j.\big(\\
    \quad\effQueueDeriv{F_{ij}}{U_{ij}}\\
    \quad\exists c_{ij}.\;\valTyDeriv{c_{ij}}{\tyArrow{\Sigma_E(\eventEffBar{\ell_i}{v_{1i}})}{\tyUnit}[F_{ij}]}\big)\\
    \justifies{\Delta, X}{T}
  }
  {\justifies{\Delta, X}{
    \semPfTraceExt{
      (\eventEffBar{\ell_1}{v_{11}},\, v_{21}),
      (\eventEffBar{\ell_2}{v_{12}},\, v_{22}), \ldots,
      (\eventEffBar{\ell_n}{v_{1n}},\, v_{2n})},
    R_1, R_2, \ldots, R_n,
    T
  }}

\bigskip

\Rule{justify-let}
  {
    \overline{\causesderiv{\Delta}{x_i}{\stch y}}\\
    \justifies{\Delta, X}{T}
  }
  {\justifies{\Delta, X}{\semPfTraceLet{y}{\overline x}, T}}
\quad
\Rule{justify-on}
  {
    \overline x \subseteq X \\
    \exists F. (
      \overline{\causesderiv{\Delta}{x_i}{F}}
      \text{ and }
      \effQueueDeriv{F}{U}
    )\\
    \justifies{\Delta, X}{T}
  }
  {\justifies{\Delta, X}{\semPfTraceOn{\overline x}{U}, T}}
\captionof{figure}{Justifying traces.}
\label{fig:app-justify-traces}
\end{placedfigure}



\subsubsection{Justifying Variable Change Sets} \;

The change-set judgement $\justifiesVarSet{\Delta, X_1}{T}{X_2}$ computes, from a
trace $T$ evaluated under the incoming changed set $X_1$, the set $X_2$ of
variables whose state was actually changed during that render. It reads the
setter applications out of the trace's $\mathit{EFF}$ and
$\mathit{FIRE\text{-}SUC}$ entries (and the changed arguments out of an
$\mathit{ARG}$ entry); the resulting $X_2$ becomes the changed set against which
the next render's justification is checked.

\begin{placedfigure}
\Rule{jvs-empty}
  { }
  {
    \justifiesVarSet{\Delta, X_1}{\cdot}{\emptyset}
  }
\quad
\Rule{jvs-return}
  {
    \justifiesVarSet{\Delta, X_1}{T}{X_2}
  }
  {
    \justifiesVarSet{\Delta, X_1}{T,\semPfTraceReturn{x}}{X_2}
  }
\quad
\Rule{jvs-on}
  {
    \justifiesVarSet{\Delta, X_1}{T}{X_2}
  }
  {
  \justifiesVarSet
    {\Delta, X_1}
    {T,\semPfTraceOn{\overline x}{U}}
    {X_2 \cup \{y \mid setter_y(f) \in U\}}
  }
\quad
\Rule{jvs-args}
  {
    \
  }
  {
  \justifiesVarSet
    {\Delta, X_1}
    {\semPfTraceArgs{\overline x}{\overline v_1}{\overline v_2}}
    {\{x_i \mid v_{1i} \neq v_{2i}\}}
  }
\quad
\Rule{jvs-let}
  {
    \justifiesVarSet{\Delta, X_1}{T}{X_2}
  }
  {
    \justifiesVarSet{\Delta, X_1}{T,\semPfTraceLet{x}{\overline y}}{X_2}
  }
\quad
\Rule{jvs-ext}
  {
    \justifiesVarSet{\Delta, X_1}{T}{X_2}
  }
  {
    \justifiesVarSet
      {\Delta, X_1}
      {T,\semPfTraceExt{
        (\eventEffBar{\ell_1}{v_{11}},\, v_{21}), \ldots,
        (\eventEffBar{\ell_n}{v_{1n}},\, v_{2n})}}
      {X_2}
  }
\quad
\Rule{jvs-fire}
  {
    \justifiesVarSet{\Delta, X_1}{T}{X_2}
  }
  {
    \justifiesVarSet{\Delta, X_1}{T,\semPfTraceFire{(\eventEffBar{\ell}{v_1},\, v_2)}}{X_2}
  }
\quad
\Rule{jvs-firecxl}
  {
    \justifiesVarSet{\Delta, X_1}{T}{X_2}
  }
  {
    \justifiesVarSet{\Delta, X_1}{T,\semPfTraceFireCxl{(\eventEffBar{\ell}{v_1},\, v_2)}}{X_2}
  }
\quad
\Rule{jvs-firesuc}
  {
    \justifiesVarSet{\Delta, X_1}{T}{X_2}
  }
  {
    \justifiesVarSet
      {\Delta, X_1}
      {T,\semPfTraceFireSuc{(\eventEffBar{\ell}{v_1},\, v_2)}{(U_1, F_1), \ldots, (U_k, F_k)}}
      {X_2 \cup \{y \mid \exists j.\; setter_y(f) \in U_j\}}
  }
\captionof{figure}{Justifying variable change sets.}
\label{fig:app-justify-varsets}
\end{placedfigure}

\subsubsection{Flush Queue} \;

The flush-queue judgment threads the component state
$(X, V_s, \mathcal{L}, \mathcal{C})$ through an update queue $U$, applying
each item to the appropriate carrier: setters update $V_s$ and $X$, cancels
extend $\mathcal{C}$, removes reset listener buckets in $\mathcal{L}$, and
listener registrations extend $\mathcal{L}$. The judgment is defined only on
listener-safe queues (\listenersafe{U} from \rulename{lsafe}): the four rules
below partition the item constructors that the listener-safe judgement admits, so a
derivation exists for $U$ iff $U$ is lsafe. Labelled events and effect-block
evaluations are ruled out by construction --- they are dispatched at the
component level (via \rulename{pop event}), not flushed here.

\begin{placedfigure}
\Rule{fq nil}
  { }
  {
    \setterQueueDeriv{(X, V_{s}, \mathcal{L}, \mathcal{C})}{\cdot}{(X, V_{s}, \mathcal{L}, \mathcal{C})}
  }

\bigskip

\Rule{fq setter}
  {
    \setterderiv{V_{s1}}{setter_x(f)}{V_{s2}}\\
    \setterQueueDeriv
      {(X_1 \cup \{x\}, V_{s2}, \mathcal{L}, \mathcal{C})}
      {U}
      {(X_3, V_{s3}, \mathcal{L}', \mathcal{C}')}
  }
  {
    \setterQueueDeriv
      {(X_1, V_{s1}, \mathcal{L}, \mathcal{C})}
      {setter_x(f),U}
      {(X_3, V_{s3}, \mathcal{L}', \mathcal{C}')}
  }

\bigskip

\Rule{fq cancel}
  {
    \setterQueueDeriv
      {(X_1, V_{s}, \mathcal{L}, \mathcal{C} \uplus \{\!\!\{\,\eventEffBar{\ell}{v}\,\}\!\!\})}
      {U}
      {(X_2, V_{s}', \mathcal{L}', \mathcal{C}')}
  }
  {
    \setterQueueDeriv
      {(X_1, V_{s}, \mathcal{L}, \mathcal{C})}
      {\mathsf{cancel}(\eventEffBar{\ell}{v}),U}
      {(X_2, V_{s}', \mathcal{L}', \mathcal{C}')}
  }

\bigskip

\Rule{fq remove}
  {
    \mathcal{L}_1 = \mathcal{L}[\eventEffBar{\ell}{v} \mapsto \emptyset] \\
    \setterQueueDeriv
      {(X_1, V_{s}, \mathcal{L}_1, \mathcal{C})}
      {U}
      {(X_2, V_{s}', \mathcal{L}', \mathcal{C}')}
  }
  {
    \setterQueueDeriv
      {(X_1, V_{s}, \mathcal{L}, \mathcal{C})}
      {\mathsf{remove}(\eventEffBar{\ell}{v}),U}
      {(X_2, V_{s}', \mathcal{L}', \mathcal{C}')}
  }

\bigskip

\Rule{fq listen}
  {
    m \in \{\eventually{}, \always{}\} \\
    \tau = \Sigma_E(\eventEffBar{\ell}{v})\\
    \valTyDeriv{c}{\tyArrow{\tau}{\tyUnit}[F]} \\
    \mathcal{L}_1 = \mathcal{L}[\eventEffBar{\ell}{v} \mapsto
      \mathcal{L}(\eventEffBar{\ell}{v}) \cup \{(c, m, F)\}] \\
    \setterQueueDeriv
      {(X_1, V_{s}, \mathcal{L}_1, \mathcal{C})}
      {U}
      {(X_2, V_{s}', \mathcal{L}', \mathcal{C}')}
  }
  {
    \setterQueueDeriv
      {(X_1, V_{s}, \mathcal{L}, \mathcal{C})}
      {\mathsf{listen}(\eventEffBar{\ell}{v},\, c,\, m,\, F),U}
      {(X_2, V_{s}', \mathcal{L}', \mathcal{C}')}
  }
\captionof{figure}{Flush-queue judgment.}
\label{fig:app-flush-queue}
\end{placedfigure}

\subsection{Instrumented Well Typed Configuration}\;

A configuration
$\confPfTop{\mathcal{S}}{A}{\overline{x=v}}{status}{T_1}{T_2}$ is a whole-program
runtime state: the component $A$ with its current arguments, the state tuple
$\mathcal{S} = (X_1, X_2, V_s, \mathcal{L}, \mathcal{C})$, a status of
\kw{waiting} or \kw{rendered}, and the render's trace split into an
already-processed prefix $T_1$ and a not-yet-processed suffix $T_2$.
\rulename{wt-config} defines when such a configuration is \emph{well-typed}: the
component checks, $T_1$ and $T_2$ are jointly justified by $\Delta$, $T_1$
produces the recorded change set $X_2$, and the store $V_s$ together with every
registered listener in $\mathcal{L}$ is well-typed. The remaining rules
(\rulename{Init} through \rulename{waiting to rendered}) are the configuration
step relation $\to$, which drains the trace one entry at a time --- flushing
updates and lets, dispatching or cancelling external event firings, and
re-rendering when arguments change. Preservation
(Theorem~\ref{appendix:thm:preservation-configurations}) shows this relation
preserves well-typedness.

\begin{placedfigure}
\Rule{wt-config}
  {
    \mathcal{S} = (X_1, X_2, V_s, \mathcal{L}, \mathcal{C}) \\\\
    \tyDerivComp
      {\Sigma}
      {\synComponent{A}{\overline{x_i : \tau_i}}{\tau_r}{p}}
      {(\Delta, \Gamma_s)}
    \\\\
    \justifies{\Delta, X_1}{T_1,T_2}\\
    \justifiesVarSet{\Delta,X_1}{T_1}{X_2}\\
    \valTyEnvDeriv{\Gamma_s}{V_s} \\
    \overline{\valTyDeriv{v}{\tau}}\\
    \forall \eventEffBar{\ell}{v_1} \in \mathrm{dom}(\mathcal{L}),\;
      \forall (c, m, F) \in \mathcal{L}(\eventEffBar{\ell}{v_1}).\;
      \valTyDeriv{c}{\tyArrow{\Sigma_E(\eventEffBar{\ell}{v_1})}{\tyUnit}[F]}\\
  }
  {
    \confPfTop
      {\mathcal{S}}
      {A}{\overline{x=v}}
      {status}{T_1}{T_2}
  }
\captionof{figure}{Instrumented well-typed configuration.}
\label{fig:app-wt-config}
\end{placedfigure}

\breathe{}

\begin{placedfigure}
\Rule{Init}
  {
    A: (\synComponent{A}{\overline{x:\tau}}{\tau_r}{p}, \Delta, \Gamma_s) \in \Sigma \\
    \semPfDerivDecl{\Gamma_s}{\emptyset}{\cdot}{\overline{x = v}}{``"}{p}{v_r}{T}{V}
  }{
    \cdot \to
      \confPfTop
        {(\emptyset, \emptyset, V, \emptyset, \emptyset)}
        {A}{\overline{x=v}}
        {rendered}{\cdot}{T}
  }

\bigskip

\Rule{No Updates}
  { }{
    \confPfTop
        {\mathcal{S}}
        {A}{\overline{x=v}}
        {rendered}{T_1}{\semPfTraceReturn{y}}
    \\\\\to\\\\
    \confPfTop
        {\mathcal{S}}
        {A}{\overline{x=v}}
        {waiting}{T_1,\semPfTraceReturn{y}}{\cdot}
  }

\bigskip

\Rule{Flush Update}
  {
    \mathcal{S} = (X_1, X_{21}, V_{s1}, \mathcal{L}_1, \mathcal{C}_1) \\
    \mathcal{S}' = (X_1, X_{22}, V_{s2}, \mathcal{L}_2, \mathcal{C}_2) \\\\
    \listenersafe{U} \\
    \setterQueueDeriv
      {(X_{21},V_{s1}, \mathcal{L}_1, \mathcal{C}_1)}
      {U}
      {(X_{22},V_{s2}, \mathcal{L}_2, \mathcal{C}_2)}
  }{
    \confPfTop
      {\mathcal{S}}
      {A}{\overline{x=v}}
      {rendered}{T_1}{\semPfTraceOn{\overline y}{U},T_2}
    \\\\\to\\\\
    \confPfTop
      {\mathcal{S}'}
      {A}{\overline{x=v}}
      {rendered}{T_1,\semPfTraceOn{\overline y}{U}}{T_2}
  }

\bigskip

\Rule{Flush Let}
  { }{
    \confPfTop
      {\mathcal{S}}
      {A}{\overline{x=v}}
      {rendered}{T_1}{\semPfTraceLet{y}{\overline z},T_2}
    \\\\\to\\\\
    \confPfTop
      {\mathcal{S}}
      {A}{\overline{x=v}}
      {rendered}{T_1,\semPfTraceLet{y}{\overline z}}{T_2}
  }

\bigskip

\Rule{Receive Ext}
  {
    \mathcal{S}  = (X_1, X_2, V_s, \mathcal{L}, \mathcal{C})  \\
    \mathcal{S}' = (X_1, X_2, V_s, \mathcal{L}, \mathcal{C}) \\
    \mathcal{E} \;\Rightarrow_{\mathsf{ext}}\;
      (\eventEffBar{\ell_1}{v_{11}},\, v_{21}),
      (\eventEffBar{\ell_2}{v_{12}},\, v_{22}), \ldots,
      (\eventEffBar{\ell_n}{v_{1n}},\, v_{2n}) \\
    \overline{\valTyDeriv{v_{2i}}{\Sigma_E(\eventEffBar{\ell_i}{v_{1i}})}} \\
    T_{fire} =
      \semPfTraceFire{(\eventEffBar{\ell_1}{v_{11}},\, v_{21})},
      \semPfTraceFire{(\eventEffBar{\ell_2}{v_{12}},\, v_{22})},
      \ldots,
      \semPfTraceFire{(\eventEffBar{\ell_n}{v_{1n}},\, v_{2n})}\\
    T_{ext} =
      \semPfTraceExt{
        (\eventEffBar{\ell_1}{v_{11}},\, v_{21}),
        (\eventEffBar{\ell_2}{v_{12}},\, v_{22}), \ldots,
        (\eventEffBar{\ell_n}{v_{1n}},\, v_{2n})}
  }{
    \confPfTop
      {\mathcal{S}}
      {A}{\overline{x=v}}
      {rendered}{
        \semPfTraceArgs{\overline x}{\overline{v'_1}}{\overline{v'_2}},
      }{
        T_2
      }
    \\\\\to\\\\
    \confPfTop
      {\mathcal{S}'}
      {A}{\overline{x=v}}
      {rendered}{
        \semPfTraceArgs{\overline x}{\overline{v'_1}}{\overline{v'_2}},
        T_{ext}
      }{T_{fire}, T_2}
  }
\captionof{figure}{Configuration stepping: initialization and flushing.}
\label{fig:app-config-step-flush}
\end{placedfigure}

\bigskip

\begin{placedfigure}
\Rule{Fire Cancel}
  {
    \mathcal{S}  = (X_1, X_2, V_s, \mathcal{L}, \mathcal{C})  \\
    \mathcal{S}' = (X_1, X_2, V_s, \mathcal{L}, \mathcal{C}') \\
    \mathcal{C}(\eventEffBar{\ell}{v_1}) \geq 1 \\
    \mathcal{C}' = \mathcal{C} \setminus \{\!\!\{\,\eventEffBar{\ell}{v_1}\,\}\!\!\}
  }{
    \confPfTop
      {\mathcal{S}}
      {A}{\overline{x=v}}
      {rendered}{T_1}{\semPfTraceFire{(\eventEffBar{\ell}{v_1},\, v_2)},T_2}
    \\\\\to\\\\
    \confPfTop
      {\mathcal{S}'}
      {A}{\overline{x=v}}
      {rendered}{T_1,\semPfTraceFireCxl{(\eventEffBar{\ell}{v_1},\, v_2)}}{T_2}
  }

\bigskip

\Rule{Fire Successful}
  {
    \mathcal{S}  = (X_1, X_2,  V_s,  \mathcal{L},  \mathcal{C})  \\
    \mathcal{S}' = (X_1, X_2', V_s', \mathcal{L}'', \mathcal{C}') \\
    \mathcal{C}(\eventEffBar{\ell}{v_1}) = 0 \\\\
    \mathcal{L}(\eventEffBar{\ell}{v_1})
      = \{(\Closure{x_1}{e_1}{V_1}, m_1, F_1), \ldots, (\Closure{x_k}{e_k}{V_k}, m_k, F_k)\} \\\\
    \overline{m_i \in \{\eventually{}, \always{}\}} \\
    \valTyDeriv{v_2}{\Sigma_E(\eventEffBar{\ell}{v_1})} \\
    \overline{\semDerivExpr{V_i,\; x_i = v_2}{e_i}{()}{U_i}} \\
    \overline{\listenersafe{U_i}} \\
    \overline{\effQueueDeriv{F_i}{U_i}} \\\\
    \mathcal{L}' = \mathcal{L}\bigl[\eventEffBar{\ell}{v_1} \mapsto
      \{(\Closure{x_i}{e_i}{V_i}, m_i, F_i) \mid m_i = \always{}\}\bigr] \\\\
    \setterQueueDeriv
      {(X_2, V_s, \mathcal{L}', \mathcal{C})}
      {U_1, \ldots, U_k}
      {(X_2', V_s', \mathcal{L}'', \mathcal{C}')}
  }{
    \confPfTop
      {\mathcal{S}}
      {A}{\overline{x=v}}
      {rendered}{T_1}{\semPfTraceFire{(\eventEffBar{\ell}{v_1},\, v_2)},T_2}
    \\\\\to\\\\
    \confPfTop
      {\mathcal{S}'}
      {A}{\overline{x=v}}
      {rendered}{T_1,\semPfTraceFireSuc{(\eventEffBar{\ell}{v_1},\, v_2)}{(U_1, F_1), \ldots, (U_k, F_k)}}{T_2}
  }
\captionof{figure}{Configuration stepping: event firing.}
\label{fig:app-config-step-fire}
\end{placedfigure}

\bigskip

\begin{placedfigure}
\Rule{waiting to rendered}
  {
    A: (\synComponent{A}{\overline{x:\tau}}{\tau_r}{p}, \Delta, \Gamma_s) \in \Sigma \\
    \mathcal{S} = (X_1, X_2, V_{s1}, \mathcal{L}, \mathcal{C}) \\
    \mathcal{S}' = (X_2 \cup \{x_i \mid v_{1i} \neq v_{2i}\}, \{x_i \mid v_{1i} \neq v_{2i}\}, V_{s2}, \mathcal{L}, \mathcal{C}) \\
    \semPfDerivDecl{\Gamma_s}{X_2 \cup \{x_i \mid v_{1i} \neq v_{2i}\}}{V_{s1}}{\overline{x = v_2}}{``"}{p}{v_r}{T_2}{V_{s2}} \\
    \exists i \;s.t. \; v_{1i} \neq v_{2i} \\
    \valTyDeriv{v_r}{\tau_r} \\
    \overline{\valTyDeriv{v_2}{\tau_2}}\\
  }{
    \confPfTop
      {\mathcal{S}}
      {A}{\overline{x=v_1}}
      {waiting}{T_1}{\cdot}
    \\\\\to\\\\
    \confPfTop
      {\mathcal{S}'}
      {A}{\overline{x=v_2}}
      {rendered}{\semPfTraceArgs{\overline x}{\overline v_1}{\overline v_2}}{T_2}
  }
\captionof{figure}{Configuration stepping: re-rendering.}
\label{fig:app-config-step-rerender}
\end{placedfigure}

\subsection{Configurations}




\begin{theorem}[Preservation for Configurations]
\label{appendix:thm:preservation-configurations}

Given a well-typed configuration

$$
\confPfTop
  {\mathcal{S}}
  {A}{\overline{x=v}}
  {status$_1$}{T_1}{T_2}
$$
where $\mathcal{S} = (X_1, X_2, V_{s1}, \mathcal{L}, \mathcal{C})$
\begin{itemize}
  \item \tyDerivComp{\Sigma}{\tinyComp{A}}{(\Delta, \Gamma_s)}
  \item \justifies{\Delta, X_1}{T_1,T_2}
  \item \justifiesVarSet{\Delta,X_1}{T_1}{X_2}
  \item \valTyEnvDeriv{\Gamma_s}{V_{s1}}
  \item $\overline{\valTyDeriv{v}{\tau}}$
  \item For every $\eventEffBar{\ell}{v_1} \in \mathrm{dom}(\mathcal{L})$ and every
        $(c, m, F) \in \mathcal{L}(\eventEffBar{\ell}{v_1})$,
        $\valTyDeriv{c}{\tyArrow{\Sigma_E(\eventEffBar{\ell}{v_1})}{\tyUnit}[F]}$.
  \item For every firing
        $(\eventEffBar{\ell}{v_1},\, v_2) \in \mathcal{E}$ supplied by the
        external scheduler,
        $\valTyDeriv{v_2}{\Sigma_E(\eventEffBar{\ell}{v_1})}$.
  \item For every $\semPfTraceFire{(\eventEffBar{\ell}{v_1},\, v_2)}$ record
        appearing in $T_2$,
        $\valTyDeriv{v_2}{\Sigma_E(\eventEffBar{\ell}{v_1})}$.
\end{itemize}

If the configuration makes a step

$$
  \confPfTop
      {\mathcal{S}}
      {A}{\overline{x=v}}
      {status$_1$}{T_1}{T_2}$$
  $$\to$$
$$
  \confPfTop
      {\mathcal{S}'}
      {A}{\overline{x=v'}}
      {status$_2$}{T_3}{T_4}
$$
where $\mathcal{S}' = (X_3, X_4, V_{s2}, \mathcal{L}', \mathcal{C}')$

Then the resulting configuration is also well-typed:
\begin{itemize}
  \item \justifies{\Delta, X_3}{T_3,T_4}
  \item \justifiesVarSet{\Delta,X_3}{T_3}{X_4}
  \item \valTyEnvDeriv{\Gamma_s}{V_{s2}}
  \item $\overline{\valTyDeriv{v'}{\tau}}$
  \item For every $\eventEffBar{\ell}{v_1} \in \mathrm{dom}(\mathcal{L}')$ and every
        $(c, m, F) \in \mathcal{L}'(\eventEffBar{\ell}{v_1})$,
        $\valTyDeriv{c}{\tyArrow{\Sigma_E(\eventEffBar{\ell}{v_1})}{\tyUnit}[F]}$.
  \item For every $\semPfTraceFire{(\eventEffBar{\ell}{v_1},\, v_2)}$ record
        appearing in $T_4$,
        $\valTyDeriv{v_2}{\Sigma_E(\eventEffBar{\ell}{v_1})}$.
\end{itemize}
\end{theorem}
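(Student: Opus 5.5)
The proof is a case analysis on the configuration step relation $\to$, covering \rulename{Init}, \rulename{No Updates}, \rulename{Flush Update}, \rulename{Flush Let}, \rulename{Receive Ext}, \rulename{Fire Cancel}, \rulename{Fire Successful} and \rulename{waiting to rendered}. In each case I check the six well-typedness conditions of the conclusion separately. The component typing $\tyDerivComp{\Sigma}{\tinyComp{A}}{(\Delta,\Gamma_s)}$ is unchanged in every case, because $\Sigma$, $A$ and $\Delta$ are fixed.

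For the ``trace-moving'' steps (\rulename{No Updates}, \rulename{Flush Let}, \rulename{Flush Update}, \rulename{Fire Cancel}, \rulename{Fire Successful}), $X_1$ is unchanged and the concatenation $T_1,T_2$ is either unchanged or has a single record rewritten in place. So $\justifies{\Delta,X_1}{T_3,T_4}$ follows from the hypothesis by inverting the relevant justify rule, except in the firing cases. There, a $\mathit{FIRE}$ record becomes $\mathit{FIRE\text{-}CXL}$ or $\mathit{FIRE\text{-}SUC}$ inside a block governed by \rulename{justify-ext}. I re-derive \rulename{justify-ext} using two facts. First, the premise $\overline{\effQueueDeriv{F_i}{U_i}}$ of \rulename{Fire Successful}. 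Second, the listener-map invariant, which supplies the closure typings $\valTyDeriv{c_i}{\tyArrow{\Sigma_E(\eventEffBar{\ell}{v_1})}{\tyUnit}[F_i]}$.

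For $\justifiesVarSet{\Delta,X_1}{T_3}{X_4}$, I prove an auxiliary lemma by induction on the flush-queue judgment: if $(X,V_s,\mathcal{L},\mathcal{C}) \Rightarrow U \Rightarrow (X',V_s',\mathcal{L}',\mathcal{C}')$, then $X' = X \cup \{y \mid setter_y(f)\in U\}$. This exactly matches \rulename{jvs-on} and \rulename{jvs-firesuc}. \rulename{jvs-let}, \rulename{jvs-return} and \rulename{jvs-firecxl} leave the set unchanged, which matches those steps.

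The same induction on the flush-queue judgment also carries the store and listener invariants. For \rulename{fq setter}, \rulename{valid setter} evaluates the closure body in an environment without setters and with $x=v$, where $v$ is typed by $\valTyEnvDeriv{\Gamma_s}{V_s}$. I need a standard expression-level preservation lemma here: if $\valTyEnvDeriv{\Gamma}{V}$, $\tyDerivExpr{\Gamma}{e}{\tau}{F}$ and $\semDerivExpr{V}{e}{v}{U}$, then $\valTyDeriv{v}{\tau}$ and $\effQueueDeriv{F'}{U}$ for some $F'\le F$. I prove it by induction on evaluation, using \rulename{val-closure} inversion for \rulename{function app}. Applied to the setter closure, whose type is $\tau\to\tau$ with empty effect, it gives the new $V_s$ entry type $\tau$. \rulename{fq listen} carries the closure typing as a premise, so the listener invariant is preserved. \rulename{fq remove} only shrinks a bucket, and \rulename{fq cancel} touches only $\mathcal{C}$. \rulename{Fire Successful} first filters $\mathcal{L}$ to the $\always{}$ entries (a subset, so typing survives) and then flushes. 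The payload invariant for $T_4$ holds because $T_4$ is a suffix of $T_2$, with one exception: in \rulename{Receive Ext} the new $T_{fire}$ records are typed by the step's premise $\overline{\valTyDeriv{v_{2i}}{\Sigma_E(\cdots)}}$. For \rulename{Receive Ext}, justification of the new $\mathit{EXT}$ prefix is where I expect trouble. At that moment every $R_i$ is a bare $\mathit{FIRE}$ record, so the side conditions of \rulename{justify-ext} hold vacuously, but only if I read \rulename{justify-ext} as justifying the ext header together with the yet-unprocessed records that follow it in $T_4$. I will state that reading explicitly as the intended one.

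The main obstacle is \rulename{waiting to rendered}. Here $X_3 = X_2\cup\{x_i\mid v_{1i}\neq v_{2i}\}$, and $T_4$ is a brand-new trace produced by the instrumented declaration semantics. I need a declaration-level lemma, proved by induction on $\semPfDerivDecl{\Gamma_s}{X}{V_s}{V}{c}{p}{v}{T}{V'}$ alongside the typing derivation of $p$ from \rulename{ty component}: the resulting $T$ satisfies $\justifies{\Delta,X}{T}$. Each $\mathit{EFF}$ record comes from \rulename{effect rerender, yes changes}, whose premise $\overline{x}\cap X\neq\emptyset$ gives $\overline{x}\cap X\subseteq X$. Its queue is typed by the expression-preservation lemma above together with \rulename{ty on decl}'s causes premise, with subeffecting absorbed by \rulename{x causes F}. $\mathit{LET}$ records use \rulename{ty let decl} plus \rulename{x causes y}, and $\mathit{RET}$ uses \rulename{ty return decl}. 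The subcomponent case requires the renaming inclusion $\Delta_A[y.\,\cdot/\cdot]\subseteq\Delta$ and a weakening lemma for causes under such an inclusion; I expect most of the bookkeeping to go here. $\justifiesVarSet{}{}{}$ for $T_3=\mathit{ARG}\ldots$ is immediate by \rulename{jvs-args}. The store typing $\valTyEnvDeriv{\Gamma_s}{V_{s2}}$ follows from the $\valTyEnvDeriv{\Gamma}{V}$ premise of instrumented \rulename{return} and the state-rerender typing premises, again by induction. The argument typings come directly from the step's premises.
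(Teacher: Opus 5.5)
Your plan follows the paper's proof. The shared pieces are: case analysis on the step rules; the Flush Queue Preservation lemma by induction on the flush-queue judgment, which carries $\valTyEnvDeriv{\Gamma_s}{V_s}$, listener well-formedness, and the fact $X' = X\cup\{y\mid setter_y(f)\in U\}$ matching \rulename{jvs-on} and \rulename{jvs-firesuc}; re-deriving \rulename{justify-ext} after the $\mathit{FIRE}$ record becomes $\mathit{FIRE\text{-}CXL}$ or $\mathit{FIRE\text{-}SUC}$, using $\overline{\effQueueDeriv{F_i}{U_i}}$ and the listener invariant; and Preservation for Declarations/Expressions for \rulename{waiting to rendered}. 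Two of your concerns dissolve. \rulename{Init} is not a case, since its source is $\cdot$. \rulename{Receive Ext} needs no special reading of \rulename{justify-ext}, because the invariant is on $T_3,T_4$ jointly: invert \rulename{justify-args} to get $\justifies{\Delta,X_1}{T_2}$, then apply \rulename{justify-ext} with every $R_i$ a bare $\mathit{FIRE}$, as the paper does.

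Three steps would not go through as you state them.

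(i) State expression preservation with the exact effect, $\effQueueDeriv{F}{U}$, as the paper does. Expression typing has no subsumption, and the \rulename{tq} rules mirror the typing rules, with \rulename{tq-pick} handling branches. Your form with some $F'\le F$ cannot be threaded through \rulename{seq} or \rulename{function app}, because the paper's subeffecting has no congruence for $*$ on the left. Also, absorbing the slack via \rulename{x causes F} fails whenever the causes premise of \rulename{ty on decl} was derived by \rulename{x causes y} or the $*$ rule.

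(ii) In \rulename{fq setter}, nothing in your invariants types the closure $f$ of $setter_x(f)$. \rulename{tq setter} has no premise, so neither the queue typing from \rulename{justify-on} nor the \rulename{Fire Successful} premise yields $\valTyDeriv{f}{\tyArrow{\tau}{\tau}[\cdot]}$. The paper has the same hole: it appeals to the type of $setter_x$, which the queue item no longer carries. Fix it by adding that closure typing as a premise of \rulename{tq setter}. The setter-app case of expression preservation supplies it from the IH on $e_2$, and an inversion lemma like Listen typing inversion recovers it.

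(iii) The paper's declaration-level proof omits \rulename{var} and subcomponents. Your \rulename{var} case is fine, but for subcomponents a weakening lemma for causes is not enough. Instrumented \rulename{subcomp rerender} splices the child trace $T_y$ in unrenamed. Its $\mathit{EFF}$ records name child-local variables $w$, computed against the prefix-stripped $X'$, while $X$ contains $y.w$ and $\Delta$ holds the child's cascades only under the renamed keys $y.w$. So \rulename{justify-on} under $(\Delta,X)$ fails. In addition, the child's setter values $setter_{y.u}$ get latent effect $\Next{1r}{\stch{y.u}}$ from \rulename{val-setter}, not the $\Next{1r}{\stch{u}}$ the child's own context assigns to $\setter{u}$. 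Hence even the value-environment premise of expression preservation fails inside the child. Closing this case requires renaming $T_y$ and reconciling setter typing with component paths, which is a change to the instrumented semantics or value typing rather than an auxiliary lemma.
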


\begin{proof}
By case analysis on the configuration transition rule.

\begin{itemize}
  \item \textbf{Case No Updates.}

  \Rule*
    { }{
      \confPfTop
          {\mathcal{S}}
          {A}{\overline{x=v}}
          {rendered}{T_1}{\semPfTraceReturn{y}}
      \\\\\to\\\\
      \confPfTop
          {\mathcal{S}}
          {A}{\overline{x=v}}
          {waiting}{T_1,\semPfTraceReturn{y}}{\cdot}
    }

  From the precondition, we have:
  \begin{itemize}
    \item \justifies{\Delta, X_1}{T_1, \semPfTraceReturn{y}}
    \item \justifiesVarSet{\Delta, X_1}{T_1}{X_2}
    \item \valTyEnvDeriv{\Gamma_s}{V_{s1}}
    \item $\overline{\valTyDeriv{v}{\tau}}$
  \end{itemize}

  The resulting configuration has $X_3 = X_1$, $X_4 = X_2$, $V_{s2} = V_s$,
  $T_3 = T_1, \semPfTraceReturn{y}$, and $T_4 = \cdot$.

  We verify:
  \begin{itemize}
    \item \justifies{\Delta, X_1}{T_1, \semPfTraceReturn{y}, \cdot} ---
          This is equivalent to \justifies{\Delta, X_1}{T_1, \semPfTraceReturn{y}},
          which holds by the precondition.
    \item \justifiesVarSet{\Delta, X_1}{T_1, \semPfTraceReturn{y}}{X_2} ---
          By inversion on the justification rules for return traces, we have
          \justifiesVarSet{\Delta, X_1}{T_1}{X_2} from the precondition, and
          the return trace does not modify the variable set.
    \item \valTyEnvDeriv{\Gamma_s}{V_{s1}} --- unchanged from precondition.
    \item $\overline{\valTyDeriv{v}{\tau}}$ --- unchanged from precondition.
    \item $\mathcal{L}$ well-formed --- unchanged from precondition.
  \end{itemize}

  \item \textbf{Case Flush Let.}

  \Rule*
    { }{
      \confPfTop
        {\mathcal{S}}
        {A}{\overline{x=v}}
        {rendered}{T_1}{\semPfTraceLet{y}{\overline z},T_2}
      \\\\\to\\\\
      \confPfTop
        {\mathcal{S}}
        {A}{\overline{x=v}}
        {rendered}{T_1,\semPfTraceLet{y}{\overline z}}{T_2}
    }

  From the precondition, we have:
  \begin{itemize}
    \item \justifies{\Delta, X_1}{T_1, \semPfTraceLet{y}{\overline z}, T_2}
    \item \justifiesVarSet{\Delta, X_1}{T_1}{X_2}
    \item \valTyEnvDeriv{\Gamma_s}{V_{s1}}
    \item $\overline{\valTyDeriv{v}{\tau}}$
  \end{itemize}

  The resulting configuration has $X_3 = X_1$, $X_4 = X_2$, $V_{s2} = V_{s1}$,
  $T_3 = T_1, \semPfTraceLet{y}{\overline z}$, and $T_4 = T_2$.

  We verify:
  \begin{itemize}
    \item \justifies{\Delta, X_1}{T_1, \semPfTraceLet{y}{\overline z}, T_2} ---
          This holds directly from the precondition.
    \item \justifiesVarSet{\Delta, X_1}{T_1, \semPfTraceLet{y}{\overline z}}{X_2} ---
          By the justification rule for let traces,
          \justifiesVarSet{\Delta, X_1}{T_1, \semPfTraceLet{y}{\overline z}}{X_2}
          follows from \justifiesVarSet{\Delta, X_1}{T_1}{X_2} (precondition).
    \item \valTyEnvDeriv{\Gamma_s}{V_{s1}} --- unchanged.
    \item $\overline{\valTyDeriv{v}{\tau}}$ --- unchanged.
    \item $\mathcal{L}$ well-formed --- unchanged.
  \end{itemize}

  \item \textbf{Case Flush Update.}

  \Rule*
    {
      \mathcal{S} = (X_1, X_{21}, V_{s1}, \mathcal{L}_1, \mathcal{C}_1) \\
      \mathcal{S}' = (X_1, X_{22}, V_{s2}, \mathcal{L}_2, \mathcal{C}_2) \\
      \listenersafe{U} \\
      \setterQueueDeriv
        {(X_{21},V_{s1}, \mathcal{L}_1, \mathcal{C}_1)}
        {U}
        {(X_{22},V_{s2}, \mathcal{L}_2, \mathcal{C}_2)}
    }{
      \confPfTop
        {\mathcal{S}}
        {A}{\overline{x=v}}
        {rendered}{T_1}{\semPfTraceOn{\overline y}{U},T_2}
      \\\\\to\\\\
      \confPfTop
        {\mathcal{S}'}
        {A}{\overline{x=v}}
        {rendered}{T_1,\semPfTraceOn{\overline y}{U}}{T_2}
    }

  From the precondition, we have:
  \begin{itemize}
    \item \justifies{\Delta, X_1}{T_1, \semPfTraceOn{\overline y}{U}, T_2}
    \item \justifiesVarSet{\Delta, X_1}{T_1}{X_{21}}
    \item \valTyEnvDeriv{\Gamma_s}{V_{s1}}
    \item $\overline{\valTyDeriv{v}{\tau}}$
  \end{itemize}

  We need the following lemma:

  \textbf{Lemma (Flush Queue Preservation):}
  If \valTyEnvDeriv{\Gamma_s}{V_{s1}}, $\mathcal{L}_1$ is well-formed in the
  sense of the new well-typedness bullet --- that is, for every
  $\eventEffBar{\ell}{v_1} \in \mathrm{dom}(\mathcal{L}_1)$ and every
  $(c, m, F) \in \mathcal{L}_1(\eventEffBar{\ell}{v_1})$,
  $\valTyDeriv{c}{\tyArrow{\Sigma_E(\eventEffBar{\ell}{v_1})}{\tyUnit}[F]}$ ---
  and
  \setterQueueDeriv{(X_1,V_{s1},\mathcal{L}_1,\mathcal{C}_1)}{U}{(X_2,V_{s2},\mathcal{L}_2,\mathcal{C}_2)},
  then \valTyEnvDeriv{\Gamma_s}{V_{s2}} and $\mathcal{L}_2$ is well-formed in
  the same sense.

  \textit{Proof of lemma:} By induction on the derivation of
  \setterQueueDeriv{(X_1,V_{s1},\mathcal{L}_1,\mathcal{C}_1)}{U}{(X_2,V_{s2},\mathcal{L}_2,\mathcal{C}_2)}.
  \begin{itemize}
    \item Base case (\rulename{fq nil}): $U = \cdot$. Then $V_{s2} = V_{s1}$
          and $\mathcal{L}_2 = \mathcal{L}_1$, so both results hold trivially.
    \item Inductive case (\rulename{fq setter}): $U = setter_x(f), U'$. By the
          rule for setter queue derivation, we have
          \setterderiv{V_{s1}}{setter_x(f)}{V_{s1}'} for some $V_{s1}'$, and
          \setterQueueDeriv{(X_1 \cup \{x\}, V_{s1}',\mathcal{L}_1,\mathcal{C}_1)}{U'}{(X_2, V_{s2},\mathcal{L}_2,\mathcal{C}_2)}.

          From \valTyDeriv{setter_x}{(\tySetter{x})}, we know that $f : \tau \to \tau$
          where $x : \tau \in \Gamma_s$. The setter application applies $f$ to the
          current value of $x$ (of type $\tau$), producing a new value of type $\tau$.
          Thus \valTyEnvDeriv{\Gamma_s}{V_{s1}'}.  $\mathcal{L}$ is unchanged
          at this step, so its well-formedness carries through.

          By the induction hypothesis, \valTyEnvDeriv{\Gamma_s}{V_{s2}} and
          $\mathcal{L}_2$ is well-formed.
    \item Inductive case (\rulename{fq cancel}): $U = \mathsf{cancel}(\eventEffBar{\ell}{v}), U'$.
          $V_s$ and $\mathcal{L}$ are unchanged at this step (only $\mathcal{C}$
          gains a multiset entry), so the recursive premise still has a
          well-typed $V_s$ and a well-formed $\mathcal{L}$, and the induction
          hypothesis yields the result.
    \item Inductive case (\rulename{fq remove}): $U = \mathsf{remove}(\eventEffBar{\ell}{v}), U'$.
          The rule sets
          $\mathcal{L}'_1 = \mathcal{L}_1[\eventEffBar{\ell}{v} \mapsto \emptyset]$.
          For the bucket at $\eventEffBar{\ell}{v}$, well-formedness is vacuous
          because the bucket is empty; every other bucket is unchanged from
          $\mathcal{L}_1$, hence retains its well-formedness witness.  $V_s$
          is unchanged.  Induction hypothesis on the recursive premise
          \setterQueueDeriv{(X_1, V_{s},\mathcal{L}'_1,\mathcal{C})}{U'}{(X_2, V_{s2},\mathcal{L}_2,\mathcal{C}_2)}
          gives the result.
    \item Inductive case (\rulename{fq listen}): $U = \mathsf{listen}(\eventEffBar{\ell}{v},\, c,\, m,\, F), U'$.
          The rule's own premise is
          $\valTyDeriv{c}{\tyArrow{\Sigma_E(\eventEffBar{\ell}{v})}{\tyUnit}[F]}$,
          which is exactly the well-formedness obligation for the new entry.
          That premise is not free standing: at the lemma's call sites a
          witness for it is supplied by the \emph{Listen typing inversion}
          lemma applied to the ambient $\effQueueDeriv{F_i}{U_i}$ derivation
          (in Fire Successful this is the rule premise
          $\overline{\effQueueDeriv{F_i}{U_i}}$ on each listener's evaluated
          queue).  Every other listener record in $\mathcal{L}'_1 =
          \mathcal{L}_1[\eventEffBar{\ell}{v} \mapsto \mathcal{L}_1(\eventEffBar{\ell}{v})
          \cup \{(c, m, F)\}]$ was already in $\mathcal{L}_1$ and so is
          well-formed by the precondition.  $V_s$ is unchanged.  Induction
          hypothesis on the recursive premise yields the result.
  \end{itemize}

  Applying the lemma, we get \valTyEnvDeriv{\Gamma_s}{V_{s2}} and that
  $\mathcal{L}_2$ is well-formed.

  By inversion on \justifies{\Delta, X_1}{T_1, \semPfTraceOn{\overline y}{U}, T_2}:
  \begin{itemize}
    \item $\overline y \subseteq X_1$
    \item $\exists F. (\overline{\causesderiv{\Delta}{y_i}{F}} \text{ and } \effQueueDeriv{F}{U})$
  \end{itemize}

  By the rule for justifying variable change sets with effect traces, if
  $$U = setter_{x_1}(f_1), \ldots, setter_{x_n}(f_n)$$
  then flushing the update extends the variable set to include
  $\{x_1, \ldots, x_n\}$.  This matches
  $X_{22} = X_{21} \cup \{x_1, \ldots, x_n\}$.

  We verify:
  \begin{itemize}
    \item \justifies{\Delta, X_1}{T_1, \semPfTraceOn{\overline y}{U}, T_2} ---
          holds by precondition.
    \item \justifiesVarSet{\Delta, X_1}{T_1, \semPfTraceOn{\overline y}{U}}{X_{22}} ---
          From \justifiesVarSet{\Delta, X_1}{T_1}{X_{21}} and the rule for effect traces,
          we get $\justifiesVarSet{\Delta, X_1}{T_1, \semPfTraceOn{\overline y}{U}}{X_{21} \cup \{x_1, \ldots, x_n\}} = X_{22}$.
    \item \valTyEnvDeriv{\Gamma_s}{V_{s2}} --- from lemma above.
    \item $\mathcal{L}_2$ is well-formed --- from lemma above.
    \item $\overline{\valTyDeriv{v}{\tau}}$ --- unchanged.
  \end{itemize}

  \item \textbf{Case waiting to rendered.}

  \Rule*
    {
      A: (\synComponent{A}{\overline{x:\tau}}{\tau_r}{p}, \Delta, \Gamma_s) \in \Sigma \\
      \mathcal{S} = (X_1, X_2, V_{s1}, \mathcal{L}, \mathcal{C}) \\
      \mathcal{S}' = (X_2 \cup \{x_i \mid v_{1i} \neq v_{2i}\}, \{x_i \mid v_{1i} \neq v_{2i}\}, V_{s2}, \mathcal{L}, \mathcal{C}) \\
      \semPfDerivDecl{\Gamma_s}{X_2 \cup \{x_i \mid v_{1i} \neq v_{2i}\}}{V_{s1}}{\overline{x = v_2}}{``"}{p}{v_r}{T_2}{V_{s2}} \\
      \exists i \;s.t. \; v_{1i} \neq v_{2i} \\
      \valTyDeriv{v_r}{\tau_r} \\
      \overline{\valTyDeriv{v_2}{\tau_2}}
    }{
      \confPfTop
        {\mathcal{S}}
        {A}{\overline{x=v_1}}
        {waiting}{T_1}{\cdot}
      \\\\\to\\\\
      \confPfTop
        {\mathcal{S}'}
        {A}{\overline{x=v_2}}
        {rendered}{\semPfTraceArgs{\overline x}{\overline v_1}{\overline v_2}}{T_2}
    }

  This is the re-render case. From the precondition:
  \begin{itemize}
    \item \justifies{\Delta, X_1}{T_1, \cdot}
    \item \justifiesVarSet{\Delta, X_1}{T_1}{X_2}
    \item \valTyEnvDeriv{\Gamma_s}{V_{s1}}
    \item $\overline{\valTyDeriv{v_{1}}{\tau}}$
  \end{itemize}

  From \tyDerivComp{\Sigma}{\tinyComp{A}}{(\Delta, \Gamma_s)}, by inversion:
  \begin{itemize}
    \item \tyDerivDecls{\Sigma}{\Gamma}{\Delta}{p}{\tau_r} where $\Gamma$ contains the argument bindings
  \end{itemize}

  The new arguments $\overline{x = v_2}$ are well-typed: $\overline{\valTyDeriv{v_{2}}{\tau}}$
  (given in the rule premises).

  Let $X' = \{x_i \mid v_{1i} \neq v_{2i}\}$ be the set of changed arguments.
  Since $\exists i \;s.t. \; v_{1i} \neq v_{2i}$, we have $X' \neq \emptyset$, so
  $X_2 \cup X' \neq \emptyset$ as well.

  Apply Preservation for Declarations (Theorem~\ref{appendix:thm:preservation-declarations})
  to the semantic premise, taking the changed-variable set
  to be $X_2 \cup X'$ (matching the $X$ used in the semantic derivation):
  \begin{itemize}
    \item \semPfDerivDecl{\Gamma_s}{X_2 \cup X'}{V_{s1}}{\overline{x = v_2}}{``"}{p}{v_r}{T_2}{V_{s2}}
    \item \tyDerivDecls{\Sigma}{\Gamma}{\Delta}{p}{\tau_r}
    \item \valTyEnvDeriv{\Gamma_s}{V_{s1}} (from precondition)
    \item \valTyEnvDeriv{\Gamma}{\overline{x = v_2}} (from well-typed new arguments)
    \item $X_2 \cup X' \neq \emptyset$
  \end{itemize}

  This gives us:
  \begin{itemize}
    \item \valTyDeriv{v_r}{\tau_r}
    \item \justifies{\Delta, X_2 \cup X'}{T_2}
    \item \valTyEnvDeriv{\Gamma_s}{V_{s2}}
  \end{itemize}

  The resulting configuration has:
  \begin{itemize}
    \item $X_3 = X_2 \cup X'$ --- the union of the previous render's
          accumulated changed-variable set and this render's changed arguments.
          This is exactly the $X$ under which the new body produced $T_2$.
    \item $X_4 = X' = \{x_i \mid v_{1i} \neq v_{2i}\}$
    \item $T_3 = \semPfTraceArgs{\overline x}{\overline v_1}{\overline v_2}$
    \item $T_4 = T_2$
  \end{itemize}

  We verify:
  \begin{itemize}
    \item \justifies{\Delta, X_2 \cup X'}{\semPfTraceArgs{\overline x}{\overline v_1}{\overline v_2}, T_2} ---
          By the justification rule for argument traces (which adds an \textsc{arg} record
          to any justified trace under the same $X$), this reduces to
          \justifies{\Delta, X_2 \cup X'}{T_2}, which we obtained from Declarations
          Preservation above.
    \item \justifiesVarSet{\Delta, X_2 \cup X'}{\semPfTraceArgs{\overline x}{\overline v_1}{\overline v_2}}{X'} ---
          By the justification rule for argument traces, the variable-change set produced
          by an \textsc{arg} record is exactly $\{x_i \mid v_{1i} \neq v_{2i}\} = X'$,
          independent of the input $X$.
    \item \valTyEnvDeriv{\Gamma_s}{V_{s2}} --- from Declarations Preservation above.
    \item $\overline{\valTyDeriv{v_{2}}{\tau}}$ --- from the rule premises.
    \item $\mathcal{L}$ well-formed --- $\mathcal{L}$ is unchanged across this
          transition, so its well-formedness carries from the precondition.
  \end{itemize}

  \item \textbf{Case Receive Ext.}

  \Rule*
    {
      \mathcal{S}  = (X_1, X_2, V_s, \mathcal{L}, \mathcal{C})  \\
      \mathcal{S}' = (X_1, X_2, V_s, \mathcal{L}, \mathcal{C}) \\
      \mathcal{E} \;\Rightarrow_{\mathsf{ext}}\;
        (\eventEffBar{\ell_1}{v_{11}},\, v_{21}), \ldots,
        (\eventEffBar{\ell_n}{v_{1n}},\, v_{2n}) \\
      \overline{\valTyDeriv{v_{2i}}{\Sigma_E(\eventEffBar{\ell_i}{v_{1i}})}} \\
      T_{fire} =
        \semPfTraceFire{(\eventEffBar{\ell_1}{v_{11}},\, v_{21})}, \ldots,
        \semPfTraceFire{(\eventEffBar{\ell_n}{v_{1n}},\, v_{2n})}\\
      T_{ext} =
        \semPfTraceExt{
          (\eventEffBar{\ell_1}{v_{11}},\, v_{21}), \ldots,
          (\eventEffBar{\ell_n}{v_{1n}},\, v_{2n})}
    }{
      \confPfTop
        {\mathcal{S}}
        {A}{\overline{x=v}}
        {rendered}{\semPfTraceArgs{\overline x}{\overline{v'_1}}{\overline{v'_2}},}{T_2}
      \\\\\to\\\\
      \confPfTop
        {\mathcal{S}'}
        {A}{\overline{x=v}}
        {rendered}{\semPfTraceArgs{\overline x}{\overline{v'_1}}{\overline{v'_2}},
          T_{ext}}{T_{fire}, T_2}
    }

  Let $T_1 = \semPfTraceArgs{\overline x}{\overline{v'_1}}{\overline{v'_2}}$ for
  brevity (the bare-subscript $v_1, v_2$ slots are reserved for the per-event
  discriminant and payload pattern).  From the precondition:
  \begin{itemize}
    \item \justifies{\Delta, X_1}{T_1, T_2}
    \item \justifiesVarSet{\Delta, X_1}{T_1}{X_2}
    \item \valTyEnvDeriv{\Gamma_s}{V_s}, $\overline{\valTyDeriv{v}{\tau}}$,
          $\mathcal{L}$ well-formed
    \item Each $(\eventEffBar{\ell_i}{v_{1i}},\, v_{2i}) \in \mathcal{E}$
          satisfies $\valTyDeriv{v_{2i}}{\Sigma_E(\eventEffBar{\ell_i}{v_{1i}})}$
          --- the external-scheduler well-typedness bullet.
  \end{itemize}

  The resulting configuration has $X_3 = X_1$, $X_4 = X_2$, and $V_s$,
  $\mathcal{L}$, $\mathcal{C}$ all unchanged;
  $T_3 = T_1, T_{ext}$, and $T_4 = T_{fire}, T_2$.

  We verify:
  \begin{itemize}
    \item \justifies{\Delta, X_1}{T_3, T_4} ---
          By the justification rule for the \textsc{ext}-followed-by-$R_i$ block,
          we can extend a justified
          tail with an \textsc{ext} record followed by $n$ records $R_1, \ldots,
          R_n$ where each $R_i$ is one of $\semPfTraceFire{}$,
          $\semPfTraceFireCxl{}$, $\semPfTraceFireSuc{}$.  Choose each
          $R_i = \semPfTraceFire{(\eventEffBar{\ell_i}{v_{1i}},\, v_{2i})}$;
          the additional witness obligation on $\semPfTraceFireSuc{}$ records is
          vacuous because none are chosen.  The tail-justification premise is
          \justifies{\Delta, X_1}{T_2}, which follows from
          \justifies{\Delta, X_1}{T_1, T_2} by inversion using the justification
          rule for argument traces (which discharges the leading $T_1$).
    \item \justifiesVarSet{\Delta, X_1}{T_3}{X_2} ---
          The rules for variable-set justification on \textsc{ext}
          and on \textsc{fire} are both the identity in the
          running variable set: extending the prefix by an \textsc{ext} or by
          a \textsc{fire} preserves the set.  From the precondition
          \justifiesVarSet{\Delta, X_1}{T_1}{X_2}, extending by one \textsc{ext}
          record yields \justifiesVarSet{\Delta, X_1}{T_3}{X_2}.
    \item \valTyEnvDeriv{\Gamma_s}{V_s} --- unchanged.
    \item $\overline{\valTyDeriv{v}{\tau}}$ --- unchanged.
    \item $\mathcal{L}$ well-formed --- $\mathcal{L}$ is unchanged.
    \item External scheduler payload well-typedness --- the firings consumed
          by this step are precisely those whose payload-typing premise was
          discharged by the rule premise
          $\overline{\valTyDeriv{v_{2i}}{\Sigma_E(\eventEffBar{\ell_i}{v_{1i}})}}$.
          Any further firings remaining in $\mathcal{E}$ continue to satisfy
          the precondition's well-typedness clause.
  \end{itemize}

  \item \textbf{Case Fire Cancel.}

  \Rule*
    {
      \mathcal{S}  = (X_1, X_2, V_s, \mathcal{L}, \mathcal{C})  \\
      \mathcal{S}' = (X_1, X_2, V_s, \mathcal{L}, \mathcal{C}') \\
      \mathcal{C}(\eventEffBar{\ell}{v_1}) \geq 1 \\
      \mathcal{C}' = \mathcal{C} \setminus \{\!\!\{\,\eventEffBar{\ell}{v_1}\,\}\!\!\}
    }{
      \confPfTop
        {\mathcal{S}}
        {A}{\overline{x=v}}
        {rendered}{T_1}{\semPfTraceFire{(\eventEffBar{\ell}{v_1},\, v_2)},T_2}
      \\\\\to\\\\
      \confPfTop
        {\mathcal{S}'}
        {A}{\overline{x=v}}
        {rendered}{T_1,\semPfTraceFireCxl{(\eventEffBar{\ell}{v_1},\, v_2)}}{T_2}
    }

  From the precondition:
  \begin{itemize}
    \item \justifies{\Delta, X_1}{T_1, \semPfTraceFire{(\eventEffBar{\ell}{v_1},\, v_2)}, T_2}
    \item \justifiesVarSet{\Delta, X_1}{T_1}{X_2}
    \item \valTyEnvDeriv{\Gamma_s}{V_s}, $\overline{\valTyDeriv{v}{\tau}}$,
          $\mathcal{L}$ well-formed
  \end{itemize}

  The resulting configuration has $X_3 = X_1$, $X_4 = X_2$, $V_{s2} = V_s$,
  $\mathcal{L}' = \mathcal{L}$,
  $\mathcal{C}' = \mathcal{C} \setminus \{\!\!\{\,\eventEffBar{\ell}{v_1}\,\}\!\!\}$,
  $T_3 = T_1, \semPfTraceFireCxl{(\eventEffBar{\ell}{v_1},\, v_2)}$, and $T_4 = T_2$.

  We verify:
  \begin{itemize}
    \item \justifies{\Delta, X_1}{T_3, T_4} ---
          The combined trace is
          $T_1, \semPfTraceFireCxl{(\eventEffBar{\ell}{v_1},\, v_2)}, T_2$.  We use the
          \textsc{ext}/$R$-block rule with the degenerate
          \emph{single}-record extension: choose
          $R_1 = \semPfTraceFireCxl{(\eventEffBar{\ell}{v_1},\, v_2)}$, no
          $\semPfTraceFireSuc{}$ records present, so the witness obligation is
          vacuous.  We need to peel the \textsc{ext}-prefix that the rule
          requires.  Inversion on the precondition's justification of
          $T_1, \semPfTraceFire{(\eventEffBar{\ell}{v_1},\, v_2)}, T_2$ exposes the
          \textsc{ext} prefix in $T_1$ (call it $T_1 = T_1', \semPfTraceExt{\ldots}$
          followed by some $R$-block then a tail $T''$ with
          $\semPfTraceFire{(\eventEffBar{\ell}{v_1},\, v_2)} \in $ the $R$-block,
          and then $T_2$); replace the $\semPfTraceFire{}$ at position of
          $(\eventEffBar{\ell}{v_1},\, v_2)$ by
          $\semPfTraceFireCxl{(\eventEffBar{\ell}{v_1},\, v_2)}$ inside that
          $R$-block: this is still a valid $R_i$ choice, and no witness changes.
          The tail $T''$ followed by $T_2$ is justified by the inner premise of
          the original rule application.  Therefore the new combined trace is
          also justified.

          The inversion uniquely identifies the \textsc{ext} prefix because at
          most one \textsc{ext} record can appear in $T_1$ at any time within
          a render cycle: Receive Ext requires $T_1$ to be exactly the
          single argument record $\semPfTraceArgs{\ldots}{\ldots}{\ldots}$, so once one Ext block has been
          deposited, a second Receive Ext cannot fire until the next
          waiting-to-rendered transition resets $T_1$ to $\semPfTraceArgs{\ldots}{\ldots}{\ldots}$
          alone.  Hence the Ext prefix introducing this $\semPfTraceFire{}$ is
          uniquely the only one in $T_1$.
    \item \justifiesVarSet{\Delta, X_1}{T_3}{X_2} ---
          The variable-set justification rule on \textsc{fire-cxl}
          is the identity in the variable set.  From the
          precondition
          \justifiesVarSet{\Delta, X_1}{T_1}{X_2}, extending the prefix by
          $\semPfTraceFireCxl{(\eventEffBar{\ell}{v_1},\, v_2)}$ yields
          \justifiesVarSet{\Delta, X_1}{T_3}{X_2}.
    \item \valTyEnvDeriv{\Gamma_s}{V_s} --- unchanged.
    \item $\overline{\valTyDeriv{v}{\tau}}$ --- unchanged.
    \item $\mathcal{L}$ well-formed --- $\mathcal{L}$ is unchanged; $\mathcal{C}$
          is a multiset of event labels with no typing content, so its shrinkage
          carries no obligation.
  \end{itemize}

  \item \textbf{Case Fire Successful.}

  \Rule*
    {
      \mathcal{S}  = (X_1, X_2,  V_s,  \mathcal{L},  \mathcal{C})  \\
      \mathcal{S}' = (X_1, X_2', V_s', \mathcal{L}'', \mathcal{C}') \\
      \mathcal{C}(\eventEffBar{\ell}{v_1}) = 0 \\
      \mathcal{L}(\eventEffBar{\ell}{v_1}) = \{(\Closure{x_i}{e_i}{V_i}, m_i, F_i)\}_{i=1}^{k} \\
      \valTyDeriv{v_2}{\Sigma_E(\eventEffBar{\ell}{v_1})} \\
      \overline{\semDerivExpr{V_i,\; x_i = v_2}{e_i}{()}{U_i}} \\
      \overline{\listenersafe{U_i}} \\
      \overline{\effQueueDeriv{F_i}{U_i}} \\
      \mathcal{L}' = \mathcal{L}\bigl[\eventEffBar{\ell}{v_1} \mapsto
        \{(\Closure{x_i}{e_i}{V_i}, m_i, F_i) \mid m_i = \always{}\}\bigr] \\
      \setterQueueDeriv
        {(X_2, V_s, \mathcal{L}', \mathcal{C})}
        {U_1, \ldots, U_k}
        {(X_2', V_s', \mathcal{L}'', \mathcal{C}')}
    }{
      \confPfTop
        {\mathcal{S}}
        {A}{\overline{x=v}}
        {rendered}{T_1}{\semPfTraceFire{(\eventEffBar{\ell}{v_1},\, v_2)},T_2}
      \\\\\to\\\\
      \confPfTop
        {\mathcal{S}'}
        {A}{\overline{x=v}}
        {rendered}{T_1,\semPfTraceFireSuc{(\eventEffBar{\ell}{v_1},\, v_2)}{(U_1, F_1), \ldots, (U_k, F_k)}}{T_2}
    }

  From the precondition:
  \begin{itemize}
    \item \justifies{\Delta, X_1}{T_1, \semPfTraceFire{(\eventEffBar{\ell}{v_1},\, v_2)}, T_2}
    \item \justifiesVarSet{\Delta, X_1}{T_1}{X_2}
    \item \valTyEnvDeriv{\Gamma_s}{V_s}, $\overline{\valTyDeriv{v}{\tau}}$,
          $\mathcal{L}$ well-formed
    \item External-scheduler payload well-typedness: the firing
          $(\eventEffBar{\ell}{v_1},\, v_2)$ this step consumes was introduced
          into the trace by a prior Receive Ext step that discharged
          $\valTyDeriv{v_2}{\Sigma_E(\eventEffBar{\ell}{v_1})}$, which is
          therefore available here and matches the rule's payload-typing
          premise.
  \end{itemize}

  We establish each of the obligations for the resulting configuration in
  turn.

  \medskip\noindent\textit{Step 1: each $U_i$ has an effect-queue witness.}
  Let $\tau = \Sigma_E(\eventEffBar{\ell}{v_1})$ be the payload type fixed by
  the signature at the firing label.  This step is delivered directly by the
  rule premise $\overline{\effQueueDeriv{F_i}{U_i}}$.  As noted in the rule,
  this premise is justified by
  Theorem~\ref{appendix:thm:preservation-expressions} applied to $e_i$ under
  the listener's argument-extended typing context $\Gamma_{V_i}, x_i{:}\tau$.
  Applying that theorem requires two facts:
  \begin{itemize}
    \item $\valTyDeriv{\Closure{x_i}{e_i}{V_i}}{\tyArrow{\tau}{\tyUnit}[F_i]}$
          --- the listener's closure has the expected arrow type.  This is
          exactly the $\mathcal{L}$-well-formedness bullet (from the
          precondition) instantiated at $\eventEffBar{\ell}{v_1}$ for each
          $(c_i, m_i, F_i) \in \mathcal{L}(\eventEffBar{\ell}{v_1})$.  The
          listener's arrow type is supplied by $\Sigma_E$ pinning down the
          carried type.
    \item $\valTyDeriv{v_2}{\tau}$ where $v_2$ is the payload actually
          delivered to the closure --- i.e.\ the value assigned to $x_i$ in
          the rule's premise
          $\semDerivExpr{V_i, x_i = v_2}{e_i}{()}{U_i}$.  This is exactly the
          rule's payload-typing premise
          $\valTyDeriv{v_2}{\Sigma_E(\eventEffBar{\ell}{v_1})}$, itself
          inherited from the precondition's external-scheduler well-typedness
          bullet (the firing was placed in the trace by an earlier Receive
          Ext step which discharged precisely this judgement).  Because the
          payload occupies its own syntactic slot in the trace record, the
          discriminant tuple $\overline{v_1}$ never enters the listener
          context, and the typing obligation $\valTyDeriv{v_2}{\tau}$ matches
          the rule's hypothesis directly.
  \end{itemize}

  \medskip\noindent\textit{Step 2: $\mathcal{L}'$ is well-formed.}  $\mathcal{L}'$
  is obtained from $\mathcal{L}$ by replacing the bucket at
  $\eventEffBar{\ell}{v_1}$ with the sub-set
  $\{(\Closure{x_i}{e_i}{V_i}, m_i, F_i) \mid m_i = \always{}\}$.  Every record
  retained in $\mathcal{L}'(\eventEffBar{\ell}{v_1})$ was already in
  $\mathcal{L}(\eventEffBar{\ell}{v_1})$ and so has a typing witness by the
  $\mathcal{L}$-well-formedness assumption.  All other buckets are unchanged.
  Hence $\mathcal{L}'$ is well-formed.

  \medskip\noindent\textit{Step 3: $V_s'$ well-typed and $\mathcal{L}''$ well-formed.}
  Apply the Flush Queue Preservation lemma (extended above) to the rule premise
  \setterQueueDeriv{(X_2, V_s, \mathcal{L}', \mathcal{C})}{U_1, \ldots, U_k}{(X_2', V_s', \mathcal{L}'', \mathcal{C}')}.
  The lemma's preconditions are \valTyEnvDeriv{\Gamma_s}{V_s} (from the
  precondition) and that $\mathcal{L}'$ is well-formed (from Step~2).  The
  lemma's \rulename{fq listen} case requires a closure-typing witness for each
  newly registered listener; this is supplied by the \emph{Listen typing
  inversion} lemma applied to the rule's premises
  $\overline{\effQueueDeriv{F_i}{U_i}}$, which establishes
  $\valTyDeriv{c}{\tyArrow{\Sigma_E(\eventEffBar{\ell}{v})}{\tyUnit}[F]}$ for
  every $\mathsf{listen}$ item in any $U_i$.  The lemma then yields
  \valTyEnvDeriv{\Gamma_s}{V_s'} and $\mathcal{L}''$ well-formed.

  \medskip\noindent\textit{Step 4: trace justification.}  The new combined trace
  is $T_1, \semPfTraceFireSuc{(\eventEffBar{\ell}{v_1},\, v_2)}{(U_1, F_1), \ldots, (U_k, F_k)}, T_2$,
  where each $F_j$ is the effect stored alongside the listener closure $c_j$ in
  $\mathcal{L}(\eventEffBar{\ell}{v_1})$ at the moment of firing, lifted from
  the rule premise $\mathcal{L}(\eventEffBar{\ell}{v_1}) = \{(c_j, m_j, F_j)\}_{j=1}^{k}$.
  We use the \textsc{ext}/$R$-block rule.  Invert the
  precondition's justification of
  $T_1, \semPfTraceFire{(\eventEffBar{\ell}{v_1},\, v_2)}, T_2$
  by identifying the \textsc{ext} prefix that introduced
  $\semPfTraceFire{(\eventEffBar{\ell}{v_1},\, v_2)}$ as one of its $R_i$ slots.
  As in Case Fire Cancel, this Ext prefix is unique: Receive Ext only fires
  when $T_1$ is the bare argument record, so $T_1$ contains at most one
  Ext block within a render cycle.
  In the new trace we replace that $R_i$ choice by
  $R_i' = \semPfTraceFireSuc{(\eventEffBar{\ell}{v_1},\, v_2)}{(U_1, F_1), \ldots, (U_k, F_k)}$,
  which is also a permitted $R_i$ choice.  The witness obligations imposed
  by the rule on $\semPfTraceFireSuc{}$ records are, for each $j$:
  \begin{itemize}
    \item $\effQueueDeriv{F_j}{U_j}$ --- discharged directly by the rule premise
          $\overline{\effQueueDeriv{F_j}{U_j}}$ of \rulename{Fire Successful}
          (the symbol $F_j$ on both sides refers to the same listener-stored
          effect), itself supported by the witnesses produced in Step~1.
    \item $\exists c.\;\valTyDeriv{c}{\tyArrow{\Sigma_E(\eventEffBar{\ell}{v_1})}{\tyUnit}[F_j]}$
          --- discharged by witnessing with the listener's own closure $c_j$.
          $\mathcal{L}$-well-formedness (from the precondition) instantiated at
          $\eventEffBar{\ell}{v_1}$ for the record $(c_j, m_j, F_j)$ yields
          $\valTyDeriv{c_j}{\tyArrow{\Sigma_E(\eventEffBar{\ell}{v_1})}{\tyUnit}[F_j]}$,
          tying $F_j$ to the event's declared payload type in $\Sigma_E$.
  \end{itemize}
  All other parts of the
  justification derivation, including the inner tail-justification, are unchanged.
  Hence \justifies{\Delta, X_1}{T_1, \semPfTraceFireSuc{(\eventEffBar{\ell}{v_1},\, v_2)}{(U_1, F_1), \ldots, (U_k, F_k)}, T_2}.

  \medskip\noindent\textit{Step 5: variable-set justification matches.}  The
  variable-set justification rule on \textsc{fire-suc} extends
  $X_2$ to $X_2 \cup \{y \mid \exists j.\; setter_y(f) \in U_j\}$.  We show
  this equals the $X_2'$ that the flush-queue derivation in Step~3 produced.

  By induction on the flush-queue derivation
  \setterQueueDeriv{(X_2, V_s, \mathcal{L}', \mathcal{C})}{U_1, \ldots, U_k}{(X_2', V_s', \mathcal{L}'', \mathcal{C}')},
  $X_2'$ is exactly $X_2$ extended by the set of $x$ such that $setter_x(f)$
  appears somewhere in $U_1, \ldots, U_k$: only \rulename{fq setter} extends
  the variable set, and it does so by adding the $x$ of its $setter_x(f)$
  item.  Cancel/remove/listen items leave the
  set unchanged.  Therefore
  $X_2' = X_2 \cup \{y \mid \exists j.\; setter_y(f) \in U_j\}$, which matches
  the result of the variable-set justification rule for \textsc{fire-suc} applied
  to the precondition \justifiesVarSet{\Delta, X_1}{T_1}{X_2}.  Hence
  \justifiesVarSet{\Delta, X_1}{T_3}{X_4} with $T_3 = T_1, \semPfTraceFireSuc{(\eventEffBar{\ell}{v_1},\, v_2)}{(U_1, F_1), \ldots, (U_k, F_k)}$
  and $X_4 = X_2'$.

  \medskip\noindent\textit{Step 6: argument values.}  Unchanged.

\end{itemize}
\end{proof}

\subsection{Declarations}

\begin{theorem}[Preservation for Declarations]
\label{appendix:thm:preservation-declarations}
Given
\begin{itemize}
  \item \semPfDerivDecl{\Gamma_s}{X}{V_s}{V}{c}{p}{v}{T}{V'}
  \item \tyDerivDecls{\Sigma}{\Gamma}{\Delta}{p}{\tau}
  \item \valTyEnvDeriv{\Gamma_s}{V_s}
  \item \valTyEnvDeriv{\Gamma}{V}
  \item $X \neq \emptyset$
\end{itemize}

Then
\valTyDeriv{v}{\tau}
and \justifies{\Delta, X}{T}
and \valTyEnvDeriv{\Gamma_s}{V'}
\end{theorem}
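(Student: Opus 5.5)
Proof by induction on the instrumented declaration derivation $\semPfDerivDecl{\Gamma_s}{X}{V_s}{V}{c}{p}{v}{T}{V'}$, carrying the typing derivation $\tyDerivDecls{\Sigma}{\Gamma}{\Delta}{p}{\tau}$ along. Each non-return case peels off one declaration $p_1$ through the \rulename{declarations} typing rule. That rule yields $\tyDerivDecl{\Sigma}{\Gamma}{\Delta}{p_1}{\Gamma_2}$ together with a typing of the tail under $\Gamma_2$. I then extend $V$ to match $\Gamma_2$ and apply the induction hypothesis to the tail. Before starting, I would establish an auxiliary expression-level preservation lemma (the expression counterpart cited in the configuration proof). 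It says: if $\tyDerivExpr{\Gamma}{e}{\tau}{F}$, $\valTyEnvDeriv{\Gamma}{V}$ and $\semDerivExpr{V}{e}{v}{U}$, then $\valTyDeriv{v}{\tau}$ and $\effQueueDeriv{F'}{U}$ for some $F' \le F$. Its proof is by induction on evaluation, using \rulename{tq-merge} for $*$, \rulename{tq-pick} for branches, \rulename{tq setter} for setter application, and \rulename{tq listen} for $\kw{bind}$/$\kw{once}$. A special case needed below is pure expressions: if $F=\cdot$, then $U=\cdot$.

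The cases go as follows.
\begin{enumerate}
\item \rulename{return}: the rule's premise gives $x=v\in V$. Inverting \rulename{ty return decl} gives $x:\tau\in\Gamma$ and $\causesderiv{\Delta}{x}{\stch return}$. From these, $\valTyDeriv{v}{\tau}$ follows, and \rulename{justify-return} justifies $\semPfTraceReturn{x}$. Since $V'=V$, $\valTyEnvDeriv{\Gamma_s}{V'}$ is read off the rule's premise, after weakening $\Gamma_s$ against $\Gamma$.
\item \rulename{var} (let): the expression lemma gives $\valTyDeriv{v_0}{\tau_0}$, so $\valTyEnvDeriv{\Gamma,x:\tau_0}{V,x=v_0}$. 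The induction hypothesis justifies the tail. The new record $\semPfTraceLet{x}{\df(e)}$ is justified by \rulename{justify-let}: each $y\in\df(e)$ appears among the dependencies of $x$'s $\Delta$-entry (from \rulename{ty let decl}), so \rulename{x causes y} applies.
\item \rulename{state rerender}: the premises $x:\tau\in\Gamma_s$ and $\valTyDeriv{v}{\tau}$ type the new binding. The setter binding is typed by \rulename{val-setter}. The trace is passed through unchanged.
\item \rulename{effect rerender, no changes}: immediate from the induction hypothesis.
\item \rulename{effect rerender, yes changes}: \rulename{ty on decl} gives $\tyDerivExpr{\Gamma}{e}{\tyUnit}{F}$ and $\causesderiv{\Delta}{x_i}{F}$ for every watched $x_i$. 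The expression lemma gives $\effQueueDeriv{F'}{U}$ with $F'\le F$. Then $\causesderiv{\Delta}{x_i}{F'}$ follows by transitivity of $\le$ inside \rulename{x causes F}. Finally, \rulename{justify-on} applies with $\overline{x}\cap X\subseteq X$.
\item \rulename{subcomp rerender}: apply the induction hypothesis twice. For the child, use the signature entry for $A$ and $\Delta_A$. For the parent tail, use $\Gamma$ extended by the renamed $\Gamma_A$ and $y:\tau_r$.
\end{enumerate}

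I expect the subcomponent case to be the main obstacle. The child's trace $T_y$ is justified against $\Delta_A$ and the restricted set $X'$, but the goal is justification against $\Delta$ and $X$. My plan is a renaming/weakening lemma: if $\justifies{\Delta_A,X'}{T}$ and $\Delta_A[y.v/v]\subseteq\Delta$, then $\justifies{\Delta,X}{T[y.v/v]}$. It would be proved by induction on justification, using the fact that \emph{causes} is monotone in $\Delta$ and that $X'$ is exactly the unprefixed ``$y.$''-part of $X$. For this I would treat prefixing as part of reading the trace, or else strengthen the theorem to quantify over the path $c$. The child's return record is handled by the premise $\causesderiv{\Delta}{y.return}{\stch y}$. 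The side condition $X\neq\emptyset$ cannot be pushed down to $X'$ as stated. I would therefore weaken the hypothesis in the induction, since no case actually uses non-emptiness. Finally, $\valTyEnvDeriv{\Gamma_s}{V'\cup V_y''}$ follows by combining the two induction hypotheses with the prefixed $\Gamma_{sF}$, which must be contained in $\Gamma_s$ via the typing of $A$.
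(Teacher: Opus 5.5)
On the cases the paper's proof actually covers (\rulename{return}, \rulename{state rerender}, and both \rulename{effect rerender} cases), your plan matches the paper's. You induct on the instrumented derivation, peel off one declaration with \rulename{declarations}, apply the IH to the tail, use Preservation for Expressions plus \rulename{justify-on} for a fired block, and use \rulename{justify-return} at the base. Your \rulename{var} case, which the paper skips, is correct, and you are right that $X\neq\emptyset$ is never used.

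One step fails as written. From $\causesderiv{\Delta}{x_i}{F}$ and $F'\le F$ you cannot in general infer $\causesderiv{\Delta}{x_i}{F'}$. Transitivity helps only when the causes derivation ends in \rulename{x causes F}. If it ends in \rulename{x causes y} (so $F=\stch{y}$) or in the $F_1*F_2$ rule, nothing in the rules relates a sub-effect of $F$ to $x_i$'s cascade. The weakening also buys nothing. Queue typing has no subsumption, and \rulename{tq-merge}, \rulename{tq-pick}, \rulename{tq setter} and the two listen rules reproduce the typed effect exactly. So your own induction already yields $\effQueueDeriv{F}{U}$ for the very $F$ of the typing derivation. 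That is the form the paper's expression lemma states, and the one \rulename{justify-on} consumes directly.

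The genuine gap is \rulename{subcomp rerender}. The paper's proof omits this case entirely (as it omits \rulename{var}), so there is no template to lean on, and your sketch does not go through against the rules as stated.
\begin{enumerate}
\item \emph{Value environment for the tail.} The instrumented rule evaluates the parent tail under $V, y=v_F$, whereas \rulename{ty subcomp decl} types it under $\Gamma,\Gamma_A\overline{[y.v/v]},y:\tau_r$. Since $\models_t$ is built cons-by-cons and is an exact correspondence, the IH's value-environment hypothesis for the tail is false whenever $\Gamma_A$ is nonempty. No strengthening lemma can discard the $y.$-bindings, because the typing rule exists precisely so the tail may use them. You would have to bind the prefixed $V_y'$ in the tail's working environment, or weaken $\models_t$ to an inclusion.
\item \emph{Shape of the trace.} The rule splices $T_y$ into the parent trace unrenamed and drops the child's $\semPfTraceReturn{z}$. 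So a lemma about $T_y[y.v/v]$ is not about the trace you must justify. There is also no child $\mathit{RET}$ record left for $\causesderiv{\Delta}{y.return}{\stch{y}}$ to handle. Since no justification rule covers the empty trace, you additionally need a routine splice lemma that replaces a justified trailing $\mathit{RET}$ by any justified tail.
\item \emph{Paths.} Your worry about the path is well founded. In the child, \rulename{state rerender} binds $\setter{x}$ to $setter_{c.y.x}$. \rulename{val-setter} types that value with latent effect $\Next{1r}{\stch{c.y.x}}$, while the local typing environment expects $\Next{1r}{\stch{x}}$. So the value-environment hypothesis needed to continue the induction fails at the child's first state declaration.
\end{enumerate}
Closing this case needs a reformulated statement, parameterized by $c$ and relating the names in $T$, $X$ and $\Delta$ through the prefix. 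It also needs a change to the semantics or to $\models_t$, not merely an auxiliary renaming lemma.
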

\begin{proof} By induction on the size of the semantic derivation

\begin{itemize}
  \item effect rerender, yes changes

  \Rule*
  {
    \semPfDerivDecl{\Gamma_s}{X}{V_s}{V}{c}{p}{v}{T}{V'} \\
    \semDerivExpr{V}{e}{()}{U} \\
    \overline x \cap X \neq \emptyset
  }
  {
    \semPfDerivDecl
      {\Gamma_s}{X}{V_s}{V}{c}
      {\synOn{\overline{x}}{e} p}
      {v}{\semPfTraceOn{\overline{x} \cap X}{U}, T}{V'}
  }

  \Rule*{
    \tyRuleOnDecl{\Sigma}{\Gamma}{\Delta}{e}{F}
    \\
    \tyDerivDecls{\Sigma}{\Gamma}{\Delta}{p}{\tau}
  }
  {
    \tyDerivDecls{\Sigma}{\Gamma}{\Delta}{\synOn{\overline x}{e} p}{\tau}
  }

  First, apply expressions preservation

  \begin{itemize}
    \item \semDerivExpr{V}{e}{()}{U}
    \item \tyDerivExpr{\Gamma}{e}{\tyUnit}{F}
    \item \valTyEnvDeriv{\Gamma}{V}
  \end{itemize}
  This gives us:
  (1) \valTyDeriv{()}{\tyUnit}
  and (2) \effQueueDeriv{F}{U}

  Now apply inductive hypothesis
  \begin{itemize}
    \item \semPfDerivDecl{\Gamma_s}{X}{V_s}{V}{c}{p}{v}{T}{V'}
    \item \tyDerivDecls{\Sigma}{\Gamma}{\Delta}{p}{\tau}
    \item \valTyEnvDeriv{\Gamma_s}{V_s}
    \item \valTyEnvDeriv{\Gamma}{V}
  \end{itemize}\
  which gives
  \begin{itemize}
    \item \valTyDeriv{v}{\tau}
    \item \justifies{\Delta, X}{T}
    \item \valTyEnvDeriv{\Gamma_s}{V'}
  \end{itemize}

  this means we have:

  \Rule*
  {
    \overline x \subseteq X \\
    \exists F. (
      \overline{\causesderiv{\Delta}{x_i}{F}}
      \text{ and }
      \effQueueDeriv{F}{U}
    )\\
    \justifies{\Delta, X}{T}
  }
  {\justifies{\Delta, X}{(\semPfTraceOn{\overline x \cap X}{U}, T)}}

  so we have proved the result for this case:
  \begin{itemize}
    \item \valTyDeriv{v}{\tau}
    \item \justifies{\Delta, X}{(\semPfTraceOn{\overline x \cap X}{U}, T)}
    \item \valTyEnvDeriv{\Gamma_s}{V'}
  \end{itemize}

  \item effect rerender, no changes

  \Rule*
    {
      \semPfDerivDecl
        {\Gamma}{X}{V_s}{V}{c}
        {p}
        {v}{T}{V'} \\
      \overline x \cap X = \emptyset
    }
    {
      \semPfDerivDecl
        {\Gamma}{X}{V_s}{V}{c}
        {\synOn{\overline{x}}{e}
          p}
        {v}{T}{V'}
    }

  \Rule*
  {
    \tyRuleOnDecl{\Sigma}{\Gamma}{\Delta}{e}{F}
    \\
    \tyDerivDecls{\Sigma}{\Gamma}{\Delta}{p}{\tau}
  }
  {
    \tyDerivDecls{\Sigma}{\Gamma}{\Delta}{\synOn{\overline x}{e} p}{\tau}
  }

  Apply inductive hypothesis
  \begin{itemize}
    \item \semPfDerivDecl{\Gamma_s}{X}{V_s}{V}{c}{p}{v}{T}{V'}
    \item \tyDerivDecls{\Sigma}{\Gamma}{\Delta}{p}{\tau}
    \item \valTyEnvDeriv{\Gamma_s}{V_s}
    \item \valTyEnvDeriv{\Gamma}{V}
  \end{itemize}\
  which gives
  \begin{itemize}
    \item \valTyDeriv{v}{\tau}
    \item \justifies{\Delta, X}{T}
    \item \valTyEnvDeriv{\Gamma_s}{V'}
  \end{itemize}

  so we have proven the result for this case.

  \item state rerender

  \Rule*
    {
      \semPfDerivDecl{\Gamma_s}{X}{V_s}{V, x=v, \setter{x}=setter_{c.x}}{c}{p}{v}{T}{V'} \\
      x : \tau \in \Gamma_s \\
      x = v \in V_s \\
      x = setter_{c.x} \in V_s \\
      \valTyDeriv{v}{\tau}
    }
    {
      \semPfDerivDecl
        {\Gamma}{X}{V_s}{V}{c}
        {\synState{x}{e}
          p}
        {v}{T}{V'}
    }

  \Rule*
    {
      \tyRuleStateDecl{\Sigma}{\Gamma}{\Delta}{x}{e}{\tau}{F}{p}
      \\
      \tyDerivDecls{\Sigma}{\Gamma}{\Delta}{p}{\tau}
    }
    {
      \tyDerivDecls{\Sigma}{\Gamma}{\Delta}{\synState{x}{e} p}{\tau}
    }

  Apply inversion on \valTyEnvDeriv{\Gamma_s}{V_s} to get \valTyDeriv{v}{\tau}

  Apply inductive hypothesis
  \begin{itemize}
    \item \semPfDerivDecl{\Gamma_s}{X}{V_s}{V, x=v, \setter{x}=setter_{c.x}}{c}{p}{v}{T}{V'}
    \item \tyDerivDecls{\Sigma}{\Gamma}{\Delta}{p}{\tau}
    \item \valTyEnvDeriv{\Gamma_s}{V_s}
    \item \valTyEnvDeriv{\Gamma, x:\tau, \setter{x}: (\tySetter[\tau]{x})}{V, x=v, \setter{x}=setter_{c.x}}
  \end{itemize}\
  which proves the result for this case:
  \begin{itemize}
    \item \valTyDeriv{v}{\tau}
    \item \justifies{\Delta, X}{T}
    \item \valTyEnvDeriv{\Gamma_s}{V'}
  \end{itemize}

  \item return

    \Rule*
      {
        x = v_s \in V_s \\
        x = v \in V \\
        \valTyEnvDeriv{\Gamma_s}{V}
      }
      {
        \semPfDerivDecl{\Gamma}{X}{V_s}{V}{c}{\synReturn{x}}{v}{\semPfTraceReturn{x}}{V}
      }

    \tyRuleReturnDecl{\Sigma}{\Gamma}{\Delta}{x}{\tau}

  Base case.

  \begin{itemize}
    \item \valTyDeriv{v}{\tau}
    \item \Rule*
            {
              \causesderiv{\Delta}{x}{\stch{\kw {return}}}
            }
            {
              \justifies{\Delta, X}{\semPfTraceReturn{x}}
            }
    \item \valTyEnvDeriv{\Gamma_s}{V}
  \end{itemize}

\end{itemize}
\end{proof}

\subsection{Expressions}

\begin{theorem}[Preservation for Expressions]
\label{appendix:thm:preservation-expressions}
Given
\begin{itemize}
  \item \semDerivExpr{V}{e}{v}{U}
  \item \tyDerivExpr{\Gamma}{e}{\tau}{F}
  \item \valTyEnvDeriv{\Gamma}{V}
\end{itemize}

Then
\valTyDeriv{v}{\tau}
and \effQueueDeriv{F}{U}
\end{theorem}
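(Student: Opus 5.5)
The proof goes by induction on the derivation of $\semDerivExpr{V}{e}{v}{U}$, inverting the typing derivation $\tyDerivExpr{\Gamma}{e}{\tau}{F}$ in each case. The typing rules are syntax-directed, so inversion is unambiguous. The one exception is application: both \rulename{function app} and \rulename{setter app} share the syntax $e_1\;e_2$, but they are distinguished by the value $e_1$ evaluates to, and typing inversion gives \rulename{app} in both cases. In each case two obligations are discharged: $\valTyDeriv{v}{\tau}$, by the value-typing rules of Figure~\ref{fig:app-val-typing}; and $\effQueueDeriv{F}{U}$, by assembling queue typings with \rulename{tq-merge} and \rulename{tq-pick}.

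\textbf{Base cases.} For \rulename{var}, inversion of $\valTyEnvDeriv{\Gamma}{V}$ gives the value typing, and the empty queue is typed at $\cdot$ by \rulename{tq nil}. Constants are handled by \rulename{val-unit}, \rulename{val-true}, \rulename{val-false} and \rulename{val-const}. For \rulename{closure create}, \rulename{val-closure} applies directly, with witness $\Gamma$ and the premise of \rulename{fn}. For \rulename{cancel} and \rulename{remove}, the queue is a single item typed by \rulename{tq cancel} or \rulename{tq remove}; the singleton is viewed as the item consed onto $\cdot$, which yields the effect $F*\cdot$. A small normalisation is needed throughout to identify $F*\cdot$ with $F$ and to reassociate $*$. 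I would state this once as an auxiliary fact: $\effQueueDeriv{F}{U}$ is invariant under the monoid laws of $(*,\cdot)$ on $F$. It can either be stated as an assumed effect equality (following the use of \rulename{se-eq}) or proved by a short induction on queue typings.

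\textbf{Compound cases.} \rulename{seq}, \rulename{prod}, \rulename{fst} and \rulename{snd} follow from the induction hypotheses together with \rulename{tq-merge}. For \rulename{branch1}, the induction hypotheses give $\effQueueDeriv{F_1}{U_1}$ and $\effQueueDeriv{F_2}{U_2}$. Then \rulename{tq-pick} gives $\effQueueDeriv{F_2+F_3}{U_2}$, and \rulename{tq-merge} gives $F_1*(F_2+F_3)$; \rulename{branch2} is symmetric. For \rulename{bind} and \rulename{once}, the induction hypothesis on $e$ gives $\effQueueDeriv{F_e}{U}$ and $\valTyDeriv{\Closure{x}{e'}{V'}}{\tyArrow{\tau}{\tyUnit}[F]}$, with $\tau=\Sigma_E(\eventEffBar{\ell}{v})$ taken from the typing premise. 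The trailing listen item is then typed by \rulename{tq listen always} or \rulename{tq listen once}, and \rulename{tq-merge} finishes the case. The annotated $F$ in the instrumented rule must coincide with the latent effect from the typing derivation; this is exactly what the instrumentation guarantees by reading the annotation off the typing.

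\textbf{Function application (main obstacle).} The induction hypothesis on $e_1$ gives $\valTyDeriv{\Closure{x}{e_3}{V'}}{\tyArrow{\tau_1}{\tau_2}[F]}$. Inverting \rulename{val-closure} yields some $\Gamma'$ with $\valTyEnvDeriv{\Gamma'}{V'}$ and $\tyDerivExpr{\Gamma',x{:}\tau_1}{e_3}{\tau_2}{F}$. The induction hypothesis on $e_2$ gives $\valTyDeriv{v_2}{\tau_1}$, so \rulename{valenv-cons} gives $\valTyEnvDeriv{\Gamma',x{:}\tau_1}{V',x=v_2}$. The induction hypothesis then applies to the third premise, which is a subderivation even though $e_3$ is not a subterm of $e$; this is why the induction must be on the evaluation derivation rather than on $e$. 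The three queue typings are combined as $F_1*F_2*F$.

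\textbf{Setter application.} The induction hypothesis gives $\valTyDeriv{setter_x}{\tyArrow{\tau_1}{\tau_2}[F]}$. Inversion of \rulename{val-setter} forces $\tau_2=\tyUnit$ and $F=\Next{1r}{\stch x}$, so $\valTyDeriv{()}{\tyUnit}$ holds, and \rulename{tq setter} types the final item at exactly $F$. The delicate point is that \rulename{val-setter} is stated with the inner arrow carrying effect $\cdot$ and the closure's type must match that; nothing more is needed here, since \rulename{tq setter} places no condition on $f$.
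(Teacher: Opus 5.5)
Your proof is correct and follows the paper's argument: induction on the evaluation derivation (needed exactly for the closure body in \rulename{function app}), inversion of \rulename{val-closure} and \rulename{val-setter}, and assembly of queue typings with \rulename{tq-merge}, \rulename{tq-pick} and the single-item rules. Like the paper, you identify the \kw{bind}/\kw{once} annotation with the typing's latent effect. Your monoid-normalisation lemma is harmless but unnecessary, since the paper types singleton queues directly by \rulename{tq cancel}/\rulename{tq remove} and chooses the \rulename{tq-merge} split to match the parse of $F_1 * F_2 * F$.
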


\begin{proof}
By induction on the size of \semDerivExpr{V}{e}{v}{U}.

\begin{itemize}
  \item \semRuleVarDefault{} ~~~ \tyRuleVarDefault{} ~~~
  \Rule*
    {\valTyDeriv{v}{\tau} \and \valTyEnvDeriv{\Gamma}{V}}
    {\valTyEnvDeriv{\Gamma, x:\tau}{V,x =v}}

  \;

  \valTyDeriv{v}{\tau} ~~~ \effQueueDeriv{\cdot}{\cdot}
  \item \semRuleClosureCreate{V}{x}{e} ~~~ \tyRuleFnDefault{} ~~~ \valTyEnvDeriv{\Gamma}{V}

  \Rule*
    {
      \valTyEnvDeriv{\Gamma}{V}
      \and \tyDerivExpr{\Gamma, x: \tau_1}{e}{\tau_2}{F}
    }
    {\valTyDeriv{\Closure{x}{e}{V}}{(\tyArrow{\tau_1}{\tau_2}[F])}}
  ~~~ \Rule*{ }{\effQueueDeriv{\cdot}{\cdot}}

  \;

  \item \semRuleFunctionApplication{
    V=V,
    eone=e_1,
    x=x,
    ethree=e_3,
    Vprime=V',
    Uone=U_1,
    etwo=e_2,
    vtwo=v_2,
    Utwo=U_2,
    vthree=v_3,
    Uthree=U_3}

  \tyRuleFnAppDefault{}

  \valTyEnvDeriv{\Gamma}{V}

  \;

  $\mathit{IH}_1$ ~~~ \valTyDeriv{\Closure{x}{e_3}{V'}}{(\tyArrow{\tau_1}{\tau_2}[F])} ~~~ $\effQueueDeriv{F_1}{U_1}$

  $\mathit{IH}_2$ ~~~ \valTyDeriv{v_2}{\tau_1} ~~~ $\effQueueDeriv{F_2}{U_2}$

  \;

  inversion on $\mathit{IH}_1$ ~~~ \Rule*
    {
      \exists \Gamma'. \valTyEnvDeriv{\Gamma'}{V'}
      \and \tyDerivExpr{\Gamma', x: \tau_1}{e_3}{\tau_2}{F}
    }
    {\valTyDeriv{\Closure{x}{e_3}{V'}}{(\tyArrow{\tau_1}{\tau_2}[F])}}

  \;

  apply IH to
  \semDerivExpr{V',x:v_2}{e_3}{v_3}{U_3} ~~~
  \tyDerivExpr{\Gamma', x: \tau_1}{e_3}{\tau_2}{F} ~~~
  \valTyEnvDeriv{\Gamma'}{V'}

  $\mathit{IH}_3$ ~~~ \valTyDeriv{v_3}{\tau_2} ~~~ $\effQueueDeriv{F_3}{U_3}$

  \;

  \valTyDeriv{v_3}{\tau_2} ~~~
  \Rule*
    {
      \effQueueDeriv{F_1}{U_1} \and
      \Rule*
        {
          \effQueueDeriv{F_2}{U_2}\and
          \effQueueDeriv{F_3}{U_3}
        }
        {\effQueueDeriv{F_2*F_3}{U_2,U_3}}
    }
    {\effQueueDeriv{F_1 * F_2 * F_3}{U_1,U_2,U_3}}

  \breathe{}

  \item \semRuleSetterCallDefault{}

  \Rule*{
    \tyDerivExpr{\Gamma}{e_1}{(\tySetter[\tau]{x})}{F_1}\\
    \tyDerivExpr{\Gamma}{e_2}{(\tyArrow{\tau}{\tau}[\cdot])}{F_2}
  }{
    \tyDerivExpr{\Gamma}{e_1\;e_2}{\tau_2}{F_1 * F_2 * \Next{1r}{\stch{x}}}
  }

  \valTyEnvDeriv{\Gamma}{V}

  \;

  $\mathit{IH}_1$ ~~~
    \valTyDeriv{setter_x}{\tySetter[\tau]{x}} ~~~
    $\effQueueDeriv{F_1}{U_1}$

  \;

  $\mathit{IH}_2$ ~~~
    \Rule*
      {\exists \Gamma'. \valTyEnvDeriv{\Gamma'}{V'} \and \tyDerivExpr{\Gamma', x:\tau}{e_3}{\tau}{\cdot}}
      {\valTyDeriv{\Closure{x}{e_3}{V'}}{\tyArrow{\tau}{\tau}[\cdot]}} ~~~
    $\effQueueDeriv{F_2}{U_2}$

  \;

  \valTyDeriv{()}{\tyUnit}~~~
  \Rule*
    {
      \effQueueDeriv{F_1}{U_1} \and
      \Rule*
        {
          \effQueueDeriv{F_2}{U_2}\and
          \effQueueDeriv{\Next{1r}{\stch{x}}}{setter_x(\Closure{x}{e_3}{V'})}
        }
        {\effQueueDeriv{F_2*\Next{1r}{\stch{x}}}{U_2,setter_x(\Closure{x}{e_3}{V'})}}
    }
    {\effQueueDeriv{F_1 * F_2 * \Next{1r}{\stch{x}}}{U_1,U_2,setter_x(\Closure{x}{e_3}{V'})}}

  \breathe{}

  \item \semRuleSeqDefault{}

  \tyRuleSeqDefault{}

  \valTyEnvDeriv{\Gamma}{V}

  \;

  $\mathit{IH}_1$ ~~~
    \valTyDeriv{v_1}{\tau_1} ~~~
    $\effQueueDeriv{F_1}{U_1}$

  $\mathit{IH}_2$ ~~~
    \valTyDeriv{v_2}{\tau_2} ~~~
    $\effQueueDeriv{F_2}{U_2}$

  \;

  \valTyDeriv{v_2}{\tau_2} ~~~
  \Rule*
    {
      \effQueueDeriv{F_1}{U_1}\and
      \effQueueDeriv{F_2}{U_2}
    }
    {\effQueueDeriv{F_1*F_2}{U_1,U_2}}

  \breathe{}

  \item \Rule{branch1}
    {
      \semDerivExpr{V}{e_1}{true}{U_1} \\
      \semDerivExpr{V}{e_2}{v_2}{U_2} \\
    }
    {
      \semDerivExpr{V}{\synIf{e_1}{e_2}{e_3}}{v_2}{U_1, U_2}
    }
  WLOG choose branch 1

  \tyRuleBranchDefault

  \;

  $\mathit{IH}_1$ ~~~
    \valTyDeriv{true}{\tyBool} ~~~
    $\effQueueDeriv{F_1}{U_1}$

  $\mathit{IH}_2$ ~~~
    \valTyDeriv{v_2}{\tau} ~~~
    $\effQueueDeriv{F_2}{U_2}$

  \;

  \valTyDeriv{v_2}{\tau} ~~~
  \Rule*
    {
      \effQueueDeriv{F_1}{U_1}\and
      \Rule*
        {
          \effQueueDeriv{F_2}{U_2}
        }
        {\effQueueDeriv{F_2 + F_3}{U_2}}
    }
    {\effQueueDeriv{F_1*(F_2 + F_3)}{U_1,U_2}}

  \;

  \item \Rule{prod}{
    \semDerivExpr{V}{e_1}{v_1}{U_1} \and
    \semDerivExpr{V}{e_2}{v_2}{U_2} \and
    }{
    \semDerivExpr{V}{(e_1, e_2)}{(v_1, v_2)}{U_1, U_2} \and
  }
  ~~~ \tyRuleProdDefault{}

  \valTyEnvDeriv{\Gamma}{V}

  \;

  $\mathit{IH}_1$ ~~~
    \valTyDeriv{v_1}{\tau_1} ~~~
    \effQueueDeriv{F_1}{U_1}

  $\mathit{IH}_2$ ~~~
    \valTyDeriv{v_2}{\tau_2} ~~~
    \effQueueDeriv{F_2}{U_2}

  \;

  \Rule*
    {
      \valTyDeriv{v_1}{\tau_1}\and
      \valTyDeriv{v_2}{\tau_2}
    }{
      \valTyDeriv{(v_1,v_2)}{\tau_1 \times \tau_2}
    } ~~~
  \Rule*
    {
      \effQueueDeriv{F_1}{U_1}\and
      \effQueueDeriv{F_2}{U_2}
    }
    {\effQueueDeriv{F_1*F_2}{U_1,U_2}}

  \item \Rule{fst}{
      \semDerivExpr{V}{e}{(v_1, v_2)}{U}
    }{
      \semDerivExpr{V}{\kw{fst}\;e}{v_1}{U}
    }
  ~~~ \tyRuleFstDefault{}
  ~~~ \valTyEnvDeriv{\Gamma}{V}

  WLOG \kw{fst} case

  $\mathit{IH}$ ~~~
    \Rule*
      {{\valTyDeriv{v_1}{\tau_1} \and \valTyDeriv{v_2}{\tau_2}}}
      {\valTyDeriv{(v_1, v_2)}{\tau_1 \times \tau_2}} ~~~
    \effQueueDeriv{F}{U}

  \valTyDeriv{v_1}{\tau_1} ~~~ \effQueueDeriv{F}{U}

  \breathe{}

  \item \Rule{bind}{
      \semDerivExpr{V}{e}{\Closure{x}{e'}{V'}}{U} \\
      \valTyDeriv{\Closure{x}{e'}{V'}}{\tyArrow{\tau}{\tyUnit}[F]}
    }{
      \semDerivExpr{V}{\kw{bind}\;\eventEffBar{\ell}{v_1}\;e}{()}{U,\,
      \mathsf{listen}(\eventEffBar{\ell}{v_1},\,
      \Closure{x}{e'}{V'},\, \always{},\, F)}
    }

  \Rule{ty bind}{
      \tyDerivExpr{\Gamma}{e}{(\tyArrow{\tau}{\tyUnit}[F])}{F_e}\\
      \tau = \Sigma_E(\eventEffBar{\ell}{v_1})
    }{
      \tyDerivExpr
        {\Gamma}
        {\kw{bind}\;\eventEffBar{\ell}{v_1}\;e}
        {\tyUnit}
        {F_e * \always{\eventEffBar{\ell}{v_1}}(F)}
    }

  \valTyEnvDeriv{\Gamma}{V}

  \;

  apply IH to
  \semDerivExpr{V}{e}{\Closure{x}{e'}{V'}}{U} ~~~
  \tyDerivExpr{\Gamma}{e}{(\tyArrow{\tau}{\tyUnit}[F])}{F_e} ~~~
  \valTyEnvDeriv{\Gamma}{V}

  $\mathit{IH}$ ~~~ \valTyDeriv{\Closure{x}{e'}{V'}}{\tyArrow{\tau}{\tyUnit}[F]} ~~~ $\effQueueDeriv{F_e}{U}$

  \;

  \valTyDeriv{()}{\tyUnit} ~~~
  \Rule*
    {
      \effQueueDeriv{F_e}{U} \and
      \Rule*
        {
          \valTyDeriv{\Closure{x}{e'}{V'}}{\tyArrow{\tau}{\tyUnit}[F]} \and
          \tau = \Sigma_E(\eventEffBar{\ell}{v_1})
        }
        {\effQueueDeriv
          {\always{\eventEffBar{\ell}{v_1}}(F)}
          {\mathsf{listen}(\eventEffBar{\ell}{v_1},\, \Closure{x}{e'}{V'},\, \always{},\, F)}}
    }
    {\effQueueDeriv{F_e * \always{\eventEffBar{\ell}{v_1}}(F)}{U,\, \mathsf{listen}(\eventEffBar{\ell}{v_1},\, \Closure{x}{e'}{V'},\, \always{},\, F)}}

  Note that the closure typing $\valTyDeriv{\Closure{x}{e'}{V'}}{\tyArrow{\tau}{\tyUnit}[F]}$
  used to build the \rulename{tq listen always} premise is exactly the value-typing
  conclusion delivered by the IH, so the type-level witness on the new listen
  item matches the latent effect $F$ recorded by the instrumented \rulename{bind}
  rule.

  \breathe{}

  \item \Rule{once}{
      \semDerivExpr{V}{e}{\Closure{x}{e'}{V'}}{U} \\
      \valTyDeriv{\Closure{x}{e'}{V'}}{\tyArrow{\tau}{\tyUnit}[F]}
    }{
      \semDerivExpr{V}{\kw{once}\;\eventEffBar{\ell}{v_1}\;e}{()}{U,\,
      \mathsf{listen}(\eventEffBar{\ell}{v_1},\,
      \Closure{x}{e'}{V'},\, \eventually{},\, F)}
    }

  \Rule{ty once}{
      \tyDerivExpr{\Gamma}{e}{(\tyArrow{\tau}{\tyUnit}[F])}{F_e}\\
      \tau = \Sigma_E(\eventEffBar{\ell}{v_1})
    }{
      \tyDerivExpr
        {\Gamma}
        {\kw{once}\;\eventEffBar{\ell}{v_1}\;e}
        {\tyUnit}
        {F_e * \eventually{\eventEffBar{\ell}{v_1}}(F)}
    }

  \valTyEnvDeriv{\Gamma}{V}

  \;

  Identical to the \kw{bind} case modulo the mode tag $\eventually{}$ in place of
  $\always{}$: apply IH to the closure subderivation to obtain
  \valTyDeriv{\Closure{x}{e'}{V'}}{\tyArrow{\tau}{\tyUnit}[F]} and
  $\effQueueDeriv{F_e}{U}$, then close with \rulename{tq listen once} in place of
  \rulename{tq listen always}.

  \;

  \valTyDeriv{()}{\tyUnit} ~~~
  \Rule*
    {
      \effQueueDeriv{F_e}{U} \and
      \Rule*
        {
          \valTyDeriv{\Closure{x}{e'}{V'}}{\tyArrow{\tau}{\tyUnit}[F]} \and
          \tau = \Sigma_E(\eventEffBar{\ell}{v_1})
        }
        {\effQueueDeriv
          {\eventually{\eventEffBar{\ell}{v_1}}(F)}
          {\mathsf{listen}(\eventEffBar{\ell}{v_1},\, \Closure{x}{e'}{V'},\, \eventually{},\, F)}}
    }
    {\effQueueDeriv{F_e * \eventually{\eventEffBar{\ell}{v_1}}(F)}{U,\, \mathsf{listen}(\eventEffBar{\ell}{v_1},\, \Closure{x}{e'}{V'},\, \eventually{},\, F)}}

  \breathe{}

  \item \Rule{cancel}{
        \\
      }{
        \semDerivExpr{V}{\kw{cancel}\;\eventEffBar{\ell}{v}}{()}{\mathsf{cancel}(\eventEffBar{\ell}{v})}
      }

  \Rule{ty cancel}{
        \\
      }{
        \tyDerivExpr
          {\Gamma}
          {\kw{cancel}\;\eventEffBar{\ell}{v}}
          {\tyUnit}
          {\cancelEv{\eventEffBar{\ell}{v}}}
      }

  \valTyEnvDeriv{\Gamma}{V}

  \;

  No subderivations; no IH to apply. The semantic rule produces $()$ with the
  singleton queue $\mathsf{cancel}(\eventEffBar{\ell}{v})$, and the typing rule
  assigns it the effect $\cancelEv{\eventEffBar{\ell}{v}}$. The matching queue
  typing rule \rulename{tq cancel} relates the two directly.

  \;

  \valTyDeriv{()}{\tyUnit} ~~~
  \Rule*
    { }
    {\effQueueDeriv{\cancelEv{\eventEffBar{\ell}{v}}}{\mathsf{cancel}(\eventEffBar{\ell}{v})}}

  \breathe{}

  \item \Rule{remove}{
        \\
      }{
        \semDerivExpr{V}{\kw{remove}\;\eventEffBar{\ell}{v}}{()}{\mathsf{remove}(\eventEffBar{\ell}{v})}
      }

  \Rule{ty remove}{
        \\
      }{
        \tyDerivExpr
          {\Gamma}
          {\kw{remove}\;\eventEffBar{\ell}{v}}
          {\tyUnit}
          {\remove{\eventEffBar{\ell}{v}}}
      }

  \valTyEnvDeriv{\Gamma}{V}

  \;

  Symmetric to the \kw{cancel} case: the semantic rule produces $()$ with the
  singleton queue $\mathsf{remove}(\eventEffBar{\ell}{v})$, the typing rule
  assigns effect $\remove{\eventEffBar{\ell}{v}}$, and \rulename{tq remove}
  relates them directly.

  \;

  \valTyDeriv{()}{\tyUnit} ~~~
  \Rule*
    { }
    {\effQueueDeriv{\remove{\eventEffBar{\ell}{v}}}{\mathsf{remove}(\eventEffBar{\ell}{v})}}

  \breathe{}

  \item trivial cases:

  \item \semRuleConstantDefault{} ~~~ \tyRuleBoolDefault{} ~~~ \valTyEnvDeriv{\Gamma}{V}

  \item \semRuleConstantDefault{} ~~~ \tyRuleBaseDefault{} ~~~ \valTyEnvDeriv{\Gamma}{V}

  \item \semRuleConstant{V}{()} ~~~
  \Rule*{
    \and
  }{
    \tyDerivExpr{\Gamma}{()}{\tyUnit}{\cdot}
  } ~~~ \valTyEnvDeriv{\Gamma}{V}

\end{itemize}
\end{proof}

\end{document}